\newcolumntype{H}{>{\setbox0=\hbox\bgroup}c<{\egroup}@{}} 
\renewcommand\paragraph{\@startsection{paragraph}{4}{\z@}%
            {-0.5ex\@plus -1ex \@minus -.25ex}%
            {0.25ex \@plus .25ex}%
            {\normalfont\normalsize}}
\def\url#1{\expandafter\string\csname #1\endcsname}
\newtheorem{corollary}{Corollary}
\newtheorem{definition}{Definition}
\newtheorem{example}{Example}
\newtheorem{lemma}{Lemma}
\newtheorem{proposition}{Proposition}
\newtheorem{remark}{Remark}
\numberwithin{equation}{section}
\begin{document}
\title{The lateral transhipment  problem with a-priori routes, and a lot sizing application}
\author{\textbf{Martin~Romauch}\\Department of Business Administration\\University of Vienna, Austria\\martin.romauch@univie.ac.at\\ \\\textbf{Richard~F.~Hartl}\\Department of Business Administration\\University of Vienna, Austria\\richard.hartl@univie.ac.at\\ \\\textbf{Thibaut~Vidal}\\Departemento de Informática\\Pontifícia Universidade Católica do Rio de Janeiro, Brazil\\thibaut.vidal@mit.edu}
\date{Working Paper - Dec 2015}

\maketitle

\begin{abstract}
We propose exact solution approaches for a lateral transhipment problem which, given a pre-specified sequence of customers, seeks an optimal inventory redistribution plan considering travel costs and profits dependent on inventory levels. Trip-duration and vehicle-capacity constraints are also imposed.
The same problem arises in some lot sizing applications, in the presence of setup costs and equipment re-qualifications.

We introduce a pure dynamic programming approach and a branch-and-bound framework that combines dynamic programming with Lagrangian relaxation. Computational experiments are conducted to determine the most suitable solution approach for different instances, depending on their size, vehicle capacities and duration constraints.  
The branch-and-bound approach, in particular, solves problems with up to 50 delivery locations in less than ten seconds on a modern computer.
\end{abstract}

\maketitle

\section{Introduction and related work}

Optimization problems with combined inventory and routing decisions arise in a wide variety of contexts.
In inventory routing problems \cite{coelho2012dynamic}, for example, inventory and routing costs are minimized on a planning horizon. Each route occurs on a specific time period, originates from a central depot and visits some customers to replenish their inventories. The adequate selection of a subset of customers for each period, as in the team orienteering and prize-collecting problems \cite{chao96,Vidal2014}, is thus an essential problem feature.

Other related problems have been defined on a single planning period, such as the TSP with pickups and deliveries \cite{hernandez2004branch}, the balancing problems for static bike sharing systems \cite{vogel2011strategic,rainer2013balancing} and also the lateral transhipment problem for {a single route} (SRLTP), \cite{DBLP:conf/eurocast/HartlR13,SRLTP15}. This latter problem aims to distribute inventory on a network $(V,\mathcal{A})$ via pickups and deliveries using {one} vehicle, to minimize a non-linear objective.
In bike sharing systems \cite{raineretal13}, a target level is defined and the objective is to minimize the corresponding total deviation (a piecewise-linear convex function). Some MIP formulations of this problem are introduced in \cite{raviv2013static}. The objective includes expected shortage costs and travel costs, similar to the TSP with profits~\cite{feillet05}. Dynamic route interactions like hand-overs (intermediate storage) and multiple visits are also considered.

In the context of the SRLTP, both travel costs and profits must be considered. Each location $i \in V$ is characterized by a piecewise-linear \emph{profit function} $F_i$. The problem is to find inventory changes $y_i$, such that the revenue minus the costs for the pickup-and-delivery routes is maximized. Suppose that $I_0^{i}$ is the initial inventory and $I_{min}^{i}$, $I_{max}^{i}$ are bounds on the inventory level, then the total revenue can be expressed as:
\begin{equation}
\sum_{i \in I} F_i(I_0^{i} - y_i) \quad \text{for }  I_0^{i} - y_i \in [I_{min}^{i},I_{max}^{i}]. \label{formula:F}
\end{equation}
Each vehicle, subject to a load limit $Q_{max}$ and a tour length limit $T_{max}$, visits a subset $S \subset V$ of customers to redistribute their inventories. The travel cost and duration on an arc $(i,j)$ is notated as $c_{ij}$ and $t_{ij}$.

This work aims to contribute towards better addressing the SRLTP, through a dedicated study of one essential subproblem: the a-priori route evaluation problem for lateral transhipment (ARELTP). Indeed, most modern heuristic techniques for the SRLTP consider a large set of vehicle routes during the search and aim to evaluate their profit. The goal of this paper is to find an efficient algorithm, which, for a given route (i.e., a sequence of visits), returns the optimal pickup or delivery quantities at each location in the presence of piecewise-linear profit functions, capacity and distance constraints. We also consider the ability to \emph{shortcut} a customer if this is profitable.

The problem is also very relevant on its own, as a case of routing optimization with a-priori routes \cite{bertsimas1993further}. In practical routing applications, retaining some fixed route fragments
 can lead to a better operational and computational tractability for companies, as well as efficiency gains through driver learning.
The corresponding subproblem is called the evaluation problem for a-priori routes, and efficient solution methods are needed to quickly react to changing environments. We also show that the same model encompasses several lot sizing applications with re-qualification costs.

The contributions of this paper are the following.
We first provide a formal definition of the lateral transhipment problem with a-priori routes.
In contrast with previous articles on the topic, general piecewise linear cost functions are considered. This enables to model economies of scale and expected stochastic demands.
To address this problem, we introduce a pure dynamic programming approach and a branch-and-bound framework that combines dynamic programming with Lagrangian relaxation. The methods are also adapted for problem settings where the triangle inequality is not satisfied, hence allowing to deal with lot sizing models where the triangle inequality (for setup costs and times) is often violated.
Finally, extensive computational experiments are conducted to determine the most suitable solution approach for different instances, depending on their size and the magnitude of some of their key parameters, e.g., vehicle capacities. The resulting algorithms finds optimal solutions for small- and medium-scale instances in a fraction of seconds.

The paper is organized as follows, in Section~\ref{sec:probdesc}, the ARELTP is formally defined and its computational complexity is analyzed. The lot sizing application with re-qualification costs (LSwRC) is also presented in Section~\ref{sec:LSwRC} and alternative mixed integer linear programming models for the problems are discussed in Section~\ref{sec:MIPsolverformualtion}. 
The proposed dynamic programming and branch-and-bound approaches are described in Sections~\ref{sec:DP} and \ref{sec:BBLR}. To impact of the absence of triangle inequalities is discussed in Section~\ref{sec::triangle}. Finally,  Section~\ref{sec:comp} describes our computational experiments and Section~\ref{sec:conclu} concludes.

\section{Problem statement} 
\label{sec:probdesc}

This section introduces a mathematical formulation of the ARELTP and discusses its computational complexity. Let $\tau = (1,\ldots,n)$ be an a-priori route, i.e., a sequence of $n$ locations. 
Now, suppose that some of the locations in $\tau$ are skipped in the optimal subtour $\tau^*=(\tau_{i_1},\tau_{i_2},\ldots,\tau_{i_m})$ and the optimal inventory changes are $y_i^*$, then the total profit minus routing costs is equal to:
$$ \sum_{i=1}^n F_i(I_0^{i} + y_i^*) - \sum_{l=1}^{m-1} c_{\tau_{i_l},\tau_{i_{l+1}}}.$$ 
If the total optimal revenue is larger than $\sum_{i=1}^n F_i(I_0^{i})$, then the lateral transhipment on route $\tau$ route is profitable, otherwise not. 

To simplify the notation, we define the cost change $f_i$ at a location $i$ as
$f_i(y_i) = F_i(I_0^{i}) - F_i(I_0^{i} - y_i)$, as a function of the inventory change $y_i$, which should be in the interval $[a_i,b_i]$, where $a_i = I_0^{i}-I_{max}^{i}$ and $b_i = I_0^{i}-I_{min}^{i}$. Note that $a_i \leq 0 \leq b_i$.
The ARELTP can then be formulated as:
\begin{align}
\min \hspace*{0.2cm} \psi(x,y) = &\sum_{1 \leq i < j \leq n}  c_{ij} x_{ij} + \sum_{1 \leq i \leq n} f_i(y_i)   \label{ARELTP:obj} \\
s.t.\quad & \sum_{j<i} x_{ji} =  \sum_{j>i} x_{ij} &\quad 1 < i < n  \label{ARELTP:FB}\\
& \sum_{j>1} x_{1j} = 1  \label{ARELTP:source}\\
& \sum_{j<n} x_{jn} = 1  \label{ARELTP:sink}\\ 
& a_i \sum_{j>i} x_{ij}  \leq y_i \leq b_i \sum_{j>i} x_{ij}  & \quad 1 \leq i \leq n  \label{ARELTP:visit} \\
& 0  \leq  \sum_{j \leq i} y_j \leq Q_{max} & \quad 1 \leq i \leq n \label{ARELTP:Qmax}  \\
& x_{ij} \in \{0,1\} & \quad 1 \leq i<j \leq n \label{ARELTP:xbin} \\ 
& y_i \in \mathbb{R} & \quad 1 \leq i \leq n   \label{ARELTP:y}
\end{align}

The objective (\ref{ARELTP:obj}) is equivalent to maximizing the total profit minus the distance cost. 
The arc selection variables $x_{ij}$ are defined for $i<j$, therefore it is sufficient to formulate the flow balance (\ref{ARELTP:FB}) and the constraints for the source (\ref{ARELTP:source}) and the sink (\ref{ARELTP:sink}) to define a subsequence of $\tau$. 
According to (\ref{ARELTP:visit}), changing the inventory level at a location $i$ ($y_i \neq 0$) implies that the location must also be visited.
The load of the truck when leaving $i$ is $\sum_{j \leq i } y_j$, therefore (\ref{ARELTP:Qmax}) enforces that $Q_{max}$ is the corresponding upper limit. The MIP (\ref{ARELTP:obj}-\ref{ARELTP:y}) is called the ARELTP without duration limit constraint.

We will also consider a variant of the problem (\ref{ARELTP:obj}--\ref{ARELTP:y}), called the ARELTP with duration limit, with the following additional constraint:
\begin{align}
& \sum_{1 \leq i < j \leq n}  t_{ij} x_{ij} \leq T_{max}.  \label{ARELTP:Tmax} 
\end{align}

In previous literature, a simplified version of the ARELTP was presented in \mbox{\cite[][Section~3.3]{DBLP:conf/eurocast/HartlR13}}, but subject to three simplifications:
\begin{itemize}[nosep]
 \item $c_{ij}$ is not considered,
 \item $f_i$ is linear,
 \item $T_{max}$ is not considered.
\end{itemize}

By transforming the problem to a minimum cost flow, it is possible to solve this special case in polynomial time. However, revoking any one these simplifications leads to a NP-hard problem. As demonstrated in Section \ref{sec:complexity} of the appendix, the ARELTP is NP-hard if either $c_{ij}$ is considered, or $f_i$ is piecewise linear, or if $T_{max}$ is considered.

In the following, two exact solution approaches for the ARELTP (\ref{ARELTP:obj}-\ref{ARELTP:Tmax}) will be proposed. The special case without duration constraint (\ref{ARELTP:obj}-\ref{ARELTP:y}) will also be discussed separately.
The ARELTP is an interesting an difficult problem on its own. In particular, a lot sizing problem with re-qualification costs (LSwRC) is described in the next section, and a transformation that establishes the equivalence of the problems is provided.

\section{Lot sizing with requalification costs} 
\label{sec:LSwRC}

Various algorithms are known to deal with lot sizing models considering product-dependent setup times \cite{drexl1997lot, salomon1997solving}. In contrast, lot sizing problems with idle time-dependent setups have been much less studied. Such models arise in food and pharmaceutical industry, where the qualification of processes and tools have a given duration or expiration (see \cite[][5.4.j]{Bedson96} and \cite{bennett2003pharmaceutical}). 
In these applications, a frequent use of a tool may stretch the duration of its qualification. 
This characteristic also appears in semiconductor industry, where tools need to be qualified for each process and product. First-time qualifications are usually time consuming and expensive. For some tools (e.g. steppers in lithography), this qualification for a product expires if it is not running on this tool for a longer period and expensive requalifications become necessary. 

We will formulate a Lot Sizing problem with Re-qualification Costs (LSwRC) which considers these aspects, and the equivalence with the ARELTP will be established. 
Consider the decision variable $y_i$ for the production quantity in period $i$. If production takes place in period $i$ ($y_i>0$) the lower bound $a_i$ and the upper bound $b_i$ need to be respected.

The production cost for each period $i$ is represented by a piecewise linear functions $f_i(y_i)$, therefore the total production cost is $\sum_i f_i(y_i)$. The inventory level $q_i$ at the end of period~$i$ and the unit holding cost $h_i$ for holding one item in period $i$ for one period defines the inventory holding cost $\sum_i h_i q_i$. The inventory level is zero in the beginning ($q_0 = 0$) and the balance equation $q_{i} = q_{i-1} + (y_i- d_i)$ states that the inventory level $q_i$ at the end of period $i$ is non-negative. In other words, the demand is satisfied at all times, i.e. $q_{i} = \sum_{j \leq i} (y_{j} - d_{j}) \geq 0$. We also introduce setup variables to model time-dependent setup costs: $x_{ij} \in \{0,1\}$ being valued to one if and only if $y_i > 0$ and $y_j > 0$. The corresponding setup cost $c_{ij}$ covers the qualification costs and may be larger for long idle times, when considering expensive re-qualifications and setups.  

Finally, associating a resource consumption $t_{ij}$ if $x_{ij}=1$, and setting a restriction on the total setup-related expenses $T_{max}$ leads to the same constraint as Equation (\ref{ARELTP:Tmax}). This restriction can be used to limit the number of production periods, by setting $t_{ij}=1$ for all $(i,j)$, or to limit the total setup cost by using $t_{ij} = c_{ij}$. The complete model can be stated~as:
\begin{align}
 \min & \sum_{1 \leq i < j \leq n}  c_{ij}x_{ij}  + \sum_{1 \leq i \leq n} f_i(y_i) + \sum_{1 \leq i \leq n} h_i q_i  \label{LSwRC:obj} \\
s.t.\quad & q_i = \sum_{j \leq i} (y_j-d_j) & \quad 1 \leq i \leq n \label{LSwRC:inventorylevel} \\ 
& a_i \sum_{j>i} x_{ij}  \leq y_i  \leq b_i \sum_{j>i} x_{ij}  & \quad 1 \leq i \leq n  \label{LSwRC:feasProd} \\
& 0  \leq  q_i \leq Q_{max} & \quad 1 \leq i \leq n \label{LSwRC:Qmax}  \\
& x_{ij} \in \{0,1\} & \quad 1 \leq i<j \leq n \label{LSwRC:bin} \\
& y_i \in \mathbb{R} & \quad 1 \leq i \leq n \label{LSwRC:cont} \\
& \text{including} \quad (\ref{ARELTP:FB}-\ref{ARELTP:sink})\text{, }(\ref{ARELTP:Tmax})  \nonumber 
\end{align}

To evaluate inventory costs as a function of $y_i$, the following conversion~is~used:
\begin{align}
\sum_{i=1}^n h_i q_i = \sum_{i=1}^{n} H_{i,n} (y_i-d_i) \quad \text{where} \quad H_{i,n} = \sum_{j=i}^n h_j.  \label{LSwRC2:obj}
\end{align}

The following substitutions enable then to reduce the LSwRC to the ARELTP:
\begin{itemize}[nosep]
\item $y_i \gets y_i' + d_i$, and 
\item $f_i(y_i) + H_{i,n} (y_i-d_i) \gets f_i'(y_i')$.
\end{itemize}

Too facilitate the exposition, we will present the different solution approaches based on the ARELTP terminology. Our computational experiments will also cover some instances of the LSwRC.
In the following, we will discuss different exact solution approaches. Section~\ref{sec:MIPsolverformualtion} first provides a quick discussion about the direct resolution of this model with standard MIP solvers. Alternative linearizations are discussed, since they have a significant impact on performance.

\section{MIP formulation for piecewise linear costs} 
\label{sec:MIPsolverformualtion}

In order to solve the problem with a state of the art MIP solver, a suitable formulation for the problem (\ref{ARELTP:obj}-\ref{ARELTP:Tmax}) is provided in this section. The formulation is based on special ordered sets (SOS2) which an established approach for linear problems (\cite{vielma2008nonconvex},\cite{keha2006branch}) and mixed integer problems \cite{moccia2011modeling}.

If $f_i$ is piecewise linear in the general sense, it may happen that \mbox{$\liminf_{y \to a} f_i(y) < f_i(a)$}
for points $a$ on the border of neighboring interval domains, and strict inequalities ($\epsilon$-constraints) are necessary to apply linear programming techniques. Therefore, we assume that the profit change functions $f_i$ are piecewise linear and lower semicontinuous, which allows to extend the domains of all segments to the closed intervals without causing ambiguities.

According to that, each segment is represented by the convex combination of the points related to the interval borders, and $f_i$ can be represented as a sequence of points $\{(X_{i,0},Y_{i,0}),\ldots,(X_{i,m_i},Y_{i,m_i})\}$.

In the following formulation the decision variables $X_i$ and $Y_i$ will used to model $f_i$, i.e. $f_i(X_i) = Y_i$. According to the choice of $X_i$ one of the segments will be activated (type $SOS2$), by choosing neighboring points and a corresponding convex combinations $X_i = \lambda_{i,k} X_{i,k}+ \lambda_{i,k+1} X_{i,k+1}$, i.e.:
\begin{align}
& \sum_{k=0}^{m_i} \lambda_{i,k} X_{i,k} = X_i & \quad 1 \leq i \leq n \\
& \sum_{k=0}^{m_i} \lambda_{i,k} Y_{i,k} = Y_i & \quad 1 \leq i \leq n 
\end{align}  
Accordingly, the objective has the following form: 
$$ \sum_{1 \leq i < j \leq n}  c_{ij} x_{ij} - \sum_{1 \leq i \leq n} Y_i $$

If SOS2 is not included in the modeling language, two alternatives are presented, 
the first one uses a set of variables $\alpha_{i,k}$ to identify the selected neighboring pairs of points and the second one solves this problem by adding constraints on non-subsequent pairs of points. The first alternative is formulated as follows:
\begin{align}
& \sum_{k=0}^{m_i} \lambda_{i,k} = 1 & \quad 1 \leq i \leq n \label{SOS:1}\\
& \lambda_{i,k} \leq \alpha_{i,k}  & \quad \quad  1 \leq i \leq n, \ 0 \leq k \leq m_i  \label{SOS:2} \\
& \sum_{k=0}^{m_i} \alpha_{i,k}  = 2 &\quad \quad  1 \leq i \leq n  \label{SOS:3}\\
& \alpha_{i,k} \in \{0,1\}   &\quad \quad  1 \leq i \leq n , \ 0 \leq k \leq m_i  \label{SOS:5}\\
& \alpha_{i,k} + \alpha_{i,k'} \leq 1 & \quad \quad  1 \leq i \leq n, \ 0 \leq k \leq m_i-2 , \ k+2 \leq k' \leq m_i \label{SOS:6} \\
& \lambda_{i,k} \in [0,1] &\quad \quad  1 \leq i \leq n , \ 0 \leq k \leq m_i  \label{SOS:7} 
\end{align}
In the second alternative, the binary variables $\beta_{i,k}$ are introduced to replace (\ref{SOS:6}) by \mbox{(\ref{SOS:8}--\ref{SOS:9})}:
\begin{align}
& \sum_{k=0}^{m_i-1} \beta_{i,k}  = 1 &\quad \quad  1 \leq i \leq n  \label{SOS:4}\\
& 2 \beta_{i,k} \leq \alpha_{i,k} + \alpha_{i,k+1} &\quad \quad  1 \leq i \leq n , \ 0 \leq k \leq m_i-1  \label{SOS:8}\\
& \beta_{i,k} \in \{0,1\}   &\quad \quad  1 \leq i \leq n , \ 0 \leq k \leq m_i-1  \label{SOS:9}
\end{align}
According to our computational experiments, the second alternative leads to a smaller CPU time than the first. The differences between the second alternative and using SOS2 directly are not significant. We thus considered the second alternative in our computational experiments of Section \ref{sec:comp}. 
The next sections introduce two new solution approaches: a pure dynamic programming algorithm and a branch-and-bound approach based on a Lagrangian relaxation of the route duration constraint.

\section{Dynamic programming approach}
 \label{sec:DP}

This section introduces two dynamic programming approaches for the ARELTP. We assume for now that the triangle inequalities for $c_{ij}$ and $t_{ij}$ are satisfied and that no duration constraints are imposed.

\subsection{DP -- Without duration constraint} 
\label{sec:DPnoTmax}

We define the value function $V_k(q)$ for $q\geq 0$, which returns the optimal cost of a route ending at $k$ with inventory level $q \geq 0$. The functions $V_k$ can be expressed as:
\begin{align}
V_k(q) =\min & \sum_{i < j \leq k}  c_{ij} x_{ij} + \sum_{i \leq k} f_i(y_i)   \label{PAD3:obj} \\
& q_k = \sum_{i \leq k} y_{i} = q \\
& \text{(\ref{ARELTP:FB}-\ref{ARELTP:y}) when replacing } n \text{ by } k.  \label{PAD3:st}
\end{align}
These functions can be computed by using the following recurrence relation:
\begin{align}
V_{1}(q) = f_1(q), \quad q \geq 0  \label{fx:init}\\
V_{i} = \min_{j < i} \left( \overbrace{\min_{\underset{q-y \in \mathcal{D}(V)}{y \in \mathcal{D}(f_i)}}\{ V_j(q-y) + c_{j,i} + f_i(y) \}}^{V_{ji}:=} \right). \label{fx:rec}
\end{align}

In the case where skipping one or more visits is forbidden, then $V_{i-1}$ would be the only predecessor of $V_i$. We now define the \emph{superposition} operation $\boxplus$ as
\begin{equation} (V \boxplus f) (q) = \min_{\underset{q-y \in \mathcal{D}(V)}{y \in \mathcal{D}(f)}}\{ V(q-y) +  f(y) \}, \label{eq:superos} 
\end{equation}
and the recurrence relation (\ref{fx:rec}) gets the following form:
\begin{align}
V_{i} = \min_{j < i} ( (V_j + c_{j,i}) \boxplus  f_{i} ). \label{f:rec}
\end{align}

The value functions are piecewise linear as a consequence of the MIP representation \mbox{(\ref{PAD3:obj}--\ref{PAD3:st})}. We thus represent each piecewise linear function $f$ as a sequence of segments $v_l^{f}$ for $l = 1, \ldots, l^{f}_{max}$.
Our algorithm computes iteratively the results of the dynamic programming recurrence (\ref{f:rec}), and stores the values functions as lists of functions segments.
Each iteration is performed in two steps. 
First, $\tilde{V}_i = \min_{j < i} (V_j + c_{j,i})$ is computed (called the \emph{envelope} of the functions $V_j + c_{j,i}$) via a simple procedure which compares segments with a common domain.
Second, the \emph{superposition} $\tilde{V}_i \boxplus f_{i}$ is evaluated. 

\begin{figure}[!htb]
\centering
\includegraphics[width= 0.7\hsize]{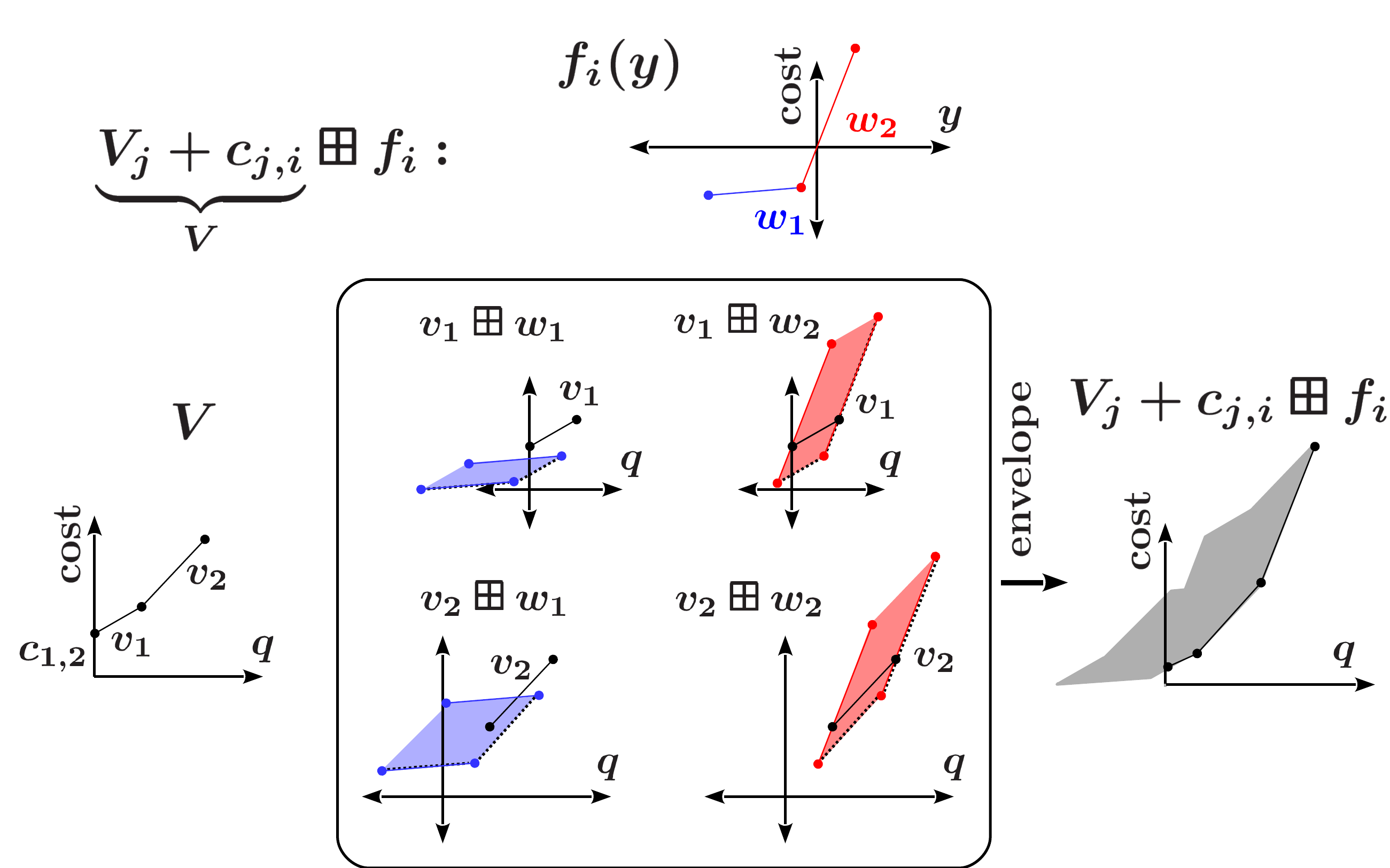}
\caption{An example for calculating the superposition of piecewise linear functions}
\label{fig:DPrec2}
\end{figure}

Figure~\ref{fig:DPrec2} provides an example of the evaluation of the superposition \mbox{$\tilde{V}_i \boxplus f_{i}$}, using functions with two segments each. This is done by evaluating the superposition of all pairs of segments $(v_l^V,v_{l'}^f)$ from $V$ and $f$, respectively, and extracting the minimum (lower envelope) of the resulting piecewise-linear functions. In the case of Figure~\ref{fig:DPrec2}, four superpositions of segments are performed.
The result of the superposition of two segments $v_l^V \boxplus v_{l'}^f (q)= \min_y \{ v_l^V(q-y) + v_{l'}^f(y) \}$ can be derived analytically in $\mathcal{O}(1)$. A geometrical interpretation of this step is that the set of feasible $(q,y)$ values forms a parallelogram, and the result of a superposition corresponds to the lower edges of these parallelograms. 

Overall, the complexity of one iteration of the dynamic programming -- computing $\tilde{V}_i$ from the value functions of predecessors -- is linked to the complexity of the \emph{envelope} operation and the complexity of the \emph{superposition} operation.
Let $m_i$ be the number of segments of the value function $V_i$ and $M_i=\sum_{j < i} m_j$ the total number of segments of the value functions of predecessors $i < j$. Let $\hat{m}_i$ be the number of segments in $f$. As shown in Section \ref{appendix:superpos} of the appendix, the superposition can be implemented in $\mathcal{O}(\log( \hat{m}_i) M_i \hat{m}_i )$ and according to Section \ref{appendix:envelope}, the envelope operation can be computed in $\mathcal{O}(\log(i) M_i \alpha(M_i))$, where $\alpha$ denotes the inverse Ackermann function. The superposition operation is a special case of the calculation of the envelope of the Minkowski sum of two polygons (cf. \cite{agarwal2002polygon},\cite{ramkumar1996algorithm}).

Finally, note that in many applications, the capacity $Q_{max}$ and the domains of the segments of $f_i$ are integer. This allows to restrict the segments of the value function to integer domains  reduce the number of segments of the values functions, as explained in Section \ref{sec:integerdata} of the appendix.  

\subsection{DP -- Considering the duration constraint}
\label{sec:DPTmax}
This section now describes the dynamic programming recursion when considering the duration constraint $T_{max}$. The value functions now include one additional dimension related to the duration $t$.
Let $V_{i,t}(q)$ be the value function for step $i$ and for travel times equal to $t$, which returns the best profit for a final inventory $q$. The recurrence relation for $V_{i,t}$ has the following form:  
\begin{align}
V_{1,0}(q) &=   f_i(q) \quad & q \geq 0 \\
V_{i,t} &=   \min_{j < i} ( V_{j,t-t_{j,i}} + c_{j,i}) \boxplus f_{i} \label{formulat:DPt0}
\end{align}

With these conventions, $V_{i,t}$ is defined for every possible duration considering all feasible paths of different lengths. This number of possible durations values grows exponentially and can include dominated elements. Hence,
to avoid representing dominated parts, we compute an alternative value function $U_{i,T}$, where only the non-dominated options $V_{i,t}(q)$ for a given \emph{duration budget} $T$ are considered. This function is defined as follows:
$$U_{i,T}(q) :=   \min_{t \leq T} ( V_{i,t}(q) ).$$
Note that $U_{i,t}(q)$ is piecewise constant in $t$ if $q$ is fixed. Furthermore, if the profit functions are monotonically decreasing, then the functions $U_{i,t}$ are monotonically increasing in~$q$, and monotonically decreasing in $t$.
The functions $U_{i,T}$ can be evaluated via the following recurrence formula:
\begin{align}
& U_{1,0}(q) =   f_i(q) \quad q \geq 0  \label{formulat:DPt1}\\
& U_{i,t} = \min_{j<i, t'+ t_{ji} \leq t} (U_{j,t'} + c_{j,i}) \boxplus f_{i}. \label{formulat:DPt2}
\end{align}

The values functions $U_{i,t}$ are computed explicitly in our dynamic programming algorithm. As in the previous section, each step of the recursion requires to apply the \emph{envelope} and \emph{superposition} operations. These operations are repeated for distinct values of $t$.
The explicit representation of the value functions is also generalized : instead of linked lists of function segments, the algorithm relies on several connected lists, one for each non-dominated duration value $t$. 

This structure is illustrated in Figure \ref{fig:MLLa}. The corresponding value function $U_{i,t}$ consists of planes that are constant with respect to $t$. Each plane is represented by a single segment for the corresponding domain of feasible time budgets. The segments and the corresponding domains have the same color; darker colors correspond to larger time budgets.
Figure \ref{fig:MLLb} is a representation of $U_{i,t}$ by linked segments. Each class of time budgets (e.g. $t \in [t_3,t_4)$) is associated with a unique color and the corresponding linked list of valid segments can be traced by following the segments and links of this color. There are two types of links, down-links that lead to segments that are feasible for a lower time budgets, and up-links that lead to segments that are feasible for higher time budgets. This structure can be used to traverse $U_{i,t}$ with respect to $q$ for a fixed value $t$.

  \begin{figure}[!ht]
    \subfloat[Value function $U_{i,t}$  \label{fig:MLLa}]{%
			\includegraphics[width=0.45\textwidth,page=1]{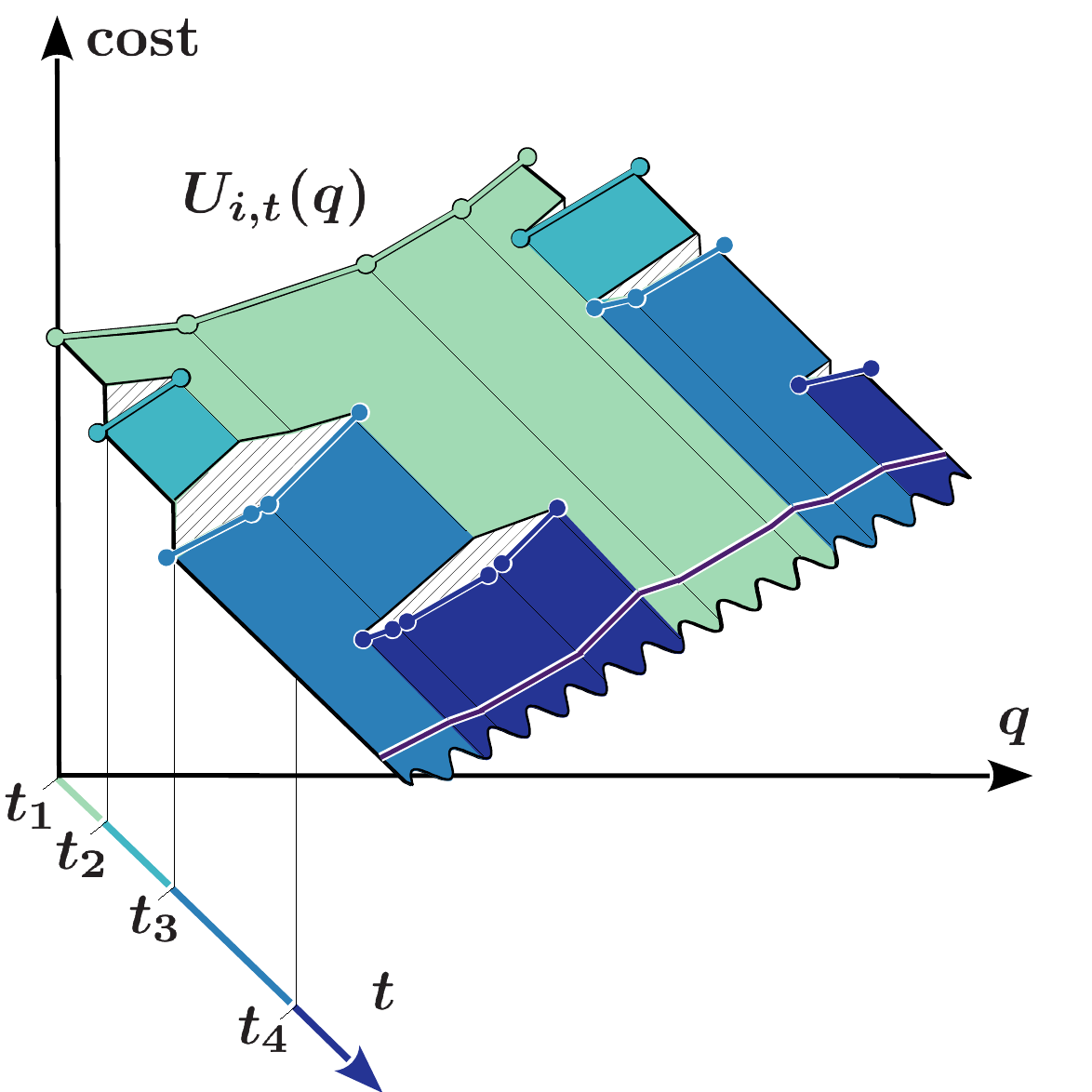}		
    }
		\hspace*{0.4em}
    \subfloat[Representation using nested lists \label{fig:MLLb}]{
			\includegraphics[width=0.45\textwidth,page=2]{MLLbw.pdf}
    }
    \caption{Example: value function with respect to segments and duration budgets $t$}
    \label{fig:dummy}
  \end{figure}

\section{Branch and bound with Lagrangian relaxation} 
\label{sec:BBLR}

Adding the duration constraint in the dynamic programming algorithm results in a larger number of non-dominated function segments. A key difference between the duration and capacity constraints is that the route duration limit $T_{max}$ acts as a global constraint, and there are little opportunities to prune by feasibility in intermediate steps. To circumvent this issue, we introduce an alternative algorithm based on a Lagrangian relaxation \cite{fisher2004lagrangian} of the duration constraint.

The notation will be extended to simplify the description of the methods.
Let $\psi(x,y)$ be the objective value (\ref{ARELTP:obj}) for a feasible solution $(x,y)$, and $\psi(x) = \min_y \psi(x,y)$ (indicating that finding the optimal $y$ is declared as the subordinate problem). Note that finding the corresponding optimal solution for $y$ can still be a hard problem (Section \ref{sec:complexity}). 
Furthermore, each sub-route $\sigma$ of $\tau$ can be characterized by its subset of visited customers, since the order of indices also defines the order of customers. The subset notation will thus also be used for subsequences. The objective value for a subroute $\sigma$ is identified as $\psi(\sigma)$.
\subsection{Lagrangian relaxation}
\label{sec:LR}

Applying a Lagrangian relaxation to the duration constraints leads to the following formulation,
\begin{align}
\max_{\lambda \geq 0} \left(
\begin{array}{l} \displaystyle \underset{x,y}{\min}  \sum_{1 \leq i < j \leq n}  c_{ij} x_{ij} + \sum_{1 \leq i \leq n} f_i(y_i) - \lambda \overbrace{(T_{max} - \sum_{1 \leq i < j \leq n}  t_{ij} x_{ij})}^{s(x):=} \\
\text{s.t.  (\ref{ARELTP:FB}--\ref{ARELTP:y})}
\end{array} \right), \label{f:LR1}
\end{align}
which is equivalent to minimizing $L(\lambda)$, for $\lambda \geq 0$, such that 
\begin{align}
 L(\lambda) &= -\lambda T_{max} +  \left( \begin{array}{l} \displaystyle \underset{x,y}{\min} \sum_{1 \leq i < j \leq n}  \left(c_{ij} + \lambda t_{ij} \right) x_{ij} + \sum_{1 \leq i \leq n} f_i(y_i)   \\
\text{s.t.  (\ref{ARELTP:FB}--\ref{ARELTP:y})}  \end{array}\right) \label{f:LR3} 
\\
x_\lambda:&\quad \psi(x_\lambda) - \lambda s(x_\lambda) =  L(\lambda). \label{f:LR3arg} 
\end{align}

We note that $(c_{ij} + \lambda t_{ij})$ is again a distance matrix and the triangle inequality holds.
The dynamic programming algorithm of Section \ref{sec:DPnoTmax}
 can thus be used to evaluate $L(\lambda)$ for any given value of $\lambda$, and the corresponding primal solution is called $x_\lambda$.
The function $L(\lambda)$ is concave \cite[see, e.g.,][]{fisher2004lagrangian}, and any $L(\lambda)$ is a lower bound of (\ref{ARELTP:obj}-\ref{ARELTP:y}). 

The best dual bound $L(\lambda^*)$ can be found by means of standard iterative techniques for convex optimization, such as bisection search, intersection of tangents, or quasi-Newton, among others. 
These methods depend on an initial search interval $I_\lambda = [\lambda_{min},\lambda_{max}]$, which should satisfy $\lambda^* \in I_\lambda$. We can choose $\lambda_{min} = 0$. For the upper bound, two cases should be considered.
If $t_{1n} > T_{max}$, then the problem is infeasible since even the direct trip from $1$ to $n$ is infeasible.
Else, if $t_{1n} \leq T_{max}$, the direct trip from $1$ to $n$ is a feasible solution, and with the inventory decisions $y_1=0$, $y_n=0$, the following inequality holds:
\begin{align}
L(\lambda) \leq c_{1n} - \lambda (T_{max}  - t_{1n}) \label{f:LBineq1} 
\end{align}

Since (\ref{f:LBineq1}) is valid for $\lambda^*$ and if $t_{1n} > T_{max}$ we can find the following bound: 
\begin{align}
\lambda^* \leq \frac{c_{1n} -L(\lambda^*)}{(T_{max}  - t_{1n})} \leq \frac{c_{1n} - L(0)}{(T_{max}  - t_{1n})} \label{f:LBineq2} 
\end{align}

Therefore $\lambda^*$ is contained in the following interval:
\begin{align}
\lambda^* \in \left[0,\frac{c_{1n} - L(0)}{(T_{max}  - t_{1n})}\right] \label{f:LBinterval} 
\end{align}

We performed extensive computational experiments to find a most adequate search algorithm for~$\lambda^*$. Using the intersection of sub-gradients, at each iteration, as next search point turned out to be an efficient and robust choice, as it led to the average least number of evaluations of $L(\lambda)$.
Thus, for a current search interval $[\lambda_a,\lambda_c]$, the next Lagrange multiplier is obtained as the intersection of the lines originating in $\lambda_a$ and $\lambda_c$, with slopes  $-s(x_{\lambda_a})$ and $-s(x_{\lambda_c})$, respectively. The method iterates until the termination is satisfied, in our case the difference $\lambda_c - \lambda_a < \epsilon$ for a given $\epsilon$.  
The overall method is reported in Algorithm~\ref{alg:bsec}.

\begin{algorithm}
\caption{Algorithm for solving the Lagrangian dual} 
\begin{algorithmic}\label{alg:bsec}
\STATE Input: $\lambda_a <  \lambda_c$ and $x_{\lambda_a}$, $x_{\lambda_c}$ such that $s(x_{\lambda_a})>0 $ and $s(x_{\lambda_c})<0$
\STATE Output: feasible solution that meets the termination criterion
\WHILE{termination criterion not reached}
\STATE $\lambda_b \gets \Theta(\lambda_a,\lambda_c,s(x_{\lambda_a}),s(x_{\lambda_c}))$
\STATE calculate $x_{\lambda_b}$ and $s(x_{\lambda_b})$
\IF{$x_{\lambda_b}$ optimal and feasible}
\STATE return $(\lambda_b, x_{\lambda_b})$
\ENDIF
	\IF{$s(x_{\lambda_b}) > 0$}
	\STATE $\lambda_c \gets \lambda_b$
	\ELSE[$x_{\lambda_b}$ infeasible]
	\STATE $\lambda_a \gets \lambda_b$
	\ENDIF
\ENDWHILE
\STATE return $(\lambda_c, x_{\lambda_c})$
\end{algorithmic}
\end{algorithm}

In this algorithm, the function $\Theta(\lambda_a,\lambda_c,s(x_{\lambda_a}),s(x_{\lambda_c}))$ returns the new search point~$\lambda_b$, at each iteration. The search intervals $[\lambda_a,\lambda_c]$ are such that the primal solution that correspond to the left border is always feasible, and the solution that corresponds to the right border is infeasible or optimal.
After termination, the primal solution that corresponds to the left border is feasible, and therefore can be used as an upper bound. 

Naturally, this primal solution is not necessarily optimal for the original ARELTP, since a duality gap can exist. To solve the ARELTP exactly, we rely on this Lagrangian relaxation to provide good combinatorial lower (and upper) bounds within a branch-and-bound framework. The details of this algorithm are described in the next section.

\subsection{Branch and bound}
\label{sec:BB}

The search space of the proposed branch and bound consists of ARELTP solutions \mbox{(\ref{ARELTP:FB}--\ref{ARELTP:y})} where the duration constraint is not necessarily satisfied. The solution branches of the branch-and-bound tree explore subspaces that are restricted by excluding customers and also by defining mandatory customers. We thus represented a solution branch as a pair of sets $(S,T)$, where $S$ the set of mandatory customers and $T$ is the set of excluded customers. Solutions that satisfy the duration constraint are called feasible.

There are four determining design choices; the choice of lower bounds, the construction of (good) feasible solutions (UBs), the branching rule  and the strategy for selecting the next branch. A general discussion of branch and bound strategies can be found in \cite{linderoth1999computational} and a review on branching rules can be found in \cite{achterberg2005branching}.

In our case, the lower bounds are produced by the Lagrangian relaxation with respect to the duration constraint. Slight modification to the methods presented in the previous sections are necessary to reflect mandatory and excluded customers. Excluded customers can be simply eliminated from the data set. To consider a mandatory customer $s \in S$, we eliminate the arcs $(i,j)$ that \emph{skip} this customer, i.e., such that $i < s < j$ for $s \in S$.  

The branching rule and the heuristic to obtain good feasible solutions (UBs) with respect to the duration constraint are based on the solution of the Lagrangian Relaxation, i.e. for a given solution branch $(S,T)$ a feasible solution can be obtained from the Lagrangian relaxation. Based on this solution, an augmented tour $\bar{\tau}$ is constructed by sequentially inserting customers as long as it is possible to maintain feasibility. Worsening the solution is allowed at this step.
As  a consequence, the set of solutions $\Theta = \{\tau : \tau  \subset \bar{\tau} \}$ always satisfies the duration constraint, and we can find an optimal solution $\tilde{\tau}$ in $\Theta$ using DP without duration constraint. This solution defines the upper bound for the solution branch.
 
Our branching rule is based on the selection or elimination of a customer visit $i \not\in \bar{\tau} \setminus T$.  As a consequence, the branch $(S ,T)$ is replaced by the branches $(S \cup \{i\},T)$ and $(S ,T \cup \{i\})$ in the tree. The following rules were considered: sequential branching, strong branching with one-step look ahead and random selection. In case of sequential branching, the smallest index was selected. In case of strong branching, the customer $i$ with the smallest upper bound was selected. Strong branching significantly reduced the number of iterations (a reduction of approximately 30\%), but the evaluation of all candidates was time consuming. In average, the performance of sequential branching and random selection were not significantly different, but the occurrence of outliers were less frequent with random selection. Therefore, the random selection has been selected.
The selection of the next branch for exploration is based on ``best-estimate'', i.e., lowest upper bound. 

Finally, we cannot assume that the triangle inequality are satisfied for some lot sizing applications, leading to some necessary adaptations of the algorithms. This is discussed in the next section.

\section{No triangle inequality} \label{sec::triangle}
In general, the triangle inequality is not satisfied for problems that include setup costs or setup times. 
In the lot sizing application LSwRC, skipping a customer (i.e., a production period) can increase the total setup costs as well as the total resource consumption, and render the solution infeasible.
Four adaptations of the code are needed to resolve these~issues:

\begin{itemize}[nosep,leftmargin=*]
\item[--] When applying dynamic programming to solve the Lagrangian Relaxation for a solution branch (Sections  \ref{sec:DPnoTmax} and \ref{sec:BBLR}), the recursion formula considers the cost to reach a customer~$j$ directly after customer $i$ (cost of $c_{ij} + \lambda t_{ij}$). If the triangle inequality is not satisfied, then a shortest path between $i$ and~$j$ is used.

\item[--] 
To find a good UB for a solution branch, our heuristic procedure performs successive customer insertions and checks the feasibility with respect to the duration constraint. If the triangle inequality holds, the insertion of a customer $i$ between two consecutive customers $\tau_j$, $\tau_{j+1}$ results in an increase of the duration,
which cannot not be shortened by intermediate stops. If the triangle inequality is not satisfied, then a shortest path between~$\tau_j$ and $i$ and a shortest path between $i$ and $\tau_{j+1}$ is used.

\item[--] The dynamic programming approach that considers the duration constraint (based on Equation \ref{formulat:DPt2}) considers all feasible predecessors $i$ of $j$.  
In general, the cost matrix $(c_{ij})$ and the duration matrix $(t_{ij})$ do not satisfy the triangle inequality.
Several non-dominated paths may connect $i$ and $j$ with respect to the two objectives: cost and duration. Therefore all these paths need to be considered.

\item[--] The initial route $(1,n)$ is used to check if the problem is infeasible and to obtain initial bounds for $\lambda^*$ (with respect to Equation \ref{f:LBinterval}). Without the triangle inequality, this route can be infeasible, and a shortest path algorithm is used to generate an initial route.
\end{itemize}

\section{Computational Experiments}
 \label{sec:comp}
In the following, we evaluate the performance of the proposed methods on the ARELTP with and without the duration constraint. The corresponding experiments are based on instances for two types of applications, the evaluation of routes for the SRLTP (SRLTP instances) and lot sizing application (LSwRC instances).

The SRLTP instances are based on two types of orienteering instances, the Tsiligirides instances introduced in \cite{tsi84} and the Chao-Golden instances introduced in \cite{chao93,chao96}; where the distance matrix also serves as cost matrix ($c_{ij} = t_{ij}$). For each node, a piecewise linear profit functions with four steps was generated. The duration limit $T_{max}$ is taken from the original instances and $Q_{max}$ takes the values 30, 60 and 120. For each instance, twenty a-priori routes were constructed by a randomized nearest neighbor procedure.  

To the knowledge of the authors, instances with setup times that are dependent on idle times are not available in the literature. In the following section, the generation of the LSwRC instances will be discussed. All instances are available at \url{http://www.univie.ac.at/prolog/research/ARELTP/}.

\subsection{Instances for the LSwRC} \label{sec:InstLSwRC}

For simplicity, the setup related resource is defined as the setup related costs ($t_{ij} = c_{ij}$), therefore $T_{max}$ is the maximum budget for setup related costs.
To investigate the influence of the structure of the instances on the performance, the following factors are considered: the number of periods $n$, the maximum setup cost $T_{max}$ and the capacity bounds $Q_{max}$.

The production costs are piecewise linear and concave with three steps. Each parameter setting for the random generator is replicated with eleven different random seeds. The number of periods is given by $n \in \{10,20,30,40,50\}$ and $Q_{max} \in \{10,50,100\}$. The tightness of the maximum setup cost constraint $T_{max}$ is strongly dependent on the setup costs $t_{ij}$ and the number of periods $n$. Therefore, we used the following formula for $T_{max}$, which considers a parameter $\theta$: 

\begin{equation}
T_{max} =  \max_{ij} t_{ij} + \theta \underset{SC_{L4L}:=}{\underbrace{ \sum_i t_{i,i+1}.}}
\end{equation}

As such, $T_{max}$ is the sum of the maximum setup cost and a portion of the value $\sum_i t_{i,i+1}$.
This value is the setup cost of the lot-for-lot policy ($L4L$), where the demand of each period is satisfied by the production of this period directly (zero inventory).

If the triangle inequality holds for $t_{i,j}$, then $\theta=1$ means that $T_{max}$ is larger than the setup cost for any feasible solution. The results will be reported with respect to the tightness of $T_{max}$, i.e., ``small'' will be used for $\theta=\frac{2}{5}$, ``medium'' will be used for $\theta=\frac{3}{5}$ and ``large'' will be used for $\theta=\frac{4}{5}$. Analogously, the terms small, medium and large will be used for $Q_{max} \in \{10,50,100\}$.

To evaluate the balance of the different cost components in the instances, we compare the value of each  optimal solution $Z^*$ with the value of the lot-for-lot strategy $Z^{L4L}$.
The cost $Z^{L4L}$ is the sum of the production costs $\sum_i f_i(d_i)$ plus setup costs $\sum_{i=1}^{n-1} c_{i,i+1}$. For an optimal solution $(x^*,y^*,q^*)$, the relative savings $\delta Z$ with respect to the lot-for-lot solution is calculated as:
\begin{align}
\delta Z &= \frac{Z^* - Z^{L4L}}{Z^\textsc{L4L}}  \nonumber \\
&=   \underset{\delta Z^\textsc{prod}}{\underbrace{\displaystyle \frac{1}{Z^\textsc{L4L}} \sum_{i=1}^n \left( f_i(d_i) - f_i(y^*_i) \right) }} - \underset{\delta Z^\textsc{setup}}{\underbrace{ \frac{1}{Z^\textsc{L4L}} \sum_{i=1}^{n-1}  \left( \displaystyle \sum_{j=i+1}^n  c_{ij}x^*_{ij} - c_{i,i+1} \right) }} - \underset{\delta Z^\textsc{inv}}{\underbrace{\displaystyle  \frac{1}{Z^\textsc{L4L}} \sum_{i=1}^n h_i q^*_i}} \nonumber
\end{align}
 
The relative savings $\delta Z$ is the sum of the production savings $\delta Z^\textsc{prod}$ minus the additional expenses due to setups $\delta Z^\textsc{setup}$ and inventory $\delta Z^\textsc{inv}$ in the optimized solution. Figure \ref{statRelCosts} reports the average value of these savings for all instance types, for different values of~$T_{max}$ and~$Q_{max}$. 

The value of the savings tends to increase with the capacity and duration limits~$Q_{max}$ and~$T_{max}$. Indeed, larger capacities result in more opportunities to reduce production costs, at the cost of additional storage or setup costs. It is also noteworthy that all cost components have a significant impact, such that the instances are well-formed and the optimal solution does not correspond to a simple policy such as L4L.

\begin{figure}[htb]
\hspace*{2.7cm}
\includegraphics[width=0.78\hsize]{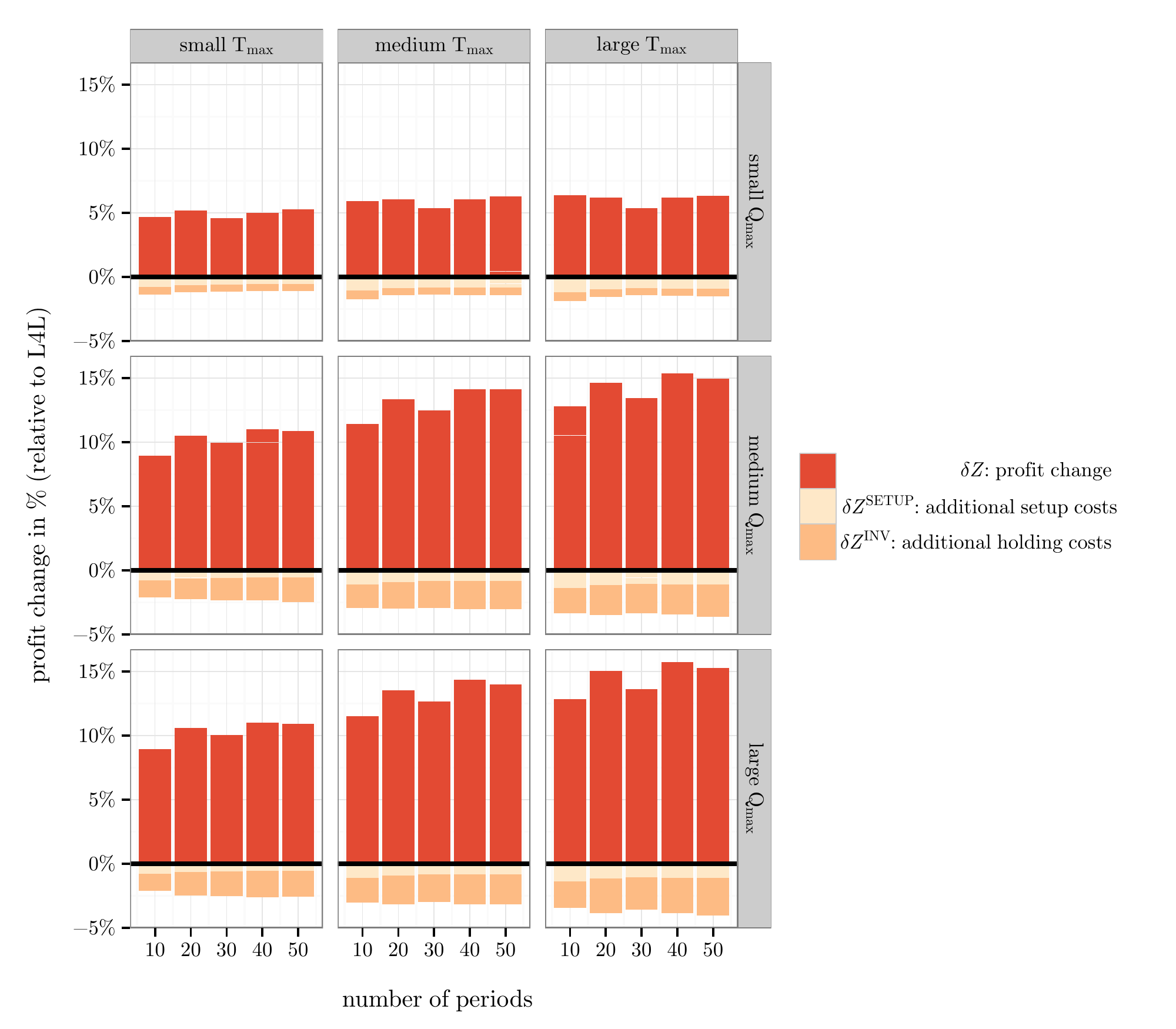}
\caption{Production savings, setup costs and inventory costs when comparing an optimal solution to the L4L policy, for different instance characteristics.}
\label{statRelCosts}
\end{figure}

\subsection{Results when the duration constraint is not considered}

In this section, we first evaluate the performance of the proposed dynamic programming algorithm (Section \ref{sec:DPnoTmax}) for the  ARELTP without duration limit constraint, in comparison with Gurobi  (Section \ref{sec:MIPsolverformualtion}). The experiments were done with the three families of instances: Chao-Golden, Tsiligirides, and the new LSwRC test sets.

For both types of instances the dynamic programming approach is significantly faster than Gurobi.
For the SRLTP and LSwRC, we observe average speedup factors of $50$ and $100$, respectively. 
On $99.96\%$ of instances, DP is faster than Gurobi. The average the speedup factor is $55.9$, with a minimum of $2.66$ and maximum of $284$.
According to a Wilcoxon signed-rank test, the p-value of the null hypothesis (both methods are equally fast) is smaller than $10^{-5}$. This hypothesis can thus be rejected with high confidence, thus validating the significance of this speed-up.

\begin{figure}[htbp]
\centering
\includegraphics[width=0.85\textwidth]{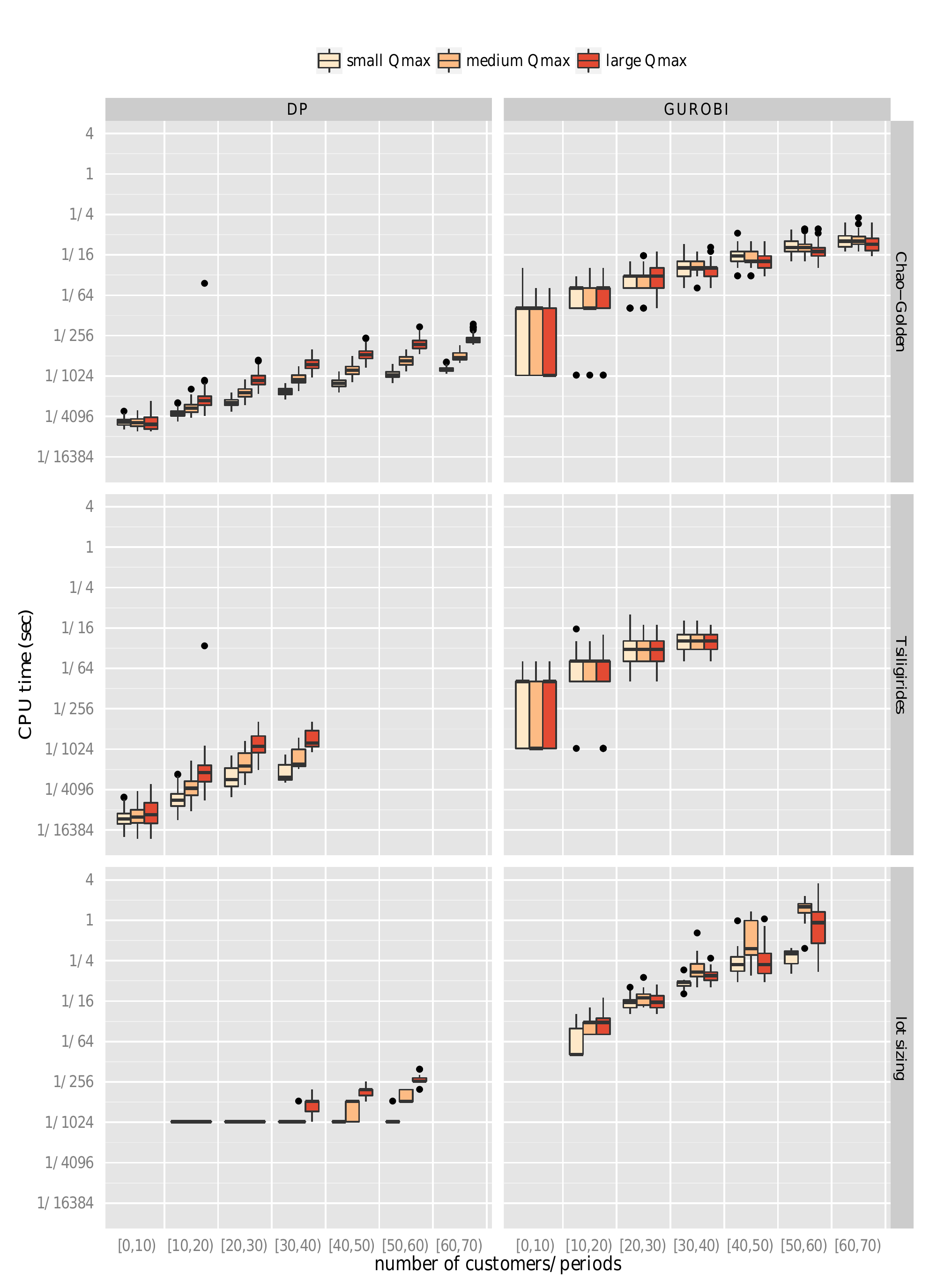}
\caption{CPU time of DP and Gurobi for the ARELTP without duration constraints, reported as a Boxplot for each instance class and each value of $n$ and $Q_{max}$.}
 \label{fig:TSICHAOLOTprocessingtimeDPwoTmax}
\end{figure}

Now, Figure \ref{fig:TSICHAOLOTprocessingtimeDPwoTmax} illustrates the influence of $n$ and $Q_{max}$ on the performance of both methods. We relied on a logarithmic representation of the CPU time as a function of $n$, and display the measures as Boxplots for each instance class and each value of $n$ and $Q_{max}$.
Both algorithms display a CPU time which increases exponentially with problem size.
As expected, the instances with large value of $Q_{max}$ are more difficult for the dynamic programming algorithm, with a CPU time three to four times higher than for small $Q_{max}$. This is due to an increase of the number of labels needed to represent the value functions. In contrast, Gurobi is relatively insensible to changes in the capacity limits.
Overall, DP is significantly faster than Gurobi for all these test instances.

\subsection{Results when considering the maximum duration constraint}

In this section, we compare the performance of the three algorithms for the ARELTP with duration constraints:
\begin{itemize}[nosep]
 \item DP3d -- the proposed pure dynamic programming approach (Section \ref{sec:DPTmax}),
 \item BBDP -- the proposed branch and bound method (Section \ref{sec:BB}) in combination with the DP approach for the Lagrangian relaxation (Section \ref{sec:DPnoTmax}), and
 \item Gurobi -- as a reference method, based on the MIP formulation of Section \ref{sec:MIPsolverformualtion}.
\end{itemize}

For this purpose, we rely on the same three classes of instances. Figure \ref{fig:violin:detail} illustrates the CPU time of each method for each set, and for each value of $Q_{max}$ and $T_{max}$. As it was impossible to display all results separately for each value of $n$ (hence considering four combined factors, for a total of 5633 instances) we aggregated these results and display the shape of the distribution of the CPU time for each sub-class of instances ---with varying $n$--- as violin plots. In these plots, the central bars represent the median and the stars represents the means. The detailed results, for all instances, are available at  \url{http://www.univie.ac.at/prolog/research/ARELTP/}.

\begin{figure}[h]
\centering
\includegraphics[width=0.9\textwidth]{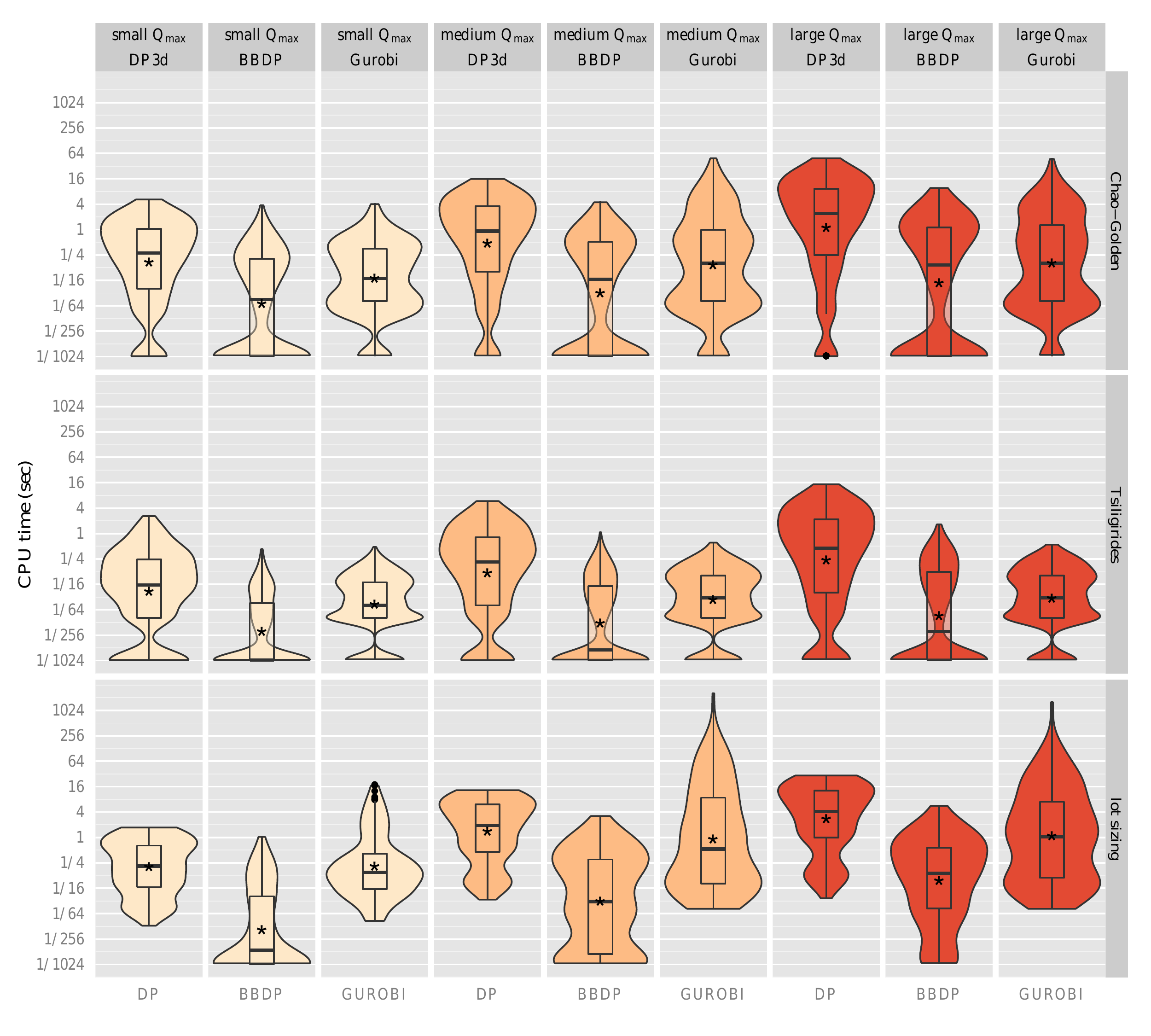}
\caption{CPU time of Gurobi, BBDP and DP3d for each instance class and each value of $Q_{max}$ and $T_{max}$. The results are reported as combined Boxplots and Violin-plots, hence illustrating the distribution of the CPU time each subclass of instances.}
 \label{fig:violin:detail}
\end{figure}

We observe in Figure \ref{fig:violin:detail} that the pure dynamic programming approach (DP3d) is in average slower than the two other methods in the presence of duration constraints, with the exception of some lot sizing instances with medium or large $Q_{max}$. For some of these latter instances, Gurobi requires a long CPU time, as illustrated by the extended tail of the violin plot.
Comparing the branch-and-bound and dynamic programming approach (BBDP) with Gurobi, we observe that BBDP performs significantly better than Gurobi for the three classes of instances, with an average speedup factor of two on the SRLTP instances, and a speedup of ten on the lot sizing instances.
BBDP is also more robust than Gurobi: its CPU time never exceeds ten seconds, while Gurobi uses up to blue thirty minutes to solve some specific instances. This impact is more acute for the lot sizing data sets. We finally observe a large proportion of instances solved in a few milliseconds by BBDP. 

In the following, we further investigate the impact of the problem size $n$ and the capacity~$Q_{max}$ on the solution time. Figure \ref{fig:cpuforDPandBB} displays Boxplot representations of the CPU time of the three methods, for all instance classes.
The same observations can be made: DP3d is slower than Gurobi, which itself is in average slower than BBDP.
Furthermore, larger values of $Q_{max}$ lead to increased CPU time for BBDP and DP3d.
This effect is less marked when considering Gurobi, although a small $Q_{max}$ facilitates the resolution. On the figure, the Boxplots representing the CPU time of Gurobi are almost aligned, on the logarithmic scale, indicating an exponential growth of the resolution time as a function of $n$. This growth tends to be more moderate for BBDP and DP3d, for large values of $n$.

Finally, Figure \ref{fig:boxplot:cpuforDPandBBTmax4lot} illustrates the impact of $T_{max}$ on the CPU time of the three methods, on the lot sizing instances. We observe that the resolution time of DP3d is almost not impacted by the value of this parameter, most likely because $T_{max}$ does not help to eliminate labels until the final stages of the resolution. For Gurobi and BBDP, a medium $T_{max}$ appears to lead to more difficult instances, and a large $T_{max}$ is easier to solve. For BBDP, a large $T_{max}$ helps to reduce the size of the branch-and-bound tree. In particular, in the hypothetical case of $T_{max}=\infty$, the BBDP procedure stops at the root node.

\begin{figure}[p]
\centering
\includegraphics[width=0.95\textwidth]{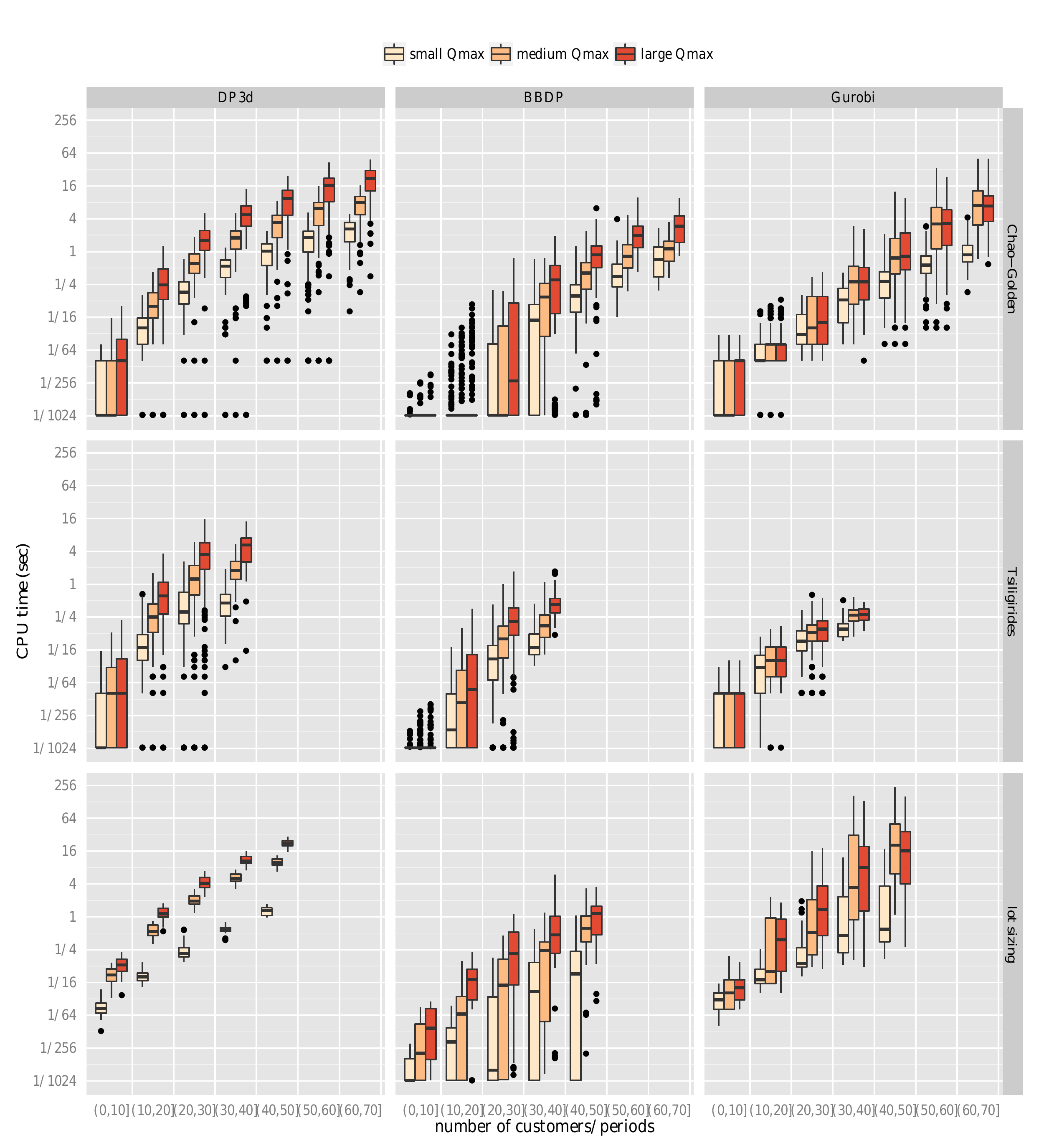}
\caption{CPU time for Gurobi, branch and bound (BBDP) and for the pure DP approach (DP3d) with respect to the number of periods. For Gurobi two lot sizing instances with a CPU time of more than 1000 sec are not documented in the chart.}
\label{fig:cpuforDPandBB}
\end{figure}

\begin{figure}[p]
\centering
\includegraphics[width=0.95\textwidth]{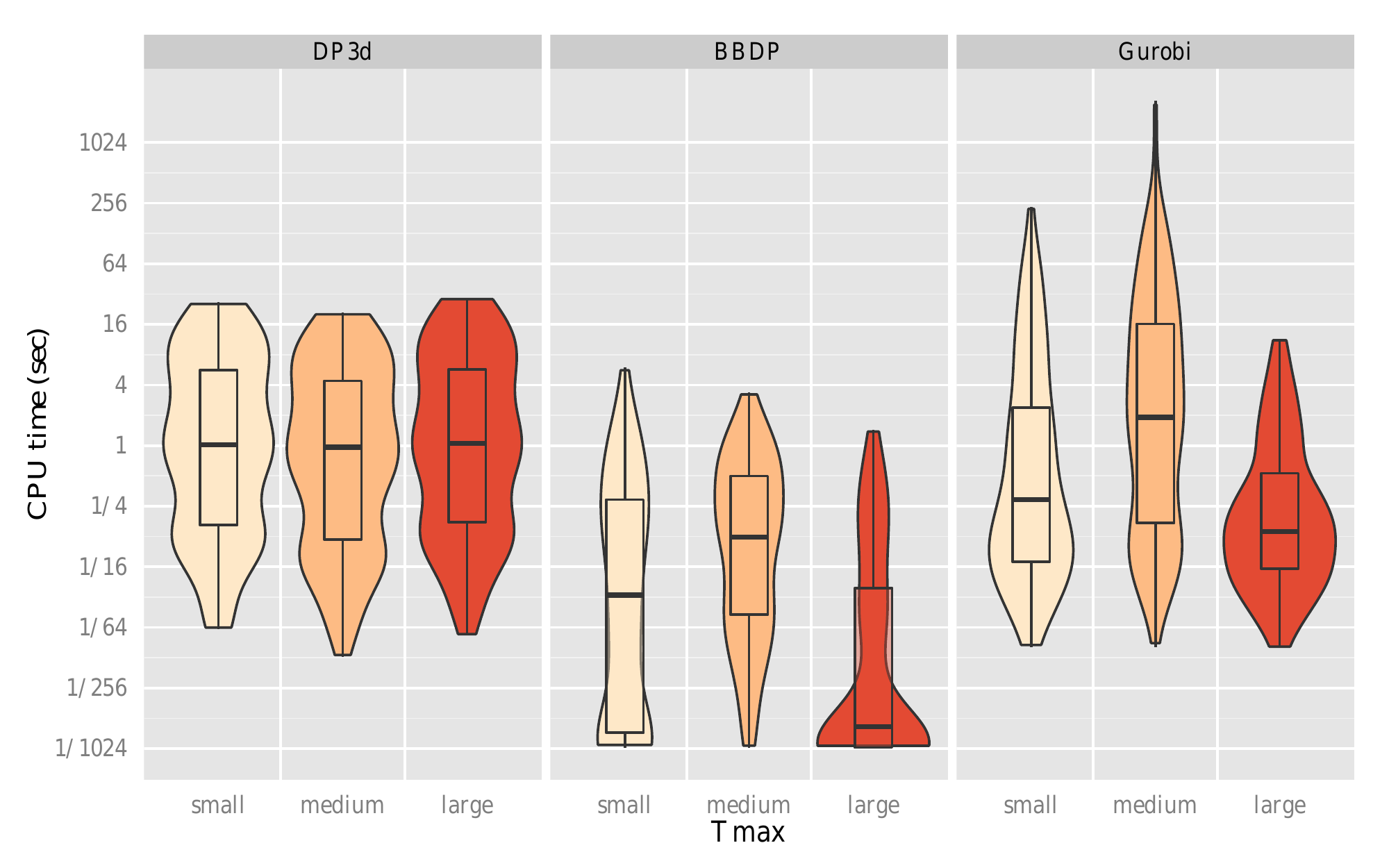}
\caption{CPU time for Gurobi, branch and bound (BBDP) and the pure DP approach (DP3d) for the lot sizing instances with respect to the resource consumption limit $T_{max}$.}
 \label{fig:boxplot:cpuforDPandBBTmax4lot} 
\end{figure}

\clearpage

\section{Conclusions}
\label{sec:conclu}

In this paper, we have introduced the lateral transhipment problem with a-priori routes and piecewise linear profits (ARELTP). We have also shown that this model covers an important class of lot sizing problems with requalification costs. Two problem variants were studied, the ARELTP with and without duration constraint. Both problems are NP-hard. For the ARELTP without duration constraint, we proposed a dynamic programming algorithm (DP) with continuous labels. For the ARELTP with duration constraint, we introduced a generalization of this approach (DP3d) as well as a branch-and-bound method which relies on Lagrangian relaxation and dynamic programming (BBDP).

The performance of the algorithms was evaluated on a large set of problem instances, for the ARELP and for the lot sizing application. Our computational experiments indicate that the proposed DP solves the ARELTP without duration constraint very efficiently, with average speedup factors of $50$ over a commercial solver (Gurobi). For the ARELTP with duration constraint, our hybrid BBDP approach outperforms the commercial MIP solver and the DP3d in most cases, with an average speedup factor of four for the ARELTP instances, and ten for the lot sizing data sets.

These results open the way to several avenues of research. One important application concerns the lateral transhipment problem with one or more routes. We aim to use these exact procedures iteratively, in further works, to evaluate fixed routes and optimize their transhipments during a heuristic resolution. Using the Lagrangian bounds and the proposed hybrid branch-and-bound, it is also possible to estimate if some subproblems (e.g., two new fixed routes evaluated during a local search move) can be eliminated on the fly, before even attempting an exact resolution. In this context, the proposed methods will significantly contribute to solve some difficult multi-attribute vehicle routing problems.  Finally, these methods will be useful to address other complex lot sizing variants and subproblems.

\bibliographystyle{abbrvnat}

\begin{thebibliography}{30}
\providecommand{\natexlab}[1]{#1}
\providecommand{\url}[1]{\texttt{#1}}
\expandafter\ifx\csname urlstyle\endcsname\relax
  \providecommand{\doi}[1]{doi: #1}\else
  \providecommand{\doi}{doi: \begingroup \urlstyle{rm}\Url}\fi

\bibitem[Achterberg et~al.(2005)Achterberg, Koch, and
  Martin]{achterberg2005branching}
T.~Achterberg, T.~Koch, and A.~Martin.
\newblock Branching rules revisited.
\newblock \emph{Operations Research Letters}, 33\penalty0 (1):\penalty0 42--54,
  2005.

\bibitem[Agarwal and Sharir(2000)]{agarwal2000davenport}
P.~K. Agarwal and M.~Sharir.
\newblock Davenport-schinzel sequences and their geometric applications.
\newblock \emph{Handbook of computational geometry}, pages 1--47, 2000.

\bibitem[Agarwal et~al.(1989)Agarwal, Sharir, and
  Shor]{Agarwal:1989:SUL:70501.70506}
P.~K. Agarwal, M.~Sharir, and P.~Shor.
\newblock Sharp upper and lower bounds on the length of general
  davenport-schinzel sequences.
\newblock \emph{J. Comb. Theory Ser. A}, 52\penalty0 (2):\penalty0 228--274,
  Nov. 1989.
\newblock ISSN 0097-3165.

\bibitem[Agarwal et~al.(2002)Agarwal, Flato, and Halperin]{agarwal2002polygon}
P.~K. Agarwal, E.~Flato, and D.~Halperin.
\newblock Polygon decomposition for efficient construction of minkowski sums.
\newblock \emph{Computational Geometry}, 21\penalty0 (1):\penalty0 39--61,
  2002.

\bibitem[Bedson and Sargent(1996)]{Bedson96}
P.~Bedson and M.~Sargent.
\newblock The development and application of guidance on equipment
  qualification of analytical instruments.
\newblock \emph{Accreditation and Quality Assurance}, 1\penalty0 (6):\penalty0
  265--274, 1996.
\newblock ISSN 0949-1775.

\bibitem[Bennett and Cole(2003)]{bennett2003pharmaceutical}
B.~Bennett and G.~Cole.
\newblock \emph{Pharmaceutical production: an engineering guide}.
\newblock IChemE, 2003.

\bibitem[Bertsimas and Howell(1993)]{bertsimas1993further}
D.~Bertsimas and L.~H. Howell.
\newblock Further results on the probabilistic traveling salesman problem.
\newblock \emph{European Journal of Operational Research}, 65\penalty0
  (1):\penalty0 68--95, 1993.

\bibitem[Chao et~al.(1993)Chao, Golden, and Wasil]{chao93}
I.~M. Chao, B.~Golden, and E.~A. Wasil.
\newblock Theory and methodology: A fast and effective heuristic for the
  orienteering problem.
\newblock \emph{European Journal of Operational Research}, 88:\penalty0
  475--489, 1993.

\bibitem[Chao et~al.(1996)Chao, Golden, and Wasil]{chao96}
I.~M. Chao, B.~Golden, and E.~A. Wasil.
\newblock The team orienteering problem.
\newblock \emph{European Journal of Operational Research}, 88:\penalty0
  474--474, 1996.

\bibitem[Coelho et~al.(2012)Coelho, Laporte, and Cordeau]{coelho2012dynamic}
L.~C. Coelho, G.~Laporte, and J.-F. Cordeau.
\newblock \emph{Dynamic and stochastic inventory-routing}.
\newblock CIRRELT, 2012.

\bibitem[Drexl and Kimms(1997)]{drexl1997lot}
A.~Drexl and A.~Kimms.
\newblock Lot sizing and scheduling -- survey and extensions.
\newblock \emph{European Journal of operational research}, 99\penalty0
  (2):\penalty0 221--235, 1997.

\bibitem[Feillet et~al.(2005)Feillet, Dejax, and Gendreau]{feillet05}
D.~Feillet, P.~Dejax, and M.~Gendreau.
\newblock Traveling salesman problems with profits.
\newblock \emph{Transportation Science}, 39(2):\penalty0 188�205, 2005.

\bibitem[Fisher(2004)]{fisher2004lagrangian}
M.~L. Fisher.
\newblock The lagrangian relaxation method for solving integer programming
  problems.
\newblock \emph{Management science}, 50\penalty0 (12\_supplement):\penalty0
  1861--1871, 2004.

\bibitem[Garey and Johnson(1979)]{Garey:1979:CIG:578533}
M.~R. Garey and D.~S. Johnson.
\newblock \emph{Computers and Intractability: A Guide to the Theory of
  NP-Completeness}.
\newblock W. H. Freeman \& Co., New York, NY, USA, 1979.
\newblock ISBN 0716710447.

\bibitem[Hartl and Romauch(2013)]{DBLP:conf/eurocast/HartlR13}
R.~F. Hartl and M.~Romauch.
\newblock The influence of routing on lateral transhipment.
\newblock In \emph{Computer Aided Systems Theory-EUROCAST 2013}, pages
  267--275. Springer, 2013.

\bibitem[Hartl and Romauch(2015)]{SRLTP15}
R.~F. Hartl and M.~Romauch.
\newblock Notes on the single route lateral transhipment problem.
\newblock \emph{Journal of Global Optimization}, pages 1--26, 2015.
\newblock ISSN 0925-5001.

\bibitem[Hern{\'a}ndez-P{\'e}rez and
  Salazar-Gonz{\'a}lez(2004)]{hernandez2004branch}
H.~Hern{\'a}ndez-P{\'e}rez and J.-J. Salazar-Gonz{\'a}lez.
\newblock A branch-and-cut algorithm for a traveling salesman problem with
  pickup and delivery.
\newblock \emph{Discrete Applied Mathematics}, 145\penalty0 (1):\penalty0
  126--139, 2004.

\bibitem[Huttenlocher et~al.(1992)Huttenlocher, Kedem, and
  Kleinberg]{huttenlocher1992dynamic}
D.~P. Huttenlocher, K.~Kedem, and J.~M. Kleinberg.
\newblock On dynamic voronoi diagrams and the minimum hausdorff distance for
  point sets under euclidean motion in the plane.
\newblock In \emph{Proceedings of the eighth annual symposium on Computational
  geometry}, pages 110--119. ACM, 1992.

\bibitem[Keha et~al.(2006)Keha, de~Farias~Jr, and Nemhauser]{keha2006branch}
A.~B. Keha, I.~R. de~Farias~Jr, and G.~L. Nemhauser.
\newblock A branch-and-cut algorithm without binary variables for nonconvex
  piecewise linear optimization.
\newblock \emph{Operations research}, 54\penalty0 (5):\penalty0 847--858, 2006.

\bibitem[Linderoth and Savelsbergh(1999)]{linderoth1999computational}
J.~T. Linderoth and M.~W. Savelsbergh.
\newblock A computational study of search strategies for mixed integer
  programming.
\newblock \emph{INFORMS Journal on Computing}, 11\penalty0 (2):\penalty0
  173--187, 1999.

\bibitem[Moccia et~al.(2011)Moccia, Cordeau, Laporte, Ropke, and
  Valentini]{moccia2011modeling}
L.~Moccia, J.-F. Cordeau, G.~Laporte, S.~Ropke, and M.~P. Valentini.
\newblock Modeling and solving a multimodal transportation problem with
  flexible-time and scheduled services.
\newblock \emph{Networks}, 57\penalty0 (1):\penalty0 53--68, 2011.

\bibitem[Rainer-Harbach et~al.(2013{\natexlab{a}})Rainer-Harbach, Papazek, Hu,
  and Raidl]{rainer2013balancing}
M.~Rainer-Harbach, P.~Papazek, B.~Hu, and G.~R. Raidl.
\newblock \emph{Balancing bicycle sharing systems: A variable neighborhood
  search approach}.
\newblock Springer, 2013{\natexlab{a}}.

\bibitem[Rainer-Harbach et~al.(2013{\natexlab{b}})Rainer-Harbach, Papazek, Hu,
  and Raidl]{raineretal13}
M.~Rainer-Harbach, P.~Papazek, B.~Hu, and G.~R. Raidl.
\newblock Balancing bicycle sharing systems: A variable neighborhood search
  approach evolutionary computation in combinatorial optimization.
\newblock \emph{Lecture Notes in Computer Science Volume}, 7832:\penalty0 121
  � 132, 2013{\natexlab{b}}.

\bibitem[Ramkumar(1996)]{ramkumar1996algorithm}
G.~Ramkumar.
\newblock An algorithm to compute the minkowski sum outer-face of two simple
  polygons.
\newblock In \emph{Proceedings of the twelfth annual symposium on Computational
  geometry}, pages 234--241. ACM, 1996.

\bibitem[Raviv et~al.(2013)Raviv, Tzur, and Forma]{raviv2013static}
T.~Raviv, M.~Tzur, and I.~A. Forma.
\newblock Static repositioning in a bike-sharing system: models and solution
  approaches.
\newblock \emph{EURO Journal on Transportation and Logistics}, 2\penalty0
  (3):\penalty0 187--229, 2013.

\bibitem[Salomon et~al.(1997)Salomon, Solomon, Van~Wassenhove, Dumas, and
  Dauz{\`e}re-P{\'e}r{\`e}s]{salomon1997solving}
M.~Salomon, M.~M. Solomon, L.~N. Van~Wassenhove, Y.~Dumas, and
  S.~Dauz{\`e}re-P{\'e}r{\`e}s.
\newblock Solving the discrete lotsizing and scheduling problem with sequence
  dependent set-up costs and set-up times using the travelling salesman problem
  with time windows.
\newblock \emph{European Journal of Operational Research}, 100\penalty0
  (3):\penalty0 494--513, 1997.

\bibitem[Tsiligirides(1984)]{tsi84}
T.~Tsiligirides.
\newblock Heuristic methods applied to orienteering.
\newblock \emph{Journal of the Operational Research Society}, 35/9:\penalty0
  797--809, 1984.

\bibitem[Vidal et~al.(2015)Vidal, Maculan, Ochi, and Penna]{Vidal2014}
T.~Vidal, N.~Maculan, L.~Ochi, and P.~Penna.
\newblock {Large neighborhoods with implicit customer selection for vehicle
  routing problems with profits}.
\newblock \emph{Transportation Science, Articles in Advance}, 2015.

\bibitem[Vielma et~al.(2008)Vielma, Keha, and Nemhauser]{vielma2008nonconvex}
J.~P. Vielma, A.~B. Keha, and G.~L. Nemhauser.
\newblock Nonconvex, lower semicontinuous piecewise linear optimization.
\newblock \emph{Discrete Optimization}, 5\penalty0 (2):\penalty0 467--488,
  2008.

\bibitem[Vogel and Mattfeld(2011)]{vogel2011strategic}
P.~Vogel and D.~C. Mattfeld.
\newblock Strategic and operational planning of bike-sharing systems by data
  mining--a case study.
\newblock In \emph{Computational Logistics}, pages 127--141. Springer, 2011.

\end{thebibliography}

\begin{appendices}

\section{Complexity of the ARELTP} \label{sec:complexity}
This section presents the details of the complexity results for the ARELTP (\ref{ARELTP:obj}--\ref{ARELTP:Tmax}). In \cite[][Section 3.3]{DBLP:conf/eurocast/HartlR13} we stated that the ARELTP is solvable in polynomial time if the following three simplifications are premised: 
\begin{enumerate}[nosep]
 \item $f_i$ is linear, \label{cp:1}
 \item the travel costs are not considered ($c_{ij}=0$) and \label{cp:2}
 \item travel time is not considered ($t_{ij}=0$). \label{cp:3}
\end{enumerate}
However, if at least one of these simplifications is revoked, then the ARELTP is NP hard. 
More precisely, the proofs of the following three statements will be provided in this section:
\begin{enumerate}[nosep]
 \item  ARELTP is NP hard if $f_i$ is linear and travel time is not considered (see Lemma \ref{lem:complexity:cij}).
 \item  ARELTP is NP hard if travel costs and travel time are not considered (see Lemma \ref{lem:complexity:flin}).
\item  ARELTP is NP hard if $f_i$ is linear and travel costs are not considered (see Lemma \ref{lem:complexity:tij}).
\end{enumerate}

\begin{lemma} \label{lem:complexity:cij}
The ARELTP is NP hard for $f_i$ linear and $t_{ij}=0$.
\end{lemma}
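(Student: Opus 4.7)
The plan is to reduce from the $0/1$ Knapsack problem, which is (weakly) NP-hard. Given a Knapsack instance with $k$ items of weights $w_1,\ldots,w_k$, values $v_1,\ldots,v_k$, and capacity $W$, I would build an ARELTP instance of length $n=k+2$: a source at position~$1$, a sink at position $k+2$, and one item-customer per intermediate position $i+1$. For each item-customer I set $a_{i+1}=0$, $b_{i+1}=w_i$, and the linear cost $f_{i+1}(y)=-My$ for a sufficiently large constant $M$. I take $Q_{max}=W$, $t_{ij}=0$, and arc costs $c_{ij}=-v_{j-1}+Mw_{j-1}$ for $2\leq j\leq k+1$, with $c_{i,k+2}=0$. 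Both simplifying hypotheses of the lemma ($f_i$ linear and $t_{ij}=0$) hold by construction, and all data have polynomial bit length.

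The argument would then split into three short steps. First, for any fixed set $S\subseteq\{1,\ldots,k\}$ of visited items, the subordinate $y$-LP is solved by pushing each $y_{i+1}$ towards $w_i$ until the running capacity $\sum_{j\le i}y_j\le W$ becomes tight, so $\sum_{i\in S}y_{i+1}^{*}=\min\{\sum_{i\in S}w_i,\,W\}$. Second, the arc-cost sum along any visited sequence telescopes to $\sum_{i\in S}(-v_i+Mw_i)$, because only arcs entering item positions carry that contribution and each visited item is entered exactly once. Combining the two gives
\begin{equation*}
\psi(S)=\sum_{i\in S}(-v_i+Mw_i)\;-\;M\min\!\Big\{\sum_{i\in S}w_i,\,W\Big\}=\begin{cases}-\sum_{i\in S}v_i,&\text{if }\sum_{i\in S}w_i\le W,\\[2pt] -\sum_{i\in S}v_i+M\big(\sum_{i\in S}w_i-W\big),&\text{otherwise.}\end{cases}
\end{equation*}
Choosing $M>\sum_i v_i$ makes every knapsack-infeasible $S$ strictly worse than $S=\emptyset$, so an optimal ARELTP solution is a knapsack-feasible $S$ minimizing $-\sum_{i\in S}v_i$, i.e.\ a Knapsack optimum. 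Since the reduction is polynomial, the lemma follows.

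The main obstacle I anticipate is conceptual rather than technical: because $f_i$ is linear, the subordinate variables $y_i$ are continuous, so one might worry that fractional pickup quantities allow the algorithm to sidestep the $0/1$ structure of Knapsack. The gadget resolves this by \emph{baking a compensating term} $Mw_{j-1}$ into $c_{ij}$, so the $-M\sum y$ contribution exactly cancels $M\sum_{S}w_i$ whenever capacity is respected; the particular fractional assignment of $y$'s then becomes irrelevant to the objective value. Fractionality affects the objective only when capacity is exceeded, and in that regime the huge multiplier $M$ guarantees domination by the empty route. The only remaining care is to check that $M$, and hence the entire instance, can be encoded in polynomial space, which is immediate since $M$ can be taken as $1+\sum_i v_i$.
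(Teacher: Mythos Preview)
Your reduction is correct and the argument is sound. A couple of details you leave implicit---that $f_1\equiv f_{k+2}\equiv 0$ on the singleton domain $\{0\}$, and that the running capacity constraint reduces to the single inequality $\sum_{i\in S}y_{i+1}\le W$ because all pickups are nonnegative---should be stated explicitly, but neither is a gap.

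The paper takes a different and somewhat leaner route: it reduces from \textsc{Partition} rather than \textsc{Knapsack}. For numbers $A_1,\ldots,A_K$ it uses $n=K+1$ locations, a source that can supply at most $\tfrac{1}{2}\sum_k A_k$ units (so $f_1\equiv 0$ on $[0,\tfrac{1}{2}\sum_k A_k]$), deliveries $y_{k+1}\in[-A_k,0]$ with the simple linear cost $f_{k+1}(y)=y$, and arc cost $c_{i,k+1}=A_k$ independent of the origin. Visiting $k+1$ and delivering $z_k=-y_{k+1}$ yields net contribution $A_k-z_k\ge 0$, so the objective is zero iff one can fully serve a subset $\Lambda$ with $\sum_{k\in\Lambda}A_k=\tfrac{1}{2}\sum_k A_k$. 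No big-$M$ is needed because the arc cost exactly matches the maximum possible delivery revenue; fractional deliveries are automatically dominated. Your construction buys a reduction from the slightly more general \textsc{Knapsack}, and your penalty term $Mw_{j-1}$ in $c_{ij}$ plays the same structural role as the paper's matched $A_k$, but at the price of introducing and bounding a large constant. Both arguments establish only weak NP-hardness.
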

\begin{proof}
Consider the following reduction of balanced number partitioning (or number partition) \cite{Garey:1979:CIG:578533} for the numbers $A_k$, $k = 1,\ldots,K$ to the ARELTP:
\begin{itemize}[nosep]
\item $n = K+1$
\item $f_1(y_1) = 0$ for $y_1 \in [0,\frac{\sum_{k=1}^K A_k}{2}]$ 
\item $f_{k+1}(y_{k+1}) = y_{k+1}$ for $y_{k+1} \in [-A_k,0]$ 
\item $c_{i,{1+k}} = A_k$ for  $k = 1,\ldots,K$ and  $i = 1,\ldots,k$.
\end{itemize}
The corresponding objective is:
$$ \min_{-A_i<y_i<0} \sum_{i < 1+k} A_k x_{i,1+k} +  \sum_{k} y_k $$ 
And for the transformation $z_k \gets -y_{1+k}$ the objective has the following form:
$$ \max_{0<z_k'<A_k} \sum_{i < 1+k} (z_k - A_k) x_{i,1+k} $$
If $\Lambda \subset \{1,\ldots,K\}$ is a solution to the number partitioning problem,
then $z_k = A_k$ for $k \in \Lambda$ and zero otherwise, therefore a solution for number partition is optimal. On the other hand an optimal solution of this problem with objective zero corresponds to a solution for the number partition problem.
\end{proof}

\begin{lemma} \label{lem:complexity:flin}
The ARELTP is NP hard for $c_{ij}=0$ and $t_{ij}=0$.
\end{lemma}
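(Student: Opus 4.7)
The plan is to reduce a classical NP-complete problem to the ARELTP with $c_{ij}=0$ and $t_{ij}=0$. Since Lemma~\ref{lem:complexity:cij} already covers the case of linear $f_i$, the present lemma must exploit the piecewise linear (and non-convex) structure of the profit change functions. A natural target is \textsc{Partition}: given positive integers $A_1,\ldots,A_K$ with even sum $S$, decide whether there is a subset summing to $S/2$.

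First I would build an ARELTP instance with $n=K+1$ locations, $Q_{max}=S/2$, and zero travel costs and times. For each $k \leq K$ set $[a_k,b_k]=[0,A_k]$ and choose a lower semicontinuous piecewise linear profit change $f_k$ that encodes a binary choice: $f_k(0)=f_k(A_k)=0$ and $f_k(y)=M$ on the open interval $(0,A_k)$, with $M$ a large constant (say $M=2$). This is a legal lsc piecewise linear function under the paper's sequence-of-points representation, with degenerate segments encoding jumps at $0$ and $A_k$ and the lsc convention selecting the lower value at each jump point. For the terminal customer $n=K+1$ set $[a_{K+1},b_{K+1}]=[-S/2,0]$ and define $f_{K+1}$ analogously with $f_{K+1}(0)=0$, $f_{K+1}(-S/2)=-1$ and $f_{K+1}(y)=M$ on $(-S/2,0)$, so that the only way to obtain negative profit is to withdraw exactly $S/2$ at the final step.

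Next I would show that deciding whether the optimal ARELTP value is at most $-1$ is equivalent to \textsc{Partition}. The argument has two parts. First, any solution in which some $y_k$ lies strictly inside $(0,A_k)$, or $y_{K+1}$ strictly inside $(-S/2,0)$, incurs cost at least $M-1>0$; hence every optimal solution has $y_k \in \{0,A_k\}$ for $k\leq K$ and $y_{K+1} \in \{0,-S/2\}$. Second, the cumulative capacity constraints $0 \leq \sum_{j \leq i} y_j \leq S/2$ force $y_{K+1}=-S/2$ to be feasible only when $\sum_{k=1}^K y_k = S/2$, i.e., when $\Lambda = \{k : y_k=A_k\}$ is a \textsc{Partition} witness. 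Thus the optimum equals $-1$ if the instance is YES, and $0$ otherwise. Membership in NP follows since the optimal $y$ takes its values at the breakpoints of the $f_i$ once the (combinatorial) route is fixed, so optimal solutions have polynomially bounded rational encoding.

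The main obstacle I anticipate is justifying that the ``jump'' profit function used in the reduction genuinely fits the paper's piecewise linear lower semicontinuous framework. The construction relies on $f_k$ taking value $0$ at the isolated points $\{0,A_k\}$ and $M$ on the open interval in between, which requires the extended sequence-of-points description with coincident abscissae permitted in Section~\ref{sec:MIPsolverformualtion}; one must state the lsc convention explicitly so that the values at the jump points are unambiguous. Once this representational point is settled, the combinatorial core of the proof reduces to the short case analysis above, and polynomial-time computability of the reduction is immediate from the fact that all data are of size polynomial in the \textsc{Partition} input.
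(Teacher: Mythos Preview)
Your reduction is correct and rests on the same key idea as the paper---exploiting discontinuous (lower semicontinuous) piecewise linear profit functions to encode a binary choice at each location---but the concrete construction differs. The paper reduces from \textsc{Knapsack}: node~$1$ is a continuous source with $y_1\in[0,W]$ and $f_1\equiv 0$, while each subsequent node carries a single-jump step function $f_{1+k}(y)=0$ for $y>-w_k$ and $f_{1+k}(y)=-v_k$ otherwise; after the substitution $z_k=-y_{1+k}$ the problem becomes $\max\sum_{z_k\ge w_k} v_k$ subject to $\sum_k z_k\le W$, i.e.\ knapsack. Your route via \textsc{Partition} uses two-jump ``notch'' functions (value $0$ at the endpoints, $M$ on the open interior) at every pickup node and a dedicated sink, and relies on $Q_{\max}=S/2$ to pin the total pickup to exactly $S/2$. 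Both constructions are polynomial and valid; the paper's is slightly leaner (single-jump functions, no degenerate point-segments needed), and the authors note in passing that a minor modification even yields NP-hardness with \emph{continuous} $f_i$, a strengthening your two-sided notch construction does not immediately give. Your worry about representing jump functions in the lsc framework is not an obstacle: the paper's own proof already uses a step function with a jump, so such $f_i$ are explicitly admitted.
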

\begin{proof}
A reduction of the knapsack problem ($v_k,w_k,W$) to the ARELTP will be used in the proof.
Node $1$ is a source with $y_1 \in [0,W]$ and $f_1$ is identical to zero.  
The following cost change functions $f_{1+k}$ are proposed for the transformation:
$$f_{1+k}(y_{1+k}) = \begin{cases} 
	0 \text{, for } y_{1+k} > -w_k  \\
	-v_k \text{, else }
\end{cases}
$$
Then, the corresponding ARELTP has the following equivalent form:
\begin{align}
\min & \sum_{y_{1+k} \leq w_k} - v_{k} \\
 s.t. \,	& y_1 + \sum_{j \leq k} y_{1+j} \geq 0\\
& 0 \leq y_1 \leq W \\ 
	& y_{1+k} \leq 0 
\end{align}
The transformation $z_k \gets -y_{1+k}$ results in the following:
\begin{align}
\max & \sum_{ z_{k} \geq w_k} v_{k} \\
 s.t.	\,&\sum_{k} z_{k} \leq y_1 \leq W \\
  &  z_{k} \geq 0 
\end{align}
Since this problem is equivalent to the knapsack problem the proof is complete.
\end{proof}
A simple modifications of the proof can be used to prove that ARELTP is NP hard for $c_{ij}=0$ and $t_{ij}=0$, and $f_{i}$ continuous.
 
\begin{lemma} \label{lem:complexity:tij}
The ARELTP is NP hard for $f_i$ linear and $c_{ij}=0$.
\end{lemma}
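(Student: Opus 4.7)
The plan is to mirror the proof of Lemma~\ref{lem:complexity:cij}, simply transferring the role of the travel cost $c_{ij}$ to the travel time $t_{ij}$, and using the duration budget $T_{max}$ in place of the implicit cost budget. Specifically, I will reduce \emph{number partition} to the ARELTP with linear $f_i$ and $c_{ij}=0$. Given positive integers $A_1,\dots,A_K$ with $\sum_k A_k=2S$, the partition question asks whether there exists $\Lambda\subseteq\{1,\dots,K\}$ with $\sum_{k\in\Lambda} A_k = S$.

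The construction is as follows. Take $n=K+2$. Let node $1$ be a pure source with $f_1\equiv 0$ on $[a_1,b_1]=[0,2S]$, and node $K+2$ be a pure sink with $a_{K+2}=b_{K+2}=0$ and $f_{K+2}\equiv 0$. For each $k=1,\dots,K$, equip the intermediate node $k+1$ with the linear profit function $f_{k+1}(y_{k+1}) = y_{k+1}$ on $[a_{k+1},b_{k+1}]=[-A_k,0]$. Set $c_{ij}=0$ everywhere, and define travel times by $t_{i,k+1}=A_k$ for every $i<k+1\le K+1$ and $t_{i,K+2}=0$ for every $i<K+2$. Finally set $Q_{max}=2S$ and $T_{max}=S$.

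The key observation is that, in this instance, the objective value of the ARELTP coincides with $-\max\{\sum_{k\in\Lambda}A_k : \Lambda\subseteq\{1,\dots,K\},\ \sum_{k\in\Lambda}A_k\le S\}$. Indeed, once a subset $\Lambda$ of intermediate nodes is chosen for the subroute, the linearity of $f_{k+1}$ together with $a_{k+1}\le 0\le b_{k+1}$ makes it optimal to set $y_{k+1}=-A_k$ at every visited $k+1\in\Lambda$ (giving a contribution $-A_k$ to the minimization), set $y_1=\sum_{k\in\Lambda}A_k\le 2S$ at the source, and $y_{K+2}=0$. The load profile $y_1,\ y_1-A_{k_1},\dots,0$ then satisfies $0\le\sum_{j\le i}y_j\le Q_{max}$ by construction; the time used along the subroute equals exactly $\sum_{k\in\Lambda}A_k$ because $t_{\cdot,K+2}=0$; and the constraint $T_{max}=S$ becomes $\sum_{k\in\Lambda}A_k\le S$. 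Consequently, the ARELTP achieves objective value $-S$ iff partition has a solution, which is a polynomial-time reduction.

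I expect no serious obstacle here, only a careful verification that the three global requirements of the ARELTP formulation are simultaneously respected. In particular I will have to check that: (i) $a_i\le 0\le b_i$ at every node (true since $a_{k+1}=-A_k\le 0\le b_{k+1}=0$ and $a_1=0\le 0\le b_1=2S$); (ii) the capacity constraint (\ref{ARELTP:Qmax}) is automatically satisfied by the chosen $y_1$; and (iii) the argument does not rely on triangle inequalities, so the free choice $t_{i,K+2}=0$ irrespective of $i$ is admissible. Once these are confirmed, NP-hardness of partition transfers to the ARELTP and the lemma follows.
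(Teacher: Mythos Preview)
Your proof is correct, but it takes a different route from the paper's. The paper reduces the \emph{knapsack} problem to this variant of the ARELTP: it sets $t_{i,k+1}=w_k$, $T_{max}=W$, $f_1\equiv 0$ on $[0,n]$, and $f_{k+1}(y)=v_k y$ on $[-1,0]$, then argues that an optimal solution necessarily pushes each active $y_{k+1}$ to $-1$, recovering the binary knapsack objective. Your construction instead reduces \emph{number partition}, directly transplanting the proof of Lemma~\ref{lem:complexity:cij} by swapping the role of $c_{ij}$ for $t_{ij}$ and replacing the implicit cost budget by the explicit duration budget~$T_{max}$.

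Your approach has two modest advantages. First, the parallel with Lemma~\ref{lem:complexity:cij} makes the argument essentially a one-line observation rather than a fresh reduction. Second, by appending a dummy sink $K+2$ with $t_{\cdot,K+2}=0$, you cleanly decouple the forced terminal visit from the resource accounting; in the paper's knapsack reduction the terminal node $K{+}1$ is always visited and thus $w_K$ is always consumed, a wrinkle the paper does not address explicitly. On the other hand, the paper's knapsack reduction is slightly more informative in that it exhibits hardness even when all $[a_i,b_i]$ intervals for intermediate nodes are identical ($[-1,0]$), whereas your partition instance uses item-dependent lower bounds $a_{k+1}=-A_k$. Either reduction suffices for the lemma.
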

\begin{proof}
Again, a reduction of the knapsack problem ($v_k,w_k,W$) to the ARELTP is possible. Set $T_{max} = W$ and 
$t_{i,k+1} = w_k$; And let $f_1=0$ for $y_1 \in [0,n]$ and $f_{k+1}(y_{1+k}) = v_k y_{k+1}$ for $(y_{k+1} \in [-1,0])$ then the ARELTP has the following form:
\begin{align}
\min & \sum_{ k} y_{1+k} v_{k} \\
 s.t. \,	& y_1 + \sum_{j \leq k} y_{1+j} \geq 0\\
& \sum_{y_{1+k} < 0} w_{k} \leq W \\
& 0 \leq y_1 \leq n \\ 
	& -1 \leq -y_{1+k} \leq 0 
\end{align}
The transformation $z_{k} = -y_{1+k}$ results in:
\begin{align}
\max & \sum_{ k} z_{k} v_{k} \\
 s.t. \,	& \sum_{k} z_{k} \leq  y_1 \leq n\\
& \sum_{z_{k} > 0} w_{k} \leq W  \label{f:comp:W}\\
	& 0 \leq z_{k} \leq 1 
\end{align}
Because of (\ref{f:comp:W}) the weight $w_k$ is activated for $z_k>0$ and therefore $z_k=1$ is the optimal choice when maximizing $\sum_{k} z_{k} v_{k}$; hence $z_{k}$ is binary and the problem is equivalent to the knapsack problem.
\end{proof}

\section{Details on the DP approach for the case where the duration constraint is not considered}

The DP approach for the ARELTP is based on two basic operations; the envelope of functions and the superposition. The calculation of the envelope will be described in Section \ref{appendix:envelope} and the superposition of piecewise linear functions as defined in the main paper (\ref{eq:superos}) will be described in Section \ref{appendix:superpos}. The relatedness of feasible states in dynamic programming and Minkowski sums will be established and cases were the performance can be improved will be discussed. 
In Section \ref{appendix:complexityDP}, the complexity for resolving one stage of the dynamic program is discussed and a reduction of a sorting problem to calculating envelopes indicates that the complexity of the proposed Algorithm is reasonable.  

\subsection{Envelope for piecewise linear functions} \label{appendix:envelope}
The envelope is the minimum of a finite set of functions $\mathcal{G}=\{g_1,g_2,\ldots g_N\}$ and the calculation is accomplished in a divide an conquer manner. In the lowest level, 
the set of functions is decomposed into $\lfloor\frac{N}{2}\rfloor$ pairs plus at most one functions. For each pair the envelope is calculated and replaces the corresponding pair. Therefore, there are at most $\lceil\frac{N}{2}\rceil$ functions in the next level.
According to that, the number of functions in level $l=\lceil\log(N)\rceil+1$ is one, and the corresponding function is the envelope of $\mathcal{G}$. \\

Therefore, to apply this method for a set of piecewise linear functions it is sufficient define the envelope of two piecewise linear functions $f$ and $g$. The union of the domains of $f$ and $g$ can be decomposed into intervals such that each interval has exactly one of the following properties:
\begin{enumerate}[nosep]
\item $f$ is defined, but not $g$ 
\item $g$ is defined, but not $f$ 
\item Both functions are defined
\end{enumerate} 
Such a decomposition can be found by simply traversing the segments of $f$ and $g$ in $\mathcal{O}(n+m)$ and the envelope for each interval is computable in constant time. \\

In each level $l$, the envelopes of pairs of piecewise linear functions is calculated. Based on \cite{agarwal2000davenport, Agarwal:1989:SUL:70501.70506,huttenlocher1992dynamic}, an upper bound for the number of segments in each level and a statement about the runtime complexity can be derived.

More precisely, it is possible to employ results for Davenport--Schnizel sequences to get an upper bound on the number of edges. For a review on Davenport--Schnizel Sequences, see \cite{agarwal2000davenport}. The main results for the continuous case can be found in \cite{Agarwal:1989:SUL:70501.70506} and results for calculating the the envelope of a family of piecewise linear functions are taken from \cite{huttenlocher1992dynamic}, stating that the number of segments is bounded by $\mathcal{O}(M\alpha(M))$ where $\alpha$ denotes the inverse Ackermann function and $M$ is the total number of segments in the first level. It is important to note that this result is only dependent on the number of edges and not dependent on the number of involved functions. 

Therefore the number of edges in each level is also bounded by $M\alpha(M)$, and for estimating the worst case complexity, the calculation of the levels are considered as decoupled. More precisely, for level $l$ let $N_l$ be the number of involved functions. 
To calculate the functions for the next level, the envelope of $\frac{N_l}{2}$ pairs is computed.
For each pair the complexity is linear with the number of involved edges, therefore the computational complexity for each level is bounded by $\mathcal{O}(M\alpha(M))$, with makes $O(M\alpha(M)\log(N))$ in total for all levels.

\subsection{Superposition for piecewise linear functions} \label{appendix:superpos}

A procedure to calculate the superposition $V \boxplus f$ will be presented. The functions $V$ and $f$ are represented by sequences of segments, where each segment is a linear function defined on an interval.
Throughout the paper, we assume that the profit functions are lower semicontinuous. Since $f_i$ are piecewise linear the following is true:
\begin{equation}
 \liminf_{y\to y_0} f_i(y) = f(y_0) \nonumber
\end{equation}

Extending the domains of the segments to closed intervals does not lead to ambiguities, since dominated values do not occur in the envelope. 
Therefore the following notation will be used; the value function $V$ is represented by the segments $v_l^V: v_l(q) = k_l^V q + d_l^V$ for $q \in [a_{l-1}^V,a_{l}^V]$; and the revenue change function $f$ is represented by the segments $v_{l}^f: v_{l}(y) = k_{l}^f y + d_{l}^f$ for $y\in [a_{l-1}^f$ and $a_{l}^f]$. 

In order to calculate the superposition of $V$ and $f$, the notion of feasible state and Minkowski addition will be used. A feasible state with respect to $V$ and $f$ is a point $(q,z)$ such that $q-y \in \mathcal{D}(V)$,  $y \in \mathcal{D}(f)$ and $z = V(q-y) + f(y)$. The set of all feasible states is called $\Gamma_{V,f}$ and the superposition is the envelope of $\Gamma_{V,f}$, or $\Gamma$ for short. This set can also expressed by the envelope of a Minkowski sum:  
$$ \Gamma = \left\{\begin{pmatrix} q \\ V(q) \end{pmatrix}: q \in \mathcal{D}(V)\right\}+\left\{\begin{pmatrix} y \\ f(y) \end{pmatrix}:y \in \mathcal{D}(f)\right\} $$

Efficient algorithms to compute the border of Minkowski sums are known for polygons (cf. \cite{agarwal2002polygon},\cite{ramkumar1996algorithm}) and they can be applied to calculate the superposition. However, the polygons used to represent piecewise linear functions have specific properties, and the worst case complexity can be improved. More precisely, let $|V|$ be the number of segments in $V$ and let $|f|$ be the number of segments in $f$, then $\mathcal{O}(|V||f|\log(|V||f|))$ can be improved to $\mathcal{O}(|V||f|\log(|f|))$.
 
In order to present the procedure, the Minkowski sum will be decomposed using the segments of $V$ and $f$, therefore  $V \boxplus f$ can be represented in the following way: 
\begin{equation}
V \boxplus f  = \min_{l,l'} (v_l^V \boxplus v_{l'}^f) = \text{env} \left( \bigcup_{l,l'}\Gamma_{l,l'} \right)\text{, where } \Gamma_{l,l'} = v_l^V + v_{l'}^f   \nonumber
\end{equation}

In the following, a representation of the set of feasible states $ \Gamma_{l,l'}$ by corner points is established in Lemma \ref{lem2}. It is the basis for the calculation of the superposition for piecewise linear functions, and it will be used to prove certain properties of the value function.

\begin{lemma} \label{lem2}
The set of feasible states $\Gamma_{l,l'}$ is the convex hull of the corner points $P_1^{(l,l')}$, $P_2^{(l,l')}$, $P_3^{(l,l')}$ and $P_4^{(l,l')}$:
$$ P_1^{(l,l')} = \begin{pmatrix} a_{l-1}^V + a_{l'-1}^f \\ v_l^V(a_{l-1}^V) - v_{l'}^f(a_{l'-1}^f)\end{pmatrix} \quad P_2^{(l,l')} = \begin{pmatrix} a_{l}^V + a_{l'-1}^f \\ v_l^V(a_{l}^V) - v_{l'}^f(a_{l'-1}^f) \end{pmatrix}  $$ 
$$ P_3^{(l,l')} = \begin{pmatrix} a_{l-1}^V + a_{l'}^f \\ v_l^V(a_{l-1}^V) - v_{l'}^f(a_{l'}^f) \end{pmatrix} \quad P_4^{(l,l')} = \begin{pmatrix} a_{l}^V + a_{l'}^f \\ v_l^V(a_{l}^V) - v_{l'}^f(a_{l'}^f)\end{pmatrix}$$
\end{lemma}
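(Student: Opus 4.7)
The plan is to view $\Gamma_{l,l'}$ as the Minkowski sum in $\mathbb{R}^2$ of two line segments --- the graphs of $v_l^V$ on $[a_{l-1}^V, a_l^V]$ and of $v_{l'}^f$ on $[a_{l'-1}^f, a_{l'}^f]$ --- and then invoke the classical fact that the Minkowski sum of two line segments is a (possibly degenerate) parallelogram, equal to the convex hull of the four pairwise sums of endpoints.

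Concretely, I would set $A_1 = (a_{l-1}^V, v_l^V(a_{l-1}^V))$, $A_2 = (a_l^V, v_l^V(a_l^V))$, and likewise $B_1, B_2$ for the two endpoints of the graph of $v_{l'}^f$, so that $S_V = \{(1-s)A_1 + sA_2 : s \in [0,1]\}$ and $S_f = \{(1-t)B_1 + tB_2 : t \in [0,1]\}$. The Minkowski-sum representation of $\Gamma$ stated immediately before the lemma yields $\Gamma_{l,l'} = S_V + S_f$, and the four candidate points $P_1,P_2,P_3,P_4$ in the statement are then read off as the pairwise sums $A_i + B_j$ for $i,j \in \{1,2\}$.

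To establish $\Gamma_{l,l'} \subseteq \mathrm{conv}\{P_1,\ldots,P_4\}$ I would use the identity
\[
\bigl((1-s)A_1+sA_2\bigr)+\bigl((1-t)B_1+tB_2\bigr) = (1-s)(1-t)(A_1{+}B_1) + s(1-t)(A_2{+}B_1) + (1-s)t(A_1{+}B_2) + st(A_2{+}B_2),
\]
whose coefficients are nonnegative and sum to one, hence exhibit the left-hand side as a convex combination of the four $P_k$. The reverse inclusion is immediate: each $A_i+B_j$ already lies in $\Gamma_{l,l'}$ (take $s,t \in \{0,1\}$), and $\Gamma_{l,l'}$, as the Minkowski sum of two convex sets, is itself convex, so it contains the convex hull of these four points.

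The entire argument is essentially the single convex-combination identity above, so there is no genuine obstacle. The only care required is bookkeeping: one must verify that the explicit second coordinates of the $P_k$ in the statement agree, after substitution, with the values $v_l^V(\cdot)+v_{l'}^f(\cdot)$ prescribed by the Minkowski-sum convention, and one should briefly observe that when $S_V$ and $S_f$ happen to be parallel the sum degenerates to a line segment --- this segment is still the convex hull of the four $P_k$, with two of them then lying on the segment between the other two, so the statement of the lemma remains valid.
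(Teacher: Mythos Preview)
Your proposal is correct and follows essentially the same approach as the paper: the paper's proof consists of a single displayed formula expressing $\Gamma_{l,l'}$ as the affine image of the rectangle $[a_{l-1}^V,a_l^V]\times[a_{l'-1}^f,a_{l'}^f]$, which is exactly your Minkowski-sum-of-two-segments observation, and your explicit convex-combination identity supplies the detail the paper omits. Your remark about checking the signs in the second coordinates is apt, since the paper's displayed formula and the stated corner points are not entirely consistent in sign.
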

\begin{proof}
$$ \Gamma_{l,l'} = \left\{ \begin{pmatrix} 0 \\ d_l^V -  d_{l'}^f \end{pmatrix} + q \begin{pmatrix} 1 \\ k_{l}^V \end{pmatrix}  - y\begin{pmatrix} 1\\ k_{l'}^f \end{pmatrix} : \quad q \in  [a_{l-1}^V,a_{l}^V] \quad  y \in  [a_{l'-1}^f,a_{l'}^f]  \right\} $$ 
\end{proof}

A consequence of Lemma \ref{lem2} is that the envelope of $\Gamma_{l,l'}$ is the envelope of the following segments:
\begin{align}
\omega_1^{(l,l')} = \overline{P_1^{(l,l')} P_2^{(l,l')}} \quad
\omega_2^{(l,l')} = \overline{P_1^{(l,l')} P_3^{(l,l')}} \label{def:omega12} \\
\omega_3^{(l,l')} = \overline{P_2^{(l,l')} P_4^{(l,l')}} \quad 
\omega_4^{(l,l')} = \overline{P_3^{(l,l')} P_4^{(l,l')}}, \label{def:omega34} 
\end{align}
and if the slope of $\omega_1^{(l,l')}$ is smaller than the slope of $\omega_2^{(l,l')}$ ($k_l < k'_{l'}$) then the minimum of $\Gamma_{l,l'}$ is defined by $\omega_1^{(l,l')}$ followed by $\omega_3^{(l,l')}$, else it is defined by $\omega_2^{(l,l')}$ followed by $\omega_4^{(l,l')}$
 
In the following, a definition of $\Gamma$ and the dominance of feasible states in $\Gamma$ is formulated.

\begin{definition}
For each segment in $f$, the corresponding set of feasible states \mbox{$\Gamma_{l'}$ is defined as}
\begin{equation}
 \Gamma_{l'} = \bigcup_{l} \Gamma_{l,l'}. \label{formula:gammal1} 
\end{equation}
The set $\Gamma_{l'}^+$ contains all feasible states that correspond to the first $l'$ segments of $f$:
\begin{equation}
 \Gamma_{l'}^+ = \bigcup_{k' \leq l'}\Gamma_{k'}. \label{formula:gammal2}
\end{equation}
\end{definition}

Obviously, $\Gamma$ can be constructed by successively adding segments of $f$ by using the following recursion formula: $\Gamma_{l'+1}^+ = \Gamma_{l'}^+ \cup \Gamma_{l'+1} $. According to (\ref{def:omega12}) and (\ref{def:omega34}), all segments of $\Gamma_{l'}$ can be grouped in the following way:
\begin{align}
\scalebox{1.0}{
$
\begin{array}{lllllllll}
\gamma_1^{(l')}: & &\omega_1^{(1,l')},& \omega_1^{(2,l')},&\omega_1^{(3,l')},& \ldots & \omega_1^{(l_{max}^V-1,l')},& \omega_1^{(l_{max},l')},& \omega_3^{(l_{max},l')} \\
\gamma_2^{(l')}: &\omega_2^{(1,l')},& \omega_4^{(1,l')},&\omega_4^{(2,l')},&\omega_4^{(3,l')},& \ldots & \omega_4^{(l_{max}-1,l')},& \omega_4^{(l_{max},l')} \\
\gamma_3^{(l')}: & &\omega_3^{(1,l')},& \omega_3^{(2,l')},&\omega_3^{(3,l')},& \ldots & \omega_3^{(l_{max}^V-1,l')} \\ 
\gamma_4^{(l')}: & & &\omega_2^{(2,l')},& \omega_2^{(3,l')},& \ldots & \omega_2^{(l_{max}-1,l')},& \omega_2^{(l_{max}^V,l')} 
\end{array}
$
} 
\label{f:gammaRep}
\end{align}
Note that $\omega_1^{(l,l')}$ and $\omega_1^{(l+1,l')}$ are defined for consecutive intervals. Therefore $\omega_1^{(1,l')}, \ldots ,\omega_1^{(l_{max}^V,l')}$ can be used to define a function. The last segment of $\gamma_1^{(l')}$ is $\omega_3^{(l_{max}^V,l')}$, and it is defined for an interval that follows the segment $\omega_1^{(l_{max}^V,l')}$. Therefore, $\gamma_1^{(l')}$ defines a function that starts in $P_1^{(1,l')}$ and ends in $P_4^{(l_{max}^V,l')}$. Analogously, $\gamma_2^{(l')}$ defines a function that starts in $P_1^{(1,l')}$ and ends in $P_4^{(l_{max}^V,l')}$. The segments in the groups $\gamma_3^{(l')}$,  $\gamma_4^{(l')}$ are collinear, but there may be gaps or overlapping in the domains.  \\

Now, a definition of dominance with respect to feasible states will be stated: 

\begin{definition}
A point $(q,z)$ (weakly) dominates a point $(q,z')$ if $z \leq z'$. A set of points $A$ dominates a set of points $B$ if for each point in $(q,z') \in B$ a dominating point in $A$ exists, in symbols: $(q,z) \prec (q,z')$ and  $A \prec B$.
\end{definition}

For instance, if $V$ is monotonically increasing, then $\omega_3^{(l,l')}$ dominates $\omega_2^{(l+1,l')}$. In short,  $\omega_3^{(l,l')} \prec \omega_2^{(l+1,l')}$. The following two lemmas will be useful to prove properties of the superposition $V \boxplus f$:

\begin{lemma} \label{lem:Vmon}
If $V$ and $V'$ are monotonically increasing and $\mathcal{D}(V') \subset \mathcal{D}(V)$, then $\min(V,V')$ is monotonically increasing.
\end{lemma}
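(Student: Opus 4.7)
My plan is to prove the lemma by a direct pointwise comparison on the natural domain of $\min(V,V')$. Under the hypothesis $\mathcal{D}(V') \subset \mathcal{D}(V)$, the minimum is defined exactly on $\mathcal{D}(V) \cap \mathcal{D}(V') = \mathcal{D}(V')$, so I will let $W \coloneqq \min(V,V')$ and work with its values on this common domain. I will pick two arbitrary arguments $q_1 < q_2$ in $\mathcal{D}(V')$ and show $W(q_1) \leq W(q_2)$, which is exactly the definition of monotonically increasing.

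The core of the proof is a short case distinction on which of the two functions realises the minimum at the larger point $q_2$. If $W(q_2) = V(q_2)$, then by the definition of minimum and the monotonicity of $V$,
\[
W(q_1) \;=\; \min\{V(q_1),V'(q_1)\} \;\leq\; V(q_1) \;\leq\; V(q_2) \;=\; W(q_2).
\]
Symmetrically, if $W(q_2) = V'(q_2)$, monotonicity of $V'$ gives
\[
W(q_1) \;\leq\; V'(q_1) \;\leq\; V'(q_2) \;=\; W(q_2).
\]
Since these two cases are exhaustive (one of the two functions always attains the minimum at $q_2$), the inequality $W(q_1) \leq W(q_2)$ holds in general, and the lemma follows.

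I do not anticipate any real obstacle here. The only subtle point is to make the domain convention explicit at the start of the proof: the containment $\mathcal{D}(V') \subset \mathcal{D}(V)$ is precisely what guarantees that both $V(q_i)$ and $V'(q_i)$ are well-defined for $q_1, q_2 \in \mathcal{D}(V')$, so that the monotonicity hypotheses on $V$ and $V'$ can each be applied to the pair $(q_1,q_2)$. Once this is clarified, the argument is two lines of straightforward algebra.
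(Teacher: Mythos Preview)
Your pointwise argument for the case where both $q_1$ and $q_2$ lie in $\mathcal{D}(V')$ is correct and is essentially the same idea the paper uses: bound $\min(V,V')(q_1)$ above by each of $V(q_1)$ and $V'(q_1)$, then invoke the respective monotonicities. The paper phrases it without a case split at $q_2$ (it records both inequalities $\min(V,V')(q_1)\le V(q_2)$ and $\min(V,V')(q_1)\le V'(q_2)$ and takes the minimum on the right), but this is only a cosmetic difference.

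There is, however, a genuine gap in your domain convention. You assume that $\min(V,V')$ is defined on the intersection $\mathcal{D}(V)\cap\mathcal{D}(V')=\mathcal{D}(V')$, but in this paper the pointwise minimum of two piecewise-linear functions is taken on the \emph{union} of their domains (this is stated explicitly later, e.g.\ ``the domain of $\min(V',W')$ is the union of the domains of $V'$ and $W'$''). Under the hypothesis $\mathcal{D}(V')\subset\mathcal{D}(V)$ this union is all of $\mathcal{D}(V)$, and at a point $q\in\mathcal{D}(V)\setminus\mathcal{D}(V')$ one has $\min(V,V')(q)=V(q)$. The paper's proof therefore treats an additional case that yours omits, namely $q_1\in\mathcal{D}(V')$ and $q_2\in\mathcal{D}(V)\setminus\mathcal{D}(V')$, for which
\[
\min(V,V')(q_1)\;\le\;V(q_1)\;\le\;V(q_2)\;=\;\min(V,V')(q_2).
\]
Once you adopt the paper's domain convention, this extra case (and the trivial case where both points lie in $\mathcal{D}(V)\setminus\mathcal{D}(V')$) must be added; your current write-up, as it stands, proves monotonicity only on the smaller set $\mathcal{D}(V')$.
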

\begin{proof}
Suppose that $q_1 < q_2$ and $q_1,q_2 \in \mathcal{D}(V) \cap \mathcal{D}(V')$, then:
$$  \min(V,V') (q_1) \leq V(q_1) \leq V(q_2)  \quad \text{and} \quad \min(V,V') (q_1) \leq V'(q_1) \leq V'(q_2).  $$ 
Therefore:
$$ \min(V,V') (q_1) \leq \min(V,V') (q_2)   $$
If $q_1 \in \mathcal{D}(V) \cap \mathcal{D}(V')$ and  $q_2 \in \mathcal{D}(V) \setminus \mathcal{D}(V')$, we get:
$$ \min(V,V') (q_1) \leq V(q_1) \leq V(q_2) = \min(V,V') (q_2). $$ 
\end{proof}

\begin{lemma} \label{lem:Vcont}
If $V$ and $V'$ are continuous and $\mathcal{D}(V') = \mathcal{D}(V)$, then $\min(V,V')$ is also continuous.
\end{lemma}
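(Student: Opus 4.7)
My plan is to reduce the continuity of $\min(V,V')$ to the (assumed) continuity of $V$ and $V'$ via the elementary inequality
\[
|\min(a_1,b_1) - \min(a_2,b_2)| \le \max(|a_1 - a_2|,\,|b_1 - b_2|),
\]
valid for any real numbers $a_1,a_2,b_1,b_2$. Because $\mathcal{D}(V) = \mathcal{D}(V')$, both functions are defined at every point where $\min(V,V')$ is defined, so no boundary complication arises and I can apply this inequality pointwise.

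The proof itself would proceed by fixing an arbitrary $q_0 \in \mathcal{D}(V)$ and an arbitrary $\varepsilon > 0$. By continuity of $V$ and $V'$ at $q_0$, I choose $\delta_1, \delta_2 > 0$ such that $|V(q) - V(q_0)| < \varepsilon$ for $|q-q_0|<\delta_1$ and $|V'(q) - V'(q_0)| < \varepsilon$ for $|q-q_0|<\delta_2$, and then set $\delta = \min(\delta_1,\delta_2)$. Applying the inequality above with $a_1 = V(q)$, $a_2 = V(q_0)$, $b_1 = V'(q)$, $b_2 = V'(q_0)$ immediately gives $|\min(V,V')(q) - \min(V,V')(q_0)| < \varepsilon$ whenever $|q - q_0| < \delta$, which is the desired continuity.

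The only real content is justifying the elementary inequality. I would verify it by case analysis on which of $a_i, b_i$ realizes the minimum: in each of the four cases, the left-hand side equals $|a_i - a_j|$ or $|b_i - b_j|$ or can be bounded by one of them using $\min(a_2,b_2) \le a_2$ (resp. $\le b_2$). This is routine and I do not expect any obstacle here; the lemma is essentially a standard fact that the pointwise minimum of finitely many continuous functions is continuous, stated here in the form needed to complement Lemma~\ref{lem:Vmon} for the subsequent analysis of the superposition operation.
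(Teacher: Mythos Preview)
Your proof is correct and genuinely different from the paper's. The paper argues by cases on whether $V(q)\neq V'(q)$ or $V(q)=V'(q)$: in the first case it uses continuity of the difference $V-V'$ to produce a neighborhood on which one function strictly dominates, so the minimum locally coincides with a single continuous function; in the second case it checks sequential continuity directly, noting that for any sequence $q_n\to q$ each value $\min(V,V')(q_n)$ equals either $V(q_n)$ or $V'(q_n)$, both of which converge to the common value $V(q)=V'(q)$.

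Your route via the Lipschitz-type bound $|\min(a_1,b_1)-\min(a_2,b_2)|\le\max(|a_1-a_2|,|b_1-b_2|)$ is shorter and avoids the case split altogether; it also immediately gives uniform continuity of $\min(V,V')$ whenever $V$ and $V'$ are uniformly continuous, which the paper's argument does not directly yield. The paper's argument, on the other hand, makes explicit that near each point the minimum is locally realized by one of the two functions (when they differ), a structural observation that can be handy elsewhere but is not needed for the bare continuity statement.
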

\begin{proof}
First case: Suppose that $V(q) \neq V'(q)$, then $V(q) > V'(q)$ or $V(q) < V'(q)$. In the first case, the continuous function $V-V'$ is positive for an interval with $q$ in the interior, a for the second case $V-V'$ is negative inn a interval with $q$ in the interior. In both cases $\min(V,V')$ is defined by one of the continuous function and therefore it is continuous. 

Second case: If $V'(q) = V(q)$, then for a given sequence $q_n$ that converges to $q$ the property $\min(V,V')(q_n) = V(q_n)$ or $\min(V,V')(q_n) = V'(q_n)$ is true for points that are infinitely close to $q$.
In both cases $\min(V,V')(q_n) \rightarrow  V(q)$ follows from the continuity of $V$ and $V'$.
\end{proof}

The following Proposition is the main result of this section, it summarizes some properties of the superposition $V \boxplus f$ that are inherited from properties of $V$ and  $f$:
\begin{proposition} \label{prop:Vi1}
for piecewise linear functions $V,f$ the following statements about the superposition $V \boxplus f$ are true:
\begin{enumerate}[label=(A\arabic*),nosep]
\item $V \boxplus f$ is piecewise linear. \label{lem:Vi1-p1}
\item If $V$ and $f$ are monotonically increasing, then $V \boxplus f$ is monotonically increasing. \label{lem:Vi1-p2}
\item If $V$ and $f$ are continuous, then $V \boxplus f$ is continuous. \label{lem:Vi1-p3}
\end{enumerate}
\end{proposition}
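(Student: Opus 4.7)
The plan is to handle each of the three claims by leveraging the corner-point description of the Minkowski-sum cells $\Gamma_{l,l'}$ from Lemma \ref{lem2} and the grouping of their boundary segments into the chains $\gamma_1^{(l')},\ldots,\gamma_4^{(l')}$ given in (\ref{f:gammaRep}). Since the slopes appearing in every chain are drawn only from $\{k_l^V\}$ and $\{k_{l'}^f\}$, properties of $V$ and $f$ transfer transparently to each chain and hence to their lower envelope $V \boxplus f$. Claim (A1) is then immediate: $\Gamma$ is a finite union of closed parallelograms, so its lower envelope is the pointwise minimum of finitely many closed linear segments over a bounded interval, which is piecewise linear by construction.

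For claim (A2), I would argue directly rather than through the chains. Fix $q_1 < q_2$ in $\mathcal{D}(V \boxplus f)$ and let $(s^\ast, y^\ast)$ with $s^\ast + y^\ast = q_2$ attain the minimum at $q_2$. I will exhibit a feasible pair $(s, y)$ with $s + y = q_1$, $s \le s^\ast$, and $y \le y^\ast$; monotonicity of $V$ and $f$ then forces $V(s) + f(y) \le V(s^\ast) + f(y^\ast)$, giving $(V \boxplus f)(q_1) \le (V \boxplus f)(q_2)$. Writing $s = s^\ast - \varepsilon_1$ and $y = y^\ast - \varepsilon_2$ with nonnegative $\varepsilon_i$ summing to $q_2 - q_1$, feasibility reduces to $\varepsilon_1 \le s^\ast - \min\mathcal{D}(V)$ and $\varepsilon_2 \le y^\ast - \min\mathcal{D}(f)$. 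The hypothesis $q_1 \in \mathcal{D}(V \boxplus f)$ yields $(s^\ast - \min\mathcal{D}(V)) + (y^\ast - \min\mathcal{D}(f)) = q_2 - (\min\mathcal{D}(V) + \min\mathcal{D}(f)) \ge q_2 - q_1$, so choosing $\varepsilon_1 = \min(q_2 - q_1,\; s^\ast - \min\mathcal{D}(V))$ supplies a valid split.

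Claim (A3) is the delicate one. Each chain $\gamma_1^{(l')}$ and $\gamma_2^{(l')}$ is a continuous function on the interval $[\min\mathcal{D}(V) + a_{l'-1}^f,\; \max\mathcal{D}(V) + a_{l'}^f]$, because consecutive $\omega$-segments of the chain share a common corner point from Lemma \ref{lem2}. Applying Lemma \ref{lem:Vcont} to this pair yields a continuous restriction of $V \boxplus f$ to $\Gamma_{l'}$; the collinear chains $\gamma_3^{(l')}$ and $\gamma_4^{(l')}$ cannot drive the envelope below $\min(\gamma_1^{(l')}, \gamma_2^{(l')})$ wherever they are defined and therefore do not disturb continuity. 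Global continuity of $V \boxplus f$ then follows by induction on $l'$ along the recursion $\Gamma_{l'+1}^+ = \Gamma_{l'}^+ \cup \Gamma_{l'+1}$, since adjacent intervals overlap on a closed sub-interval on which both pieces of the envelope are already continuous. The main obstacle I anticipate lies exactly in this last step: verifying rigorously that $\gamma_3^{(l')}, \gamma_4^{(l')}$ are dominated wherever defined and that no jump is introduced at the transition between successive $l'$. An alternative shortcut via Berge's maximum theorem, using joint continuity of $(q,y) \mapsto V(q-y) + f(y)$ together with continuity and compactness of the correspondence $Y(q) = \{y \in \mathcal{D}(f) : q-y \in \mathcal{D}(V)\}$, would avoid this bookkeeping entirely, but the chain-based route has the merit of being consistent with the piecewise linear data structure manipulated by the algorithm in Section \ref{sec:DPnoTmax}.
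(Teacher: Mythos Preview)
Your proposal is correct, and on (A2) it is genuinely different from the paper. The paper proves (A2) by building, for each segment $v_{l'}^f$, a family of monotone auxiliary functions $\gamma_1^{(l')},\gamma_2^{(l')},\eta_3^{(l',l)}$ that together represent $\text{env}(\Gamma_{l'})$, and then runs an induction on $l'$ to assemble $\text{env}(\Gamma)$ from these families via Lemma~\ref{lem:Vmon}. Your argument bypasses all of this: you take a minimiser $(s^\ast,y^\ast)$ at $q_2$ and slide coordinatewise to a feasible pair at $q_1$, which is a two-line proof once one knows $\mathcal{D}(V)$ and $\mathcal{D}(f)$ are intervals (true throughout the paper). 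This is more elementary and more general; the paper's route, on the other hand, stays inside the segment calculus that the algorithm manipulates, so it doubles as a correctness argument for the data structure.

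For (A3), the paper follows the same chain-based scheme as its (A2) proof, but the crucial observation that closes exactly the gap you flagged is that continuity of $V$ forces $P_1^{(l+1,l')}=P_2^{(l,l')}$ and $P_3^{(l+1,l')}=P_4^{(l,l')}$, hence $\omega_3^{(l,l')}=\omega_2^{(l+1,l')}$; the chains $\gamma_3^{(l')}$ and $\gamma_4^{(l')}$ then coincide and can be absorbed into the $\eta$-functions, so no domination argument is needed. Your Berge shortcut is correct and cleaner still, and would be the natural companion to your direct proof of (A2); the only caveat is that it uses compactness of the feasible correspondence, which again rests on the domains being closed intervals.
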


\begin{proof}
\strut
\begin{itemize}[leftmargin=*]
\item
\ref{lem:Vi1-p1}: this is a consequence of a representation of $V \boxplus f$ as the envelope of the Minkowski sum of two polygons.

\item
\ref{lem:Vi1-p2}: 
The strategy to prove the monotonicity is to represent the envelope of $\Gamma$ by the minimum of a set of monotonically increasing functions $\mathcal{F}_{\Gamma}$, where each function starts in the same point and each function ends in the same point. The monotonicity of $V \boxplus f$ is then a consequence of Lemma \ref{lem:Vmon}. 

In the following, a representation for $\Gamma_{l'}$ is proposed where a set of $\mathcal{F}_{\Gamma_{l'}}$ is constructed for a single segment of $f$ such that:
$$ \text{env}(\Gamma_{l'}) = \min(  \mathcal{F}_{\Gamma_{l'}}) $$
Before proposing the construction, note that because of the monotonicity of $V$ the segments $\omega_2^{(l+1,l')}$ can be eliminated ($\omega_3^{(l,l')} \prec \omega_2^{(l+1,l')}$), and as a consequence $\gamma^{(l')}_4$ can be eliminated. Hence, according to (\ref{f:gammaRep}), $\Gamma_{l'}$ can be represented by $\gamma^{(l')}_1,\gamma^{(l')}_2$ and $\gamma^{(l')}_3$:
\begin{align}
\scalebox{1.0}{
$
\begin{array}{lllllllll}
\gamma_1^{(l')}: & &\omega_1^{(1,l')},& \omega_1^{(2,l')},&\omega_1^{(3,l')},& \ldots & \omega_1^{(l_{max}^V-1,l')},& \omega_1^{(l_{max},l')},& \omega_3^{(l_{max},l')} \\
\gamma_2^{(l')}: &\omega_2^{(1,l')},& \omega_4^{(1,l')},&\omega_4^{(2,l')},&\omega_4^{(3,l')},& \ldots & \omega_4^{(l_{max}-1,l')},& \omega_4^{(l_{max}^V,l')} \\
\gamma_3^{(l')}: & &\omega_3^{(1,l')},& \omega_3^{(2,l')},&\omega_3^{(3,l')},& \ldots & \omega_3^{(l_{max}^V-1,l')} \\ 
\end{array}
$
}
\end{align}

\begin{figure}[htbp]
	\centering
		\includegraphics[width=0.7\textwidth]{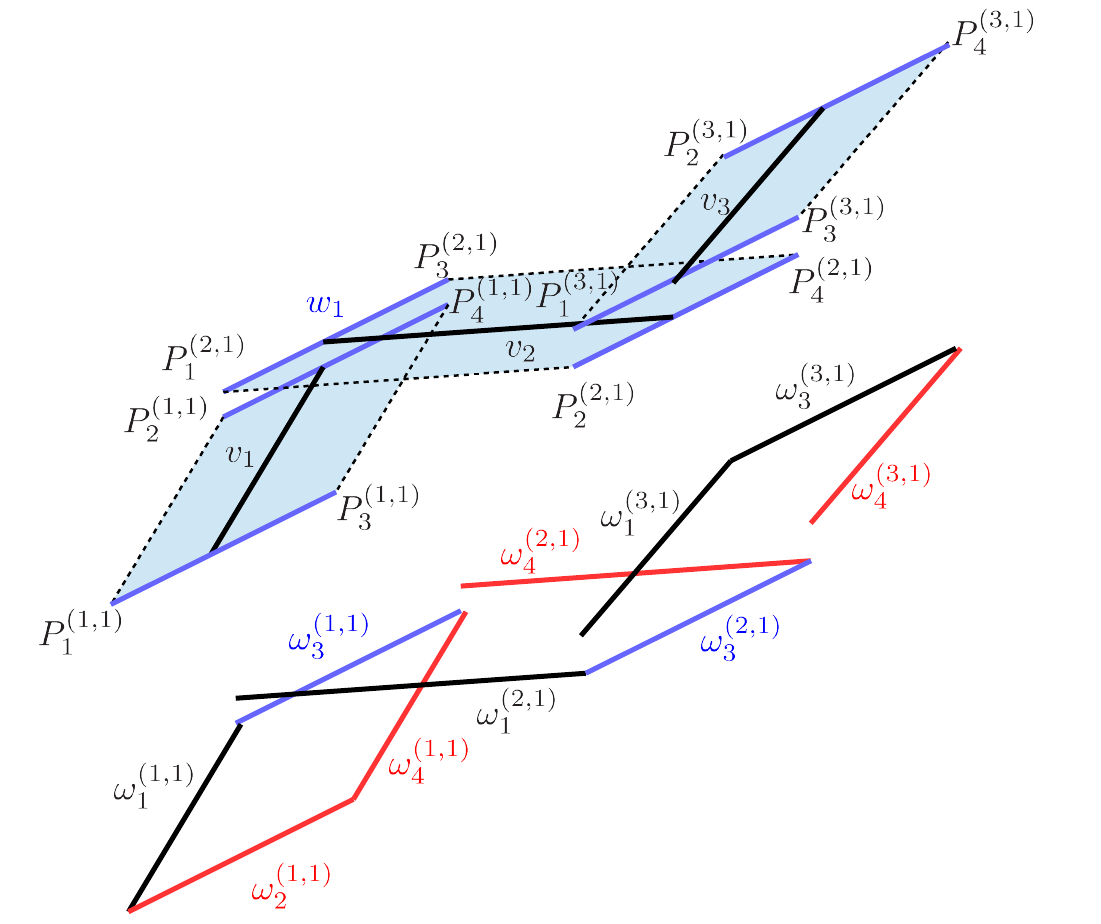}
	\caption{construction of $\gamma_1^{(1)}$(black), $\gamma_2^{(1)}$(red) and $\gamma_3^{(1)}$(blue)}
	\label{fig:superposgamma}
\end{figure}

Figure \ref{fig:superposgamma} demonstrates the construction of $\gamma_1^{(1)}$(black), $\gamma_2^{(1)}$(red) and $\gamma_3^{(1)}$(blue). 
To prove that $\text{env}(\Gamma_{l'})$ is monotonically increasing, it is sufficient to define a set of monotonically increasing functions $\mathcal{F}_{\Gamma_{l'}}$ on the whole domain such that 
each segment of $\gamma_1^{(l')}$, $\gamma_2^{(l')}$ and $\gamma_2^{(l')}$ is used in at least one function;
each function only consists of segments of $\gamma_1^{(l')}$, $\gamma_2^{(l')}$ and $\gamma_2^{(l')}$; and each function starts in $P_1^{(1,l')}$ and ends in $P_4^{(l_{max}^V,l')}$.

We know that $\gamma_1^{(l')}$ and $\gamma_2^{(l')}$ are functions. Because of monotonicity of $f$, they are also monotonically increasing and satisfy the assumptions. To complete the proof, it is sufficient to find monotonically increasing functions on the whole domain that integrate the missing segments in such a way that the function starts in $P_1^{(1,l')}$ and ends in $P_4^{(l_{max}^V,l')}$.
In the following, the auxiliary functions $\eta_3^{(l',l)}$ are defined for each segment $\omega_3^{(l,l')}$:
$$
\eta_3^{(l',l)}: \omega_1^{(1,l')}, \omega_1^{(2,l')},\ldots \omega_1^{(l,l')},\omega_3^{(l,l')},\omega_4^{(l+1,l')}, \ldots  \omega_4^{(l_{max}^V,l')}
$$
Obviously, $\eta_3^{(l',l)}$ contains $\omega_3^{(l,l')}$ and the first part of the sequence of segments $\omega_1^{(1,l')}, \omega_1^{(2,l')}, \allowbreak \ldots, \omega_1^{(l,l')},\omega_3^{(l,l')}$ is monotonically increasing, starting in $P_1^{(1,l')}$. Analogously, the last part of segments $\omega_4^{(l+1,l')}, \ldots  \omega_4^{(l_{max}^V,l')}$ is monotonically increasing and ends in $P_4^{(l_{max}^V,l')}$. Finally the middle piece, $\omega_3^{(l,l')},\omega_4^{(l+1,l')}$ is monotonically increasing because $P_4^{l,l'} \prec P_3^{l+1,l'}$, hence proving that $\eta_3^{(l',l)}$ is monotonically increasing, starting in $P_1^{(1,l')}$ and ending in $P_4^{(l_{max}^V,l')}$.
Now, the envelope of $\Gamma_{l'}$ can be represented by:
$$
\mathcal{F}_{\Gamma_{l'}} = \bigcup_{l'=1}^{l_{max}^V-1} \{\eta_3^{(l',l)}\} \cup \{\gamma_1^{(l')},\gamma_2^{(l')}\}.
$$
Therefore, $V \boxplus w_1$ is monotonically increasing. To prove that $V \boxplus f$ is monotonically increasing for more than one segment, the following auxiliary statement will be used. A set of functions $\mathcal{F}_{\Gamma_{l'}}^+$ exists such that:
\begin{enumerate}[label=(B\arabic*),nosep]
\item $ \text{env}(\Gamma_{l'}^+) = \min( \mathcal{F}_{\Gamma_{l'}}^+) $
\item each functions is monotonically increasing \label{prop:Gammamon}
\item each function starts in $P_1^{(1,1)}$ and ends in $\smash{P_4^{(l_{max}^V,l')}}$. \label{prop:Gammarange}
\end{enumerate}
proof of the auxiliary statement by induction: for $l'=1$ the statement is true, since:
$$\text{env}(\Gamma_{1}^+) = \text{env}(\Gamma_{1})= \min(\mathcal{F}_{\Gamma_{1}}) $$
induction step: $l' \rightarrow l' + 1$. The set of functions $\mathcal{F}_{\Gamma_{l'+1}} = \mathcal{F}_1 \cup \mathcal{F}_2$ consists of two parts. The first part $\mathcal{F}_1$ contains the functions that represent $\text{env}(\Gamma_{l'}^+)$ plus additional segments to meet \ref{prop:Gammarange}. The second set of functions $\mathcal{F}_2$ represents $\text{env}(\Gamma_{l'+1})$ and again additional segments are added to meet \ref{prop:Gammarange}. In the following definitions for $\mathcal{F}_1$ and $\mathcal{F}_2$ the symbol $\oplus$ is used to indicate the concatenation of segments:
$$\mathcal{F}_1 = \mathcal{F}_{\Gamma_{l'}}^+\oplus (\omega_1^{(l_{max}^V,l'+1)},\omega_3^{(l_{max}^V,l'+1)} )$$
$$\mathcal{F}_2 = ( \omega_2^{(1,1)}, \omega_2^{(1,2)}, \ldots \omega_2^{(1,l')}) \oplus \mathcal{F}_{\Gamma_{l'+1}}$$
Finally, the envelope of $\Gamma_{l'+1}^+$ can be represented by the minimum of $\mathcal{F}_1$ and $\mathcal{F}_2$ and since the functions of both sets meet \ref{prop:Gammamon} and \ref{prop:Gammarange} the minimum also satisfies \ref{prop:Gammamon} and \ref{prop:Gammarange}.
$$\text{env}(\Gamma_{l'+1}^+) = \min(\mathcal{F}_1 \cup \mathcal{F}_2)  $$ 

According to the auxiliary statement $\text{env}(\Gamma_{l'_{max}}^+)$ is monotonically increasing and since  $\Gamma_{l'_{max}}^+ = \Gamma$ the proof for the monotonicity is complete.

\item
\ref{lem:Vi1-p3}: the proof is analogous to the proof for property \ref{lem:Vi1-p2} but it is based on Lemma \ref{lem:Vcont} and uses the fact that $P_1^{(l+1,l')}= P_2^{(l,l')}$ and $P_3^{(l+1,l')}= P_4^{(l,l')}$ and therefore $\omega_3^{(l,l')} = \omega_2^{(l+1,l')}$.

\end{itemize}
\end{proof}

\begin{remark}
When calculating the superposition for $f$ or $V$ with discontinuities, the concept of using auxiliary functions (e.g. $\gamma_1$ and $\gamma_2$) can be used for pairs of continuous parts. This helps to reduce the number of involved segments when calculating the envelope of $\Gamma$.
\end{remark}

\begin{remark}
For convex value functions, the calculations can be simplified; for instance if $f$ consist of only one segment, then $\gamma_1$ and $\gamma_2$ may alternate in the envelope at most once and $\gamma_3$ consists of at most one segment. An example can be found in Figure \ref{fig:superposgammaconvex}.

\begin{figure}[htbp]
	\centering
		\includegraphics[width=0.5\textwidth]{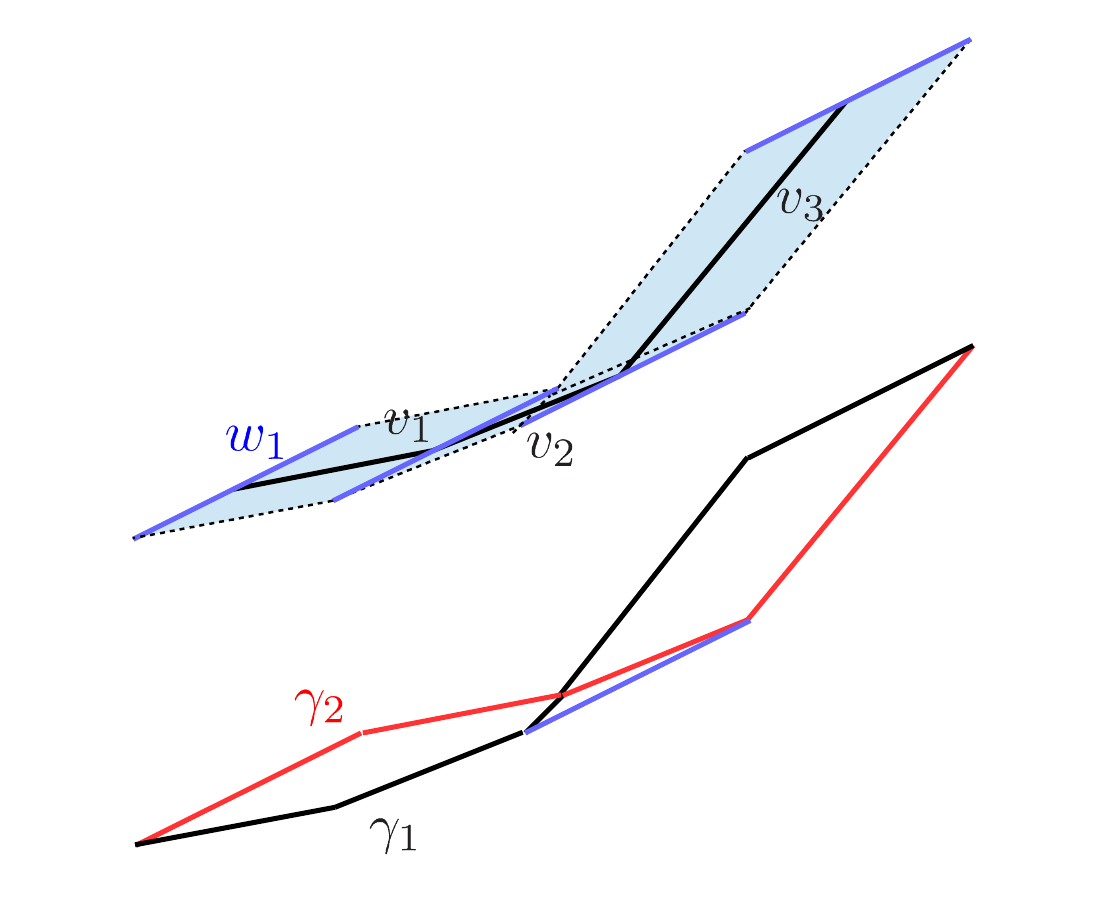}
	\caption{proof - construction of $\gamma_1$ and $\gamma_2$ in the convex case}
	\label{fig:superposgammaconvex}
\end{figure}
\end{remark}

\begin{proposition} \label{prop:superpos:compl}
Given the piecewise linear functions $V,f$,  the superposition $V \boxplus f$ can be calculated in at most $\mathcal{O}(|V| |f| log(|f|))$ time, where $|V|$ is the number of segments in $V$ and $|f|$ is the number of segments in $f$.
\end{proposition}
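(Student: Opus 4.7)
The plan is to compute $V \boxplus f$ in two stages that exploit the geometric decomposition established in Lemma~\ref{lem2}.

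First, I would process one segment of $f$ at a time. For each fixed $l'$, Lemma~\ref{lem2} represents every $\Gamma_{l,l'}$ as a parallelogram, and varying $l$ over the $|V|$ segments of $V$ aligns these parallelograms along the $q$-axis at the breakpoints of $V$. The representation (\ref{f:gammaRep}) decomposes the relevant boundary into the four chains $\gamma_1^{(l')}, \gamma_2^{(l')}, \gamma_3^{(l')}, \gamma_4^{(l')}$, each already sorted by its domain and containing $\mathcal{O}(|V|)$ segments produced directly from the corner points $P_1^{(l,l')},\ldots,P_4^{(l,l')}$. Computing $E_{l'} := \mathrm{env}(\Gamma_{l'})$ therefore reduces to taking the pointwise minimum of four pre-sorted chains, which can be done by a linear sweep in $\mathcal{O}(|V|)$ time; the output $E_{l'}$ has $\mathcal{O}(|V|)$ segments. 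Summing over the $|f|$ segments of $f$, this first stage costs $\mathcal{O}(|V||f|)$ and yields $|f|$ piecewise linear envelopes.

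Second, I would merge the $|f|$ functions $E_{l'}$ into a single envelope by the divide-and-conquer scheme of Section~\ref{appendix:envelope}. The crucial observation is that the lower envelope of two piecewise linear functions with $n_1$ and $n_2$ segments has only $\mathcal{O}(n_1+n_2)$ segments and can be computed in $\mathcal{O}(n_1+n_2)$ time: within each of the $\mathcal{O}(n_1+n_2)$ intervals induced by common breakpoints, two individual linear pieces cross at most once and contribute at most two envelope segments. Consequently the total segment count at every level of the recursion remains $\mathcal{O}(|V||f|)$, each level runs in $\mathcal{O}(|V||f|)$ time, and with $\lceil\log_2|f|\rceil$ levels this stage contributes $\mathcal{O}(|V||f|\log|f|)$, which absorbs the cost of the first stage.

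The main obstacle is establishing the size bound used in the divide-and-conquer. The general envelope analysis of Section~\ref{appendix:envelope} is tight only up to the Davenport--Schinzel factor $\alpha(M)$, which would degrade the claimed complexity. Avoiding this factor requires that the merged functions be plain piecewise linear, so that any pair of individual line segments crosses at most once inside a common interval; this makes the envelope size additive in $n_1+n_2$ rather than super-linear, and prevents growth from amplifying across the $\log|f|$ recursion levels. A secondary technicality is ensuring that all $E_{l'}$ share, or can be extended to, a common domain so that pairwise envelopes are well defined; the lower semicontinuity of $f$ together with the closed-interval convention adopted in Section~\ref{appendix:superpos} handles this uniformly.
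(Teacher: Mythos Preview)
Your two-stage plan coincides with the paper's proof: compute $\mathrm{env}(\Gamma_{l'})$ in $\mathcal{O}(|V|)$ time for each segment $l'$ of $f$, then merge the resulting $|f|$ piecewise linear functions by the divide-and-conquer envelope procedure of Section~\ref{appendix:envelope}. The paper's argument is terse (three sentences), and you supply the missing detail for the first stage correctly.

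There is, however, a gap in your justification of the second stage. Your own counting (``at most two envelope segments'' per common interval) gives a pairwise bound of roughly $2(n_1+n_2)$ segments, not $n_1+n_2+O(1)$. A multiplicative constant larger than one does \emph{not} compose across the $\lceil\log_2|f|\rceil$ recursion levels: if the total segment count can double at each level, after $\log|f|$ levels it can reach $|V|\,|f|^2$, not $\mathcal{O}(|V||f|)$. So the sentence ``the total segment count at every level of the recursion remains $\mathcal{O}(|V||f|)$'' does not follow from the pairwise bound you establish. What actually controls the per-level total is the Davenport--Schinzel argument of Section~\ref{appendix:envelope}: at any level each intermediate function is the lower envelope of a subset of the original $\mathcal{O}(|V||f|)$ line segments, and summing the DS bound over a partition keeps the level total at $\mathcal{O}(|V||f|\,\alpha(|V||f|))$. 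The paper's proof in fact relies on precisely this mechanism; note that it states the edge bound as $\mathcal{O}(|V||f|\,\alpha(|V||f|))$ and drops the (practically constant) $\alpha$ factor from the running-time statement without further comment. Your attempt to remove $\alpha$ by an elementary argument would need an additive pairwise bound such as $n_1+n_2+O(1)$, which is false for general piecewise linear functions (a triangle wave against a horizontal line already violates it).
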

\begin{proof}
According to the construction given (\ref{f:gammaRep}), it is possible to calculate $\text{env}(\Gamma_{l'})$ in $\mathcal{O}(|V|)$ time. According to that, the calculation of the envelopes for each segment in $f$ can be done in $\mathcal{O}(|V| |f|)$. To calculate the envelope of the envelopes we therefore get a runtime complexity of $\mathcal{O}(|V| |f| log(|f|)$ and the number of edges is bounded by $\mathcal{O}(|V| |f| \alpha(|V| |f|)$.
\end{proof}

\begin{lemma} \label{lem:Vi}
If $f_i$ is monotonically increasing, then $V_{i}$ is monotonically increasing.
\end{lemma}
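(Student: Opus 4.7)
The plan is a straightforward induction on $i$, leveraging Proposition \ref{prop:Vi1} (in particular property \ref{lem:Vi1-p2}) and Lemma \ref{lem:Vmon}, both established earlier.

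For the base case $i = 1$, the recurrence (\ref{fx:init}) gives $V_1 = f_1$, which is monotonically increasing by assumption on $f_1$. For the inductive step, I would assume that $V_j$ is monotonically increasing for every $j < i$ and aim to show that $V_i$ is too. Since adding the constant $c_{j,i}$ preserves monotonicity, each $V_j + c_{j,i}$ is monotonically increasing. Then Proposition \ref{prop:Vi1}\ref{lem:Vi1-p2} applied to $V = V_j + c_{j,i}$ and $f = f_i$ (both monotonically increasing) yields that $(V_j + c_{j,i}) \boxplus f_i$ is monotonically increasing.

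It remains to handle the outer minimum $V_i = \min_{j<i}\bigl((V_j + c_{j,i}) \boxplus f_i\bigr)$. A direct pairwise application of Lemma \ref{lem:Vmon} shows that the minimum of two monotonically increasing functions is monotonically increasing, even when their domains differ (provided one domain is contained in the other). Iterating this pairwise over the finite index set $\{j : j < i\}$ gives the result for $V_i$.

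The only mild obstacle is the domain issue in the outer minimum: the summands $(V_j + c_{j,i}) \boxplus f_i$ need not share a common domain, so Lemma \ref{lem:Vmon} has to be applied with care to an ordered nesting (or, equivalently, by noting that on the union of domains, the function $V_i(q)$ at each point equals the minimum over those $j$ for which $q$ is attainable, and the corresponding partial minimum is monotone on its own domain). This is handled by a short reordering argument, after which the induction closes and the lemma follows.
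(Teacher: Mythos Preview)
Your proposal is correct and follows essentially the same approach as the paper: the paper's proof is a single sentence invoking Proposition~\ref{prop:Vi1} and Lemma~\ref{lem:Vmon}, and you have simply spelled out the induction and the iterated application of Lemma~\ref{lem:Vmon} that the paper leaves implicit. Your remark about the domain nesting in the outer minimum is a point the paper does not address explicitly, so in that respect your version is more careful.
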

\begin{proof}
From Proposition \ref{prop:Vi1} and Lemma \ref{lem:Vmon} it is clear that $V_{i}$ is monotonically increasing.
\end{proof}

\begin{example} \label{ex1}
The input data for an example with four locations is given below:
\begin{itemize}[nosep]
\item $f_1(y) = -4y$ for $y \in [0,1]$ 
\item $f_2(y) = -2y$ for $y \in [0,1]$ 
\item $f_3(y) = -4y$ for $y \in [0,2]$
\item $f_4(y) = -5y$ for $y \in [-2,0]$
\item $c_{2,3}=\infty$, else: $c_{i,j}=0$. \\
\end{itemize} 

The resulting value functions are :  
\begin{itemize}
\item $V_1(q) = 4q$ for $q \in [0,1]$ 
\item $V_2(q) = \begin{cases} 2q \text{ for } q \in [0,1] \\
                              -2+4q \text{ for } q \in (1,2]
								\end{cases} $
\item $V_3(q) = 4q \text{ for } q \in [0,3]$
\item $\tilde{V}_4(q) = \begin{cases} 2q \text{ for } q \in [0,1] \\
                              -2+4q \text{ for } q \in (1,2]\\
                               4q \text{ for } q \in (2,3]
								\end{cases} $
\item $V_4(q) = \begin{cases} -4+5q \text{ for } q \in [0,2]\\
                              -3+5q \text{ for } q \in (2,3]
								\end{cases} $
\end{itemize} 

\end{example}

\subsection{Complexity and dynamic programming, without considering the duration constraint} \label{appendix:complexityDP}
We discuss the complexity for resolving a single stage of the dynamic programming recurrence, to calculate $V_i$ in stage $i$ as defined in formula (\ref{f:rec}) in the main paper. The calculations consist of two parts, the first one is the calculation of $\tilde{V}_i = \min_{j < i} (V_j + c_{j,i})$ and the second is the the superposition $\tilde{V}_i \boxplus f_i$, which is equivalent to the calculation of the corresponding envelope $\Gamma$. \\

According to Section \ref{appendix:envelope} the calculation of $\tilde{V}_i$ can be accomplished in a divide and conquer manner, by decomposing the functions $\{ V_j + c_{j,i}\}$ into pairs of functions.
In the present case the number of involved edges is $m = \sum_{j<i} m_j$, where $m_j$ is the number of segments in $V_j + c_{j,i}$. Therefore, $O(m\alpha(m))$ is the complexity for each of the $\left\lceil \log(i)\right\rceil$ levels, which gives worst case complexity of $O(\log(i) m \alpha(m))$ in total. Note that $\alpha(m)$ grows extremely slow and $\alpha(m)<4$ for any practical size of $m$.

To compute the worst case complexity for the superposition, we note that $m\alpha(m)$ can be used as bound for the maximum number of segments in $\tilde{V}_i$. For each segment in $\tilde{V}_i$, the number of edges in $\Gamma$ is bounded the number of pairs of segments $(v,w)$ ($v \in \tilde{V}_i$ and $w \in f_i$). That makes $O(m\alpha(m) K)$, where $K$ is the number of segments in $f_i$.

Overall, the complexity is therefore $O(m \alpha(m)\log(i) +m \alpha(m) K)$. Since $\log(m)$ grows faster than the constant $K$, the complexity for solving the DP recurrence relation for stage $i$  is $O(m \alpha(m)\log(m))$. In order to show that the algorithm has a reasonable complexity, we will reduce a sorting problem to the calculation of the upper envelope of a set of piecewise linear functions, which is the core of the algorithm. This sorting problem is called sorting $M$ sets of $N$ numbers and it is defined as follows:
For $j=1 \ldots M$ the sets $S_j \subset [0,1)$ are given and $|S_j|=N$. The sets $S_j$ are defined by the elements $b_{i,j}$ for $j=1 \ldots M$ and $i=1 \ldots N$, i.e. $S_j = \{b_{i,j}\}$. The problem is to sort the collection $\bigcup_j S_j$.

Before proposing a reduction to the calculation of the envelope, we note that for solving the problem of sorting $N$ numbers there is no algorithm that performs better than $O(N log(N))$ in the comparison model. Therefore we cannot expect for find an algorithm for sorting $M$ sets of $N$ numbers in less than $\mathcal{O}(M N log(N))$. 

\begin{lemma}
Sorting $j=1 \ldots M$ sets $S_j = \{b_{ij} \in [0,1), i=1 \ldots N\}$ of size $N$ can be transformed to calculating the envelope of $N$ functions with $M$ segments.
\end{lemma}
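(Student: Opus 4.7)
The plan is to construct, from the input family $\{b_{ij}\}$, a family of $N$ piecewise linear functions $g_1,\ldots,g_N$ with exactly $M$ segments each, whose lower envelope directly exposes the sorted order of every set $S_j$. Choose $M$ pairwise disjoint unit intervals $I_j = [2(j-1),\,2(j-1)+1]$ for $j = 1,\ldots,M$. For each $i \in \{1,\ldots,N\}$ and each $j$, define the $j$-th segment of $g_i$ on $I_j$ by
\begin{equation*}
g_i(x) \;=\; \tfrac{1}{2}b_{ij}^{2} \,-\, b_{ij}\,(x - 2(j-1)), \qquad x \in I_j,
\end{equation*}
and leave $g_i$ undefined (equivalently, $+\infty$) outside $\bigcup_j I_j$. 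Each $g_i$ is then a lower semicontinuous piecewise linear function with $M$ segments, matching the setting of the lemma, and its description is produced in $O(MN)$ time. If some values within a given $S_j$ coincide, a perturbation of order $i\varepsilon$ restores distinctness without changing the sorted output.

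Next I would verify that the lower envelope of the $g_i$ on each interval $I_j$ encodes the sorted order of $S_j$. Writing $u = x - 2(j-1) \in [0,1]$ and using completion of the square,
\begin{equation*}
\tfrac{1}{2}b_{ij}^{2} - b_{ij}\, u \;=\; \tfrac{1}{2}(b_{ij} - u)^{2} \,-\, \tfrac{1}{2}u^{2},
\end{equation*}
so that minimizing over $i$ for fixed $u$ amounts to picking the value $b_{ij}$ closest to $u$. Since all $N$ (now distinct) values $b_{1j},\ldots,b_{Nj}$ lie in $[0,1) \subset [0,1]$, each one is the unique nearest neighbour of $u$ on an open sub-interval; hence every line participates in the lower envelope on $I_j$, and its breakpoints fall precisely at the midpoints $(b_{(k)j} + b_{(k+1)j})/2$ of consecutive values in sorted order. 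Reading the labels of the envelope pieces on $I_j$ from left to right therefore returns $S_j$ in increasing order, and doing this over $j = 1,\ldots,M$ simultaneously produces sorted versions of all $M$ sets; an $M$-way merge, if desired, assembles these into a single sorted list of $\bigcup_j S_j$.

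The geometric point to check carefully is that all $N$ tangent lines actually appear in the envelope on each $I_j$, so that no value $b_{ij}$ is lost. This follows from the choice of intercept $\tfrac{1}{2}b_{ij}^{2}$, which makes $g_i|_{I_j}$ the tangent at $u = b_{ij}$ to the concave parabola $u \mapsto -\tfrac{1}{2}u^{2}$; the strict convexity of $(b_{ij} - u)^{2}$ in $u$ then guarantees that each tangent dominates its competitors in an open neighbourhood of its tangency point, which lies inside $[0,1]$. Beyond this observation the reduction is mechanical: the $N$ functions are built in $O(MN)$ time, the envelope computation is invoked once, and a linear scan of its pieces extracts the $M$ sorted sequences.
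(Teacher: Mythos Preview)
Your reduction is correct and complete. The tangent-to-parabola trick is sound: the identity $\tfrac{1}{2}b^{2}-bu=\tfrac{1}{2}(b-u)^{2}-\tfrac{1}{2}u^{2}$ cleanly shows that every $b_{ij}$ owns a nonempty piece of the lower envelope on $I_j$, and reading the labels left to right yields the sorted order of $S_j$.

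However, your construction is genuinely different from the paper's. The paper shifts the values to $a_{ij}=(j-1)+b_{ij}\in[j-1,j)$ and builds \emph{piecewise constant} step functions $f_i$ whose $j$-th plateau has height $a_{ij}$ and whose breakpoints are the values $a_{ij}$ themselves; the envelope of these staircases on each block $[j-1,j)$ then exposes the order of the $a_{ij}$. You instead use \emph{linear} segments (tangent lines to a concave parabola) on disjoint unit intervals, with slopes $-b_{ij}$ and intercepts $\tfrac{1}{2}b_{ij}^{2}$. Your argument is more self-contained: the completion-of-squares identity gives an immediate proof that all $N$ lines appear in the envelope, whereas the paper's step-function construction leaves that verification more implicit. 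On the other hand, the paper's choice of piecewise constant functions is closer in spirit to the value functions actually manipulated by the DP algorithm under discussion, so it ties the lower-bound argument more directly to the data structures being analysed. Both constructions produce $N$ functions with $M$ segments in $O(MN)$ time and reduce sorting $M$ sets of size $N$ to a single envelope computation, which is the content of the lemma.
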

\begin{proof}
The instance is given by the sets $S_j = \{b_{ij}\} \subset [0,1)$, where $j=1 \ldots M$ and $i=1 \ldots N$. The following transformation is used to map all $b_{ij}$ with the same $j$ into the interval $[j-1,j)$:
$$ a_{ij} \gets (j-1) + b_{ij} $$
According to that, sorting $S'_j = \{b_{ij}\}$ is equivalent to sorting $S_j$. For fixed $i$ the sets $L_i$ are defined by $L_i = \{a_{ij}\}$ and the following properties hold: $L_i \subset [0,M)$, $|L_i|=M$ and exactly one member in available in each interval $[j-1,j)$ where $a_{i,j} < a_{i,j+1}$. For the reduction, we consider the following functions:
$$ f_i(y) = \begin{cases} 
    a_{i,1}  &\mbox{if }  y \in [0,a_{i,1}]  \\ 
    a_{i,j}  &\mbox{if } y \in (a_{i,j-1}, a_{i,j}] \quad (j \geq 2)
\end{cases}$$ 
The corresponding functions $f_i$ are monotonically increasing and stepwise constant. An example is given in Figure \ref{fig:reductiontosortedlists}.
The proposed algorithm for calculating the envelope returns the minimum of the collection $\mathcal{F} = \{f_i\}$ with a complexity of $O((M+1)N \log(N) \alpha(N(M+1)) )$ and due to the construction of the envelope, the elements of $\{a_{i,j}\}$ are sorted in $f_{max}$. The maximum function in the interval $[j-1,j)$ returns the sorted values $\{a_{i,j}\}$, hence leading to the order of the original values.
\end{proof}

Sorting $M$ sets of $N$ numbers can be reduced to an envelope computation, currently done in $\mathcal{O}(MN \alpha(MN) log(N))$. Our algorithm for calculating the envelope is thus very close to its theoretical complexity bound. 
 
\begin{figure}
	\centering 
		\includegraphics[width=0.4\textwidth]{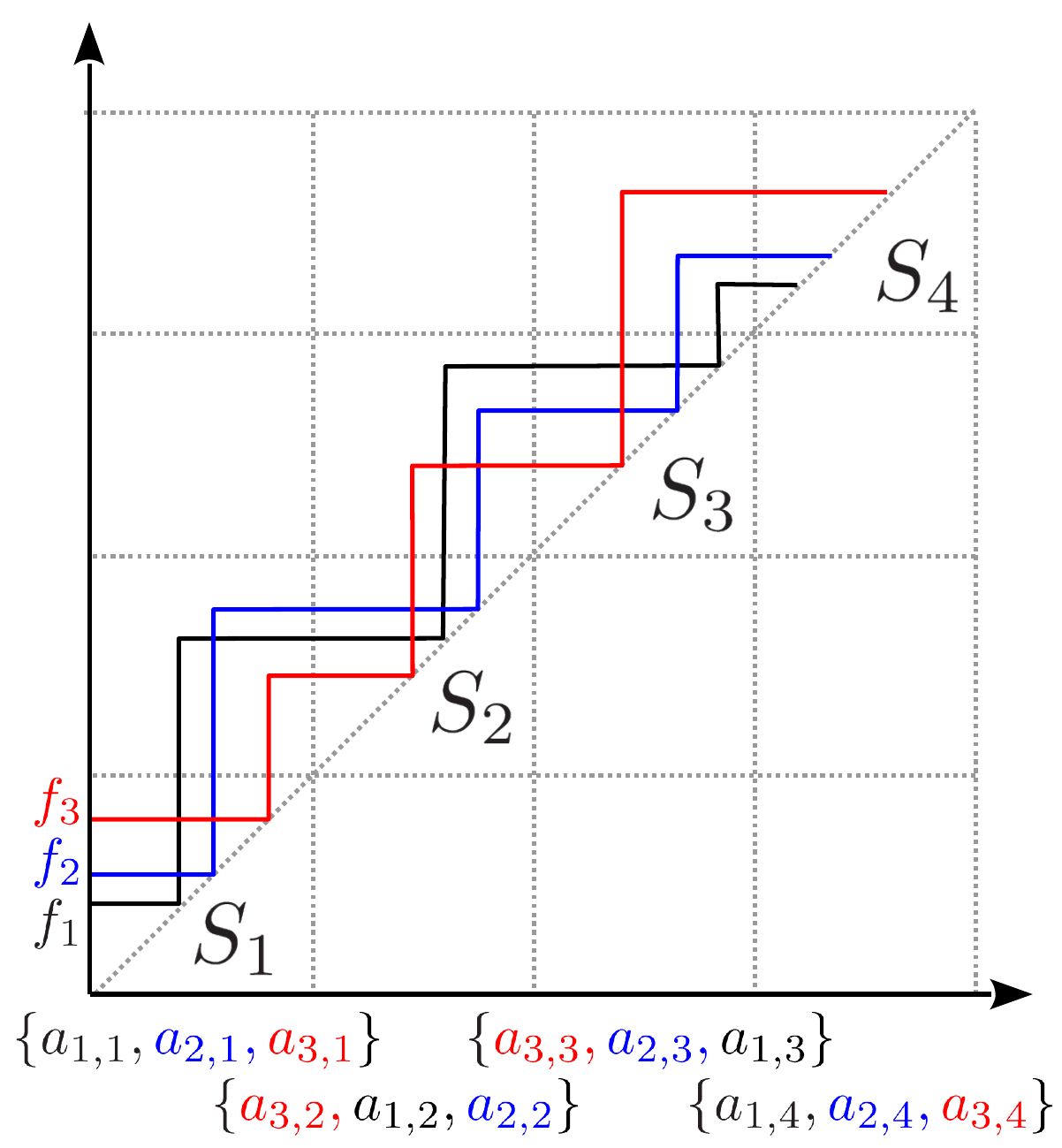}
	\caption{A example for sorting four sets ($S_1,S_2,S_3,S_4$) of size three; the order of the elements appears in the envelope of the corresponding intervals}
	\label{fig:reductiontosortedlists}
\end{figure}

\subsection{Considering integer data} \label{sec:integerdata}

Similar to the total unimodularity for the min cost flow subproblem discussed in \cite{DBLP:conf/eurocast/HartlR13} where integer solutions come for free, it may be useful to consider restricting the domain of the labels of the DP Algorithm to integers.
Suppose that the domains of the segment of $f_i$ have integer borders, then it will be shown that the domains of the segments of the value functions $V_i$ can also be restricted to intervals with integer borders.  
The benefit of this reduction may be explained by the following situation: when calculating the the envelope, tiny segments may appear that do not contain a single integer, therefore the corresponding segment could be eliminated to reduce the complexity.
In the following, an algorithm for restricting the segments of a value function to integer domains will be presented. We show that the superposition of the integerized value functions is equivalent to the superposition of the integerized versions, if the domains of the segments of $f_i$ have integer borders.

Let $\mathbb{Z}$ denote the set of integers, 
 then $W \subset_{\mathbb{Z}} V$ denotes that $W$ is an integerized version of $V$. It means that $W$ is a value function that is identical to $V$ on the domain of $W$, and it includes all integer values of the domain of $V$. More precisely:
\begin{enumerate}[label=(P\arabic*)]
\item $\mathcal{D}(V) \cap  \mathbb{Z} \subset \mathcal{D}(W) \subset \mathcal{D}(V)$   \label{integerize:P3}
\item $ W = V|_{\mathcal{D}(W)} $  \label{integerize:P2}
\item $a_l^W, b_l^W \in \mathbb{Z}$\label{integerize:P1}
\end{enumerate}

\begin{lemma} 
$\subset_{\mathbb{Z}}$ is transitive
\end{lemma}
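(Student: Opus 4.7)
The plan is to unpack the three defining conditions \ref{integerize:P3}, \ref{integerize:P2}, \ref{integerize:P1} for two hypotheses $U \subset_{\mathbb{Z}} W$ and $W \subset_{\mathbb{Z}} V$, and to verify them one by one for the pair $(U,V)$. Condition \ref{integerize:P1} is automatic since it is intrinsic to $U$: both endpoint types $a_l^U$ and $b_l^U$ being in $\mathbb{Z}$ comes directly from the assumption $U \subset_{\mathbb{Z}} W$, and does not involve $V$ at all. So the real work lies in checking the nested domain inclusion \ref{integerize:P3} and the restriction identity \ref{integerize:P2}.

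For the domain chain $\mathcal{D}(V) \cap \mathbb{Z} \subset \mathcal{D}(U) \subset \mathcal{D}(V)$, I would first observe the outer inclusion by straightforward transitivity of ordinary set inclusion: $\mathcal{D}(U) \subset \mathcal{D}(W)$ from $U \subset_{\mathbb{Z}} W$ and $\mathcal{D}(W) \subset \mathcal{D}(V)$ from $W \subset_{\mathbb{Z}} V$. The slightly more delicate inclusion is $\mathcal{D}(V) \cap \mathbb{Z} \subset \mathcal{D}(U)$. The key trick is that any element of $\mathcal{D}(V) \cap \mathbb{Z}$ is already an integer, so if it lands in $\mathcal{D}(W)$ via the first hypothesis, it lands in $\mathcal{D}(W) \cap \mathbb{Z}$ for free, which the second hypothesis then places inside $\mathcal{D}(U)$. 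Formally: $\mathcal{D}(V) \cap \mathbb{Z} \subset \mathcal{D}(W)$ implies $\mathcal{D}(V) \cap \mathbb{Z} \subset \mathcal{D}(W) \cap \mathbb{Z} \subset \mathcal{D}(U)$.

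For the restriction identity \ref{integerize:P2}, I would combine the two restriction equalities. From the hypotheses, $W = V|_{\mathcal{D}(W)}$ and $U = W|_{\mathcal{D}(U)}$. Since $\mathcal{D}(U) \subset \mathcal{D}(W)$, restricting a restriction collapses to the single restriction to the smaller set:
\begin{equation}
U \;=\; W|_{\mathcal{D}(U)} \;=\; \bigl(V|_{\mathcal{D}(W)}\bigr)\big|_{\mathcal{D}(U)} \;=\; V|_{\mathcal{D}(U)}. \nonumber
\end{equation}
Combining the three verifications yields $U \subset_{\mathbb{Z}} V$.

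There is no real obstacle here; the argument is a clean bookkeeping exercise. The only subtlety worth flagging is the use of $\mathcal{D}(V) \cap \mathbb{Z} = (\mathcal{D}(V) \cap \mathbb{Z}) \cap \mathbb{Z}$, which lets the integer points in $\mathcal{D}(V)$ ``pass through'' the intermediate function $W$ without losing membership in an integer set. Everything else is a direct composition of inclusions and restrictions.
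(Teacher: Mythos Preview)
Your proof is correct and follows essentially the same route as the paper: the key step is the chain $\mathcal{D}(V)\cap\mathbb{Z}\subset\mathcal{D}(W)\cap\mathbb{Z}\subset\mathcal{D}(U)$, which is exactly what the paper does (with the roles of $V$ and $W$ swapped in its naming). The paper's proof is terser in that it only spells out the domain-chain condition \ref{integerize:P3} and declares the other two ``sufficient'' to omit, whereas you verify \ref{integerize:P2} and \ref{integerize:P1} explicitly; your version is the more complete write-up.
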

\begin{proof}
It is sufficient to prove \ref{integerize:P3}. Suppose that $U \subset_{\mathbb{Z}} V$ and $V \subset_{\mathbb{Z}} W$ then $\mathcal{D}(U) \subset \mathcal{D}(V)$ and  $\mathcal{D}(V) \subset \mathcal{D}(W)$ therefore  $\mathcal{D}(U) \subset \mathcal{D}(W)$. It also follows that $\mathcal{D}(W) \cap  \mathbb{Z} \subset \mathcal{D}(V)$ and therefore $\mathcal{D}(W) \cap  \mathbb{Z} \subset \mathcal{D}(V) \cap  \mathbb{Z}$ and in combination with $\mathcal{D}(V) \cap  \mathbb{Z} \subset \mathcal{D}(U)$ it is obvious that $\mathcal{D}(W) \cap  \mathbb{Z} \subset \mathcal{D}(U)$  
\end{proof}

Algorithm \ref{algo:integerize} is used to integerize the border of the domains of the segments of the piecewise linear function $V$ such that $W \subset_{\mathbb{Z}} V$.
In the following we will assume that the domains of $v_l^V$ are closed intervals $I_l^V$. In order to formulate a more general algorithm that allows `holes' in the domain, the interval domains $I_l^V$ are identified by $[a_{l}^V,b_{l}^V]$ where $b_{l}^V \leq a_{l+1}^V$. The output of Algorithm \ref{algo:integerize} is called $V_{\mathbb{Z}}$ and the following Lemma states that Algorithm \ref{algo:integerize} defines a integerized version of $V$.
   
\newcommand{\LINEFOR}[2]{%
    \STATE\algorithmicfor\ {#1}\ \algorithmicdo\ {#2} \algorithmicend\ \algorithmicfor%
}
\newcommand{\LINEIF}[2]{%
    \STATE\algorithmicif\ {#1}\ \algorithmicthen\ {#2} \algorithmicend\ \algorithmicif%
}

\begin{algorithm}
\caption{Create the integer version of a piecewise linear functions $V$ represented by a set of linear functions $v_l$ defined on $[a_{l},b_{l}]$ where $l = 1 \ldots L$.} \label{algo:integerize}
\begin{algorithmic}[1]
    \FOR{$l\in \{1 \ldots L\}$}
		    \STATE $a_l \gets \left\lceil a_{l}\right\rceil $ \label{algo:integerize:1}
				\STATE $b_l \gets \left\lfloor b_l \right\rfloor $ \label{algo:integerize:2}
				\LINEIF{$a_{l} > b_{l}$}{ $V \gets remove(v_l,V)$ }\label{algo:integerize:3}
		\ENDFOR 
\STATE $l \gets 2$
\WHILE{$l \leq  L$}
\IF{$b_{l-1}=a_{l}$} \label{algo:integerize:4a} 
\IF{$v_l(a_{l}) < v_{l+1}(a_{l})$} \label{algo:integerize:4}
\STATE $ b_{l-1} \gets b_{l-1} - 1 $ \label{algo:integerize:5}
\LINEIF{$a_{l-1} > b_{l-1}$}{ $V \gets remove(v_{l-1},V)$ and $continue;$ }\label{algo:integerize:6}
\ENDIF
\IF{$v_l(a_{l}) > v_{l+1}(a_{l})$ } \label{algo:integerize:7}
\STATE $ a_{l} \gets b_{l-1}+ 1 $ \label{algo:integerize:8}
\LINEIF{$a_{l} > b_{l}$}{ $V \gets remove(v_{l},V)$ and $continue;$ }\label{algo:integerize:9}
\ENDIF
\ENDIF
\STATE $l \gets l + 1$
\ENDWHILE 
\STATE $V_{\mathbb{Z}} \gets V$
\end{algorithmic}
\end{algorithm}

\begin{algorithm}
\caption{Remove segment $v_l$ from $V$ - remove$(v_l,V)$.} \label{algo:remove}
\begin{algorithmic}[1]
\FOR{$l' \in \{l \ldots L\}$}
		    \STATE $v_{l'} \gets v_{l'+1} $ \label{algo:remove:1}
\ENDFOR 
\STATE remove $v_L$
\STATE $L \gets L-1$
\end{algorithmic}
\end{algorithm}

\begin{lemma} 
Algorithm \ref{algo:integerize} is correct, i.e.: $V_{\mathbb{Z}} \subset_{\mathbb{Z}} V$
\end{lemma}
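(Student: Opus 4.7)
The plan is to verify the three defining properties \ref{integerize:P3}, \ref{integerize:P2}, \ref{integerize:P1} of the relation $\subset_{\mathbb{Z}}$ separately, treating Algorithm~\ref{algo:integerize} as a sequence of operations each of which preserves a chosen loop invariant.

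First I would establish \ref{integerize:P1} (integrality of the new interval borders) as a loop invariant. After the first \texttt{for} loop (lines~\ref{algo:integerize:1}--\ref{algo:integerize:3}), every remaining segment has endpoints $\lceil a_l \rceil$ and $\lfloor b_l \rfloor$, which are integer by construction; removal in line~\ref{algo:integerize:3} does not break the invariant. In the subsequent \texttt{while} loop, the endpoint updates in lines~\ref{algo:integerize:5} and~\ref{algo:integerize:8} only subtract or add the integer $1$, and the removal subroutine (Algorithm~\ref{algo:remove}) merely shifts indices without altering endpoints. Hence \ref{integerize:P1} holds after termination.

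Next I would prove \ref{integerize:P2} ($W = V|_{\mathcal{D}(W)}$). The key observation is that the algorithm never modifies the affine expression of any segment $v_l$; it only contracts the domain $[a_l,b_l]$ or removes the segment entirely. Therefore, at every point $q$ that survives in $\mathcal{D}(V_{\mathbb{Z}})$, the active segment is the same affine function as in $V$, and, where two segments of $V$ overlap at a point, the boundary-adjustment lines~\ref{algo:integerize:4a}--\ref{algo:integerize:9} remove the point from whichever segment does \emph{not} realize the minimum, so the remaining value equals $V(q)=\min_l v_l(q)$. (Here I would note, and correct in passing, the apparent typo in lines~\ref{algo:integerize:4} and~\ref{algo:integerize:7}: the relevant comparison at a shared boundary $b_{l-1}=a_l$ is between $v_{l-1}(a_l)$ and $v_l(a_l)$, as dictated by lower semicontinuity.) Together these two remarks give \ref{integerize:P2}.

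Finally I would prove \ref{integerize:P3}, i.e.\ $\mathcal{D}(V)\cap\mathbb{Z}\subset\mathcal{D}(V_{\mathbb{Z}})$. Fix an integer $q\in\mathcal{D}(V)$ and let $v_l$ be a segment of $V$ with $q\in[a_l,b_l]$; since $q$ is integer, $q\in[\lceil a_l\rceil,\lfloor b_l\rfloor]$, so $q$ survives the rounding step. The only way $q$ could subsequently leave the domain is through the boundary adjustments at lines~\ref{algo:integerize:5} or~\ref{algo:integerize:8}, which fire only when $q$ coincides with a shared integer boundary $a_{l'}=b_{l'-1}$. In that event, the branch chosen removes $q$ from exactly one of the two adjacent segments (the one whose value at $q$ is strictly larger), and preserves it in the other; consequently $q$ remains in $\mathcal{D}(V_{\mathbb{Z}})$.

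The main obstacle I expect is bookkeeping during the \texttt{while} loop: after a removal via Algorithm~\ref{algo:remove}, indices shift and the loop should \emph{not} advance (captured by the \texttt{continue} in lines~\ref{algo:integerize:6} and~\ref{algo:integerize:9}), because the newly re-indexed segment at position $l-1$ or $l$ may again share an integer boundary with its new neighbour, triggering a cascade. Formally, the cleanest way around this is a secondary induction on the number of remaining segments: each iteration either advances $l$ or strictly decreases $L$, guaranteeing termination, and after termination no pair $(l-1,l)$ with $b_{l-1}=a_l$ and non-minimal value on both sides remains, which is exactly what \ref{integerize:P2} requires.
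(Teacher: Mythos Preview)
Your proposal is correct and follows essentially the same approach as the paper's own proof: walk through the algorithm and verify \ref{integerize:P1}--\ref{integerize:P3} by observing that the rounding steps establish integer borders while the boundary adjustments only shrink domains by one at shared integer endpoints, never discarding an integer nor the minimal value there. Your treatment is considerably more thorough than the paper's (which dispatches \ref{integerize:P2}--\ref{integerize:P3} in a single sentence), in particular on termination, the cascade after a \texttt{remove}, and the index typo you spot in lines~\ref{algo:integerize:4} and~\ref{algo:integerize:7}.
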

\begin{proof}
In step \ref{algo:integerize:1} and step \ref{algo:integerize:2} of Algorithm \ref{algo:integerize}, the domain of the segment is reduced without cutting integers off. In step \ref{algo:integerize:3} obsolete segments are removed.

In step \ref{algo:integerize:4a} all pairs of segments where the domains intersect in one point are identified; in step \ref{algo:integerize:4} and step \ref{algo:integerize:7} the case where one of the segments dominates the other is treated. In case of dominance, the domain of the dominated segment is reduced in step \ref{algo:integerize:5} and step \ref{algo:integerize:8}. In both cases it may happen that the reduced domain vanishes and therefore the corresponding segment will be removed which corresponds to the steps \ref{algo:integerize:6} and \ref{algo:integerize:9}. A detailed procedure for removing segments can be found in Algorithm \ref{algo:remove}. 

Because of the step \ref{algo:integerize:1} and step \ref{algo:integerize:2}, the property \ref{integerize:P1} is satisfied. The other steps may reduce the domain by one but it cannot happen that integers are cut off; therefore \ref{integerize:P2} and \ref{integerize:P3} are satisfied.
\end{proof}

According to the algorithm, each segment of the integerized version $V_{\mathbb{Z}}$ originates in a segment of $V$ with possibly reduced domain. 
In the proof of the main result of this section, the following property will be used:
\begin{lemma} \label{envminintegerized}
If $V' \subset_{\mathbb{Z}} V$ and $W' \subset_{\mathbb{Z}} W$ then:
 $$\min(V',W')_{\mathbb{Z}} \subset_{\mathbb{Z}} \min(V,W)_{\mathbb{Z}}$$
\end{lemma}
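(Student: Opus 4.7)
The goal is to verify the three defining properties of the relation $\subset_{\mathbb{Z}}$ for $G':=\min(V',W')_{\mathbb{Z}}$ relative to $G:=\min(V,W)_{\mathbb{Z}}$, namely (P1) the sandwich $\mathcal{D}(G)\cap\mathbb{Z} \subset \mathcal{D}(G') \subset \mathcal{D}(G)$, (P2) the agreement $G' = G|_{\mathcal{D}(G')}$, and (P3) that $G'$ has integer segment borders. Property (P3) is immediate, since $G'$ is the output of Algorithm~\ref{algo:integerize}. As a preliminary, I would extract the following \emph{integer-preservation} fact from the three properties already proven for that algorithm: for any piecewise linear lower semicontinuous function $f$ and any integer $q\in\mathcal{D}(f)$, we have $q\in\mathcal{D}(f_{\mathbb{Z}})$ and $f_{\mathbb{Z}}(q)=f(q)$. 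This follows because steps~\ref{algo:integerize:1}--\ref{algo:integerize:3} only round off fractional parts of segment borders, and the dominance handling in steps~\ref{algo:integerize:4a}--\ref{algo:integerize:9} only redistributes which adjacent segment owns a shared integer boundary, never evicting the boundary point from the combined domain.

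With this tool, the first half of (P1) and the value agreement (P2) reduce to short definition chases. For integer $q \in \mathcal{D}(G)$, integer preservation gives $q \in \mathcal{D}(\min(V,W)) = \mathcal{D}(V)\cup\mathcal{D}(W)$; assuming WLOG that $q \in \mathcal{D}(V)\cap\mathbb{Z}$, the hypothesis $V'\subset_{\mathbb{Z}} V$ yields $q\in\mathcal{D}(V')\subset\mathcal{D}(\min(V',W'))$, and a second application of integer preservation to $\min(V',W')$ gives $q\in\mathcal{D}(G')$. For (P2), at any $q\in\mathcal{D}(G')\cap\mathcal{D}(G)$, the identities $V'=V|_{\mathcal{D}(V')}$ and $W'=W|_{\mathcal{D}(W')}$ combined with integer preservation force $G'(q)=\min(V',W')(q)=\min(V,W)(q)=G(q)$, noting that any value of $V$ or $W$ defined outside $\mathcal{D}(V')$ or $\mathcal{D}(W')$ cannot undercut the minimum at a point where the restricted versions are already the active ones.

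The main obstacle is the reverse inclusion $\mathcal{D}(G')\subset\mathcal{D}(G)$, which concerns non-integer points. A non-integer $q\in\mathcal{D}(G')$ lies in some integer-bordered segment $[c,d]$ of $G'$ inherited from a segment of $V'$ or $W'$, and by the definition of $\subset_{\mathbb{Z}}$ this segment lies inside a single segment of $V$ or $W$, so $[c,d]\subset\mathcal{D}(\min(V,W))$. The delicate issue is that on $[c,d]$ the function $\min(V,W)$ can have extra internal breakpoints at non-integer positions, coming either from $V$--$W$ crossings or from a segment boundary of the ``other'' function, and the boundary rounding of Algorithm~\ref{algo:integerize} applied to $\min(V,W)$ might chop a subsegment of $[c,d]$ to the empty set, potentially opening a gap in $\mathcal{D}(G)$ that is absent from $\mathcal{D}(G')$. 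I would resolve this by a case analysis on the source of each internal breakpoint of $\min(V,W)$ inside $[c,d]$, tracking how steps~\ref{algo:integerize:1}--\ref{algo:integerize:3} together with the dominance rule of steps~\ref{algo:integerize:4a}--\ref{algo:integerize:9} behave: in each case one shows that either the breakpoint is inessential (both neighboring subsegments come from the dominant function) or the dominance rule reattaches the domain at the adjacent integer so that $[c,d]$ remains covered in $\mathcal{D}(G)$.

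I expect this case analysis to be the crux. The first two paragraphs give essentially the integer-lattice version of the claim directly from chasing the hypotheses through integer preservation; the non-integer containment is what requires genuine interaction with the detailed behavior of Algorithm~\ref{algo:integerize}. Once all three properties are in hand, the conclusion $G' \subset_{\mathbb{Z}} G$ is immediate from the definition.
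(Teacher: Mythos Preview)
Your overall approach---verify (P1), (P2), (P3) directly---matches the paper's, though you are considerably more careful. The paper's proof is a three-sentence sketch: it notes that $\mathcal{D}(\min(V',W'))\subset\mathcal{D}(\min(V,W))$ because $\mathcal{D}(V')\subset\mathcal{D}(V)$ and $\mathcal{D}(W')\subset\mathcal{D}(W)$, asserts that the values agree on the smaller domain, and then says (P1) holds ``because of integerizing the domains''. The reverse inclusion $\mathcal{D}(G')\subset\mathcal{D}(G)$ that you single out as the main obstacle is not discussed in the paper at all.

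You are right to worry about it, but your sketched case analysis does not close the gap in full generality. Take $V$ with a single segment on $[0,0.3]$ and constant value $0$, and $W(q)=q$ on $[0,2]$; set $V'=V_{\mathbb{Z}}$ (domain $\{0\}$) and $W'=W$. Then $\min(V,W)$ has segments $[0,0.3]$ (value $0$) and $[0.3,2]$ (value $q$); Algorithm~\ref{algo:integerize} rounds these to $\{0\}$ and $[1,2]$, so $\mathcal{D}(G)=\{0\}\cup[1,2]$. On the other hand $\min(V',W')$ is just $q\mapsto q$ on $[0,2]$, and integerizing leaves it untouched, so $\mathcal{D}(G')=[0,2]$. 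The breakpoint at $0.3$ is essential (the two sides come from different functions), and after rounding the resulting pieces $\{0\}$ and $[1,2]$ do not share a border, so the dominance step of the second loop never fires; neither of your two cases covers this, and $(0,1)\subset\mathcal{D}(G')\setminus\mathcal{D}(G)$. The lemma as literally stated therefore appears to need an extra hypothesis---for instance that $V$ and $W$ already have integer segment borders, or share a common domain---which is available in the paper's actual uses but is not part of the statement. Your (P2) clause ``cannot undercut the minimum'' is vulnerable to the same kind of example at non-integer points.
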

\begin{proof}
The value functions $V'$ and $V$ are identical on the corresponding restricted domain, and the same is true for $W'$ and $W$. 
The domain of $\min(V',W')$ is the union of the domains of $V'$ and $W'$ and the domain of $\min(V,W)$ is the union of the domains of $V$ and $W$; therefore the domain of $\min(V',W')$ is contained in the domain of $\min(V,W)$ and the values are identical for points in the domain of $\min(V',W')$. According to that, \ref{integerize:P2} and \ref{integerize:P3} is the consequence. The property \ref{integerize:P1} is true because of integerizing the domains.
\end{proof}
As a consequence, the following corollary can be formulated:
\begin{corollary} \label{envminintegerized2}
If $(V' \boxplus f) \subset_{\mathbb{Z}} (V \boxplus f)  $ and $(W' \boxplus f) \subset_{\mathbb{Z}} (W \boxplus f)$ then:
$$((V' \oplus W') \boxplus f)_{\mathbb{Z}} \subset_{\mathbb{Z}} ((V \oplus W) \boxplus f)$$
\end{corollary}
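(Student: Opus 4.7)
The plan is to reduce Corollary \ref{envminintegerized2} to Lemma \ref{envminintegerized} by exploiting the fact that the superposition operator $\boxplus$ distributes over the envelope operator $\oplus$ (i.e., the pointwise $\min$ on the joint domain). The starting point will therefore be the identity
\begin{equation*}
(V \oplus W) \boxplus f \;=\; (V \boxplus f) \oplus (W \boxplus f),
\end{equation*}
which I would first establish by unfolding the definition in \eqref{eq:superos}: for any $q$ in the joint domain,
\begin{equation*}
((V \oplus W) \boxplus f)(q) \;=\; \min_{y \in \mathcal{D}(f)} \bigl\{\min(V,W)(q-y) + f(y)\bigr\} \;=\; \min\bigl((V \boxplus f)(q),\,(W \boxplus f)(q)\bigr),
\end{equation*}
where the second equality just swaps the two independent minimizations. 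The same identity applies to $V',W'$, so the quantity we wish to integerize on the left-hand side of the conclusion equals $(V' \boxplus f) \oplus (W' \boxplus f)$, and the right-hand side equals $(V \boxplus f) \oplus (W \boxplus f)$.

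Next, I would invoke the hypothesis directly. By assumption $(V' \boxplus f) \subset_{\mathbb{Z}} (V \boxplus f)$ and $(W' \boxplus f) \subset_{\mathbb{Z}} (W \boxplus f)$, so Lemma \ref{envminintegerized} applied to the pair $((V' \boxplus f),(W' \boxplus f))$ and the pair $((V \boxplus f),(W \boxplus f))$ gives
\begin{equation*}
\bigl(\min(V' \boxplus f,\,W' \boxplus f)\bigr)_{\mathbb{Z}} \;\subset_{\mathbb{Z}}\; \bigl(\min(V \boxplus f,\,W \boxplus f)\bigr)_{\mathbb{Z}}.
\end{equation*}
Rewriting both sides using the distributive identity proved in the first step yields
\begin{equation*}
\bigl((V' \oplus W') \boxplus f\bigr)_{\mathbb{Z}} \;\subset_{\mathbb{Z}}\; \bigl((V \oplus W) \boxplus f\bigr)_{\mathbb{Z}}.
\end{equation*}

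Finally, since any integerization satisfies $G_{\mathbb{Z}} \subset_{\mathbb{Z}} G$ by construction of Algorithm \ref{algo:integerize} (properties \ref{integerize:P3}--\ref{integerize:P1}), we have $((V \oplus W) \boxplus f)_{\mathbb{Z}} \subset_{\mathbb{Z}} (V \oplus W) \boxplus f$. Chaining this with the previous display using the transitivity of $\subset_{\mathbb{Z}}$ delivers the desired conclusion.

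The only delicate point I anticipate is the first step: justifying that the two minimizations can be interchanged requires checking the effective domain of $V \oplus W$ (the union of the domains of $V$ and $W$) matches the domain over which $(V \boxplus f) \oplus (W \boxplus f)$ is evaluated, and ensuring that partial domains do not introduce spurious values; this is routine once one writes out the feasibility conditions $q - y \in \mathcal{D}(V) \cup \mathcal{D}(W)$ carefully. Everything else is a direct appeal to Lemma \ref{envminintegerized} and to the transitivity of $\subset_{\mathbb{Z}}$.
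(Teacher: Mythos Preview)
Your proposal is correct and follows essentially the same route as the paper's proof: both hinge on the distributive identity $(V \oplus W) \boxplus f = \min(V \boxplus f,\, W \boxplus f)$ and then invoke Lemma~\ref{envminintegerized}. One small remark: in the paper $\oplus$ is defined as \emph{concatenation of segments} rather than pointwise $\min$, but since the superposition then takes the envelope, your reading yields the same identity; you also make explicit the final transitivity step (via $G_{\mathbb Z}\subset_{\mathbb Z} G$) that the paper leaves implicit.
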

\begin{proof}
The symbol $\oplus$ denotes the concatenation of segments. Note that $(V \oplus W) \boxplus f = \min(V \boxplus f, W \boxplus f)$, therefore Lemma \ref{envminintegerized} can be used to prove the statement.
\end{proof}

The DP recurrence relation for integerized domains is defined as follows:
\begin{align}
W_{0} = (-f_{1}|_{[0,Q_{max}]})_{\mathbb{Z}} \\
W_{i} = (\overbrace{\min_{j < i} ( W_j - c_{j,i})_{\mathbb{Z}}}^{\tilde{W}_{i} := } \boxplus f_{i})_{\mathbb{Z}}
\end{align}
In the following, the equivalence of $W_{i}$ and $V_i$ for $f_i$ having domains with integer borders will be stated in the following Proposition:
\begin{proposition}
If each segment of $f_i$ has a domain with integer borders, then the following is true:
\begin{enumerate}[label=(A\arabic*)]
\item an optimal integer solution $(y_j)$ (for $j \leq i$) exists for $V_i(q)$ where $q$ is feasible and integer. \label{integerizeproof:P1}
\item $ W_i \subset_{\mathbb{Z}} V_i $ \label{integerizeproof:P2}
\end{enumerate}
\end{proposition}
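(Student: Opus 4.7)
The plan is to establish (A1) and (A2) simultaneously by induction on $i$. For the base case $i=1$, we have $V_1 = f_1|_{[0, Q_{max}]}$, which is piecewise linear with integer-bordered segments by assumption; hence $W_1 \subset_{\mathbb{Z}} V_1$ holds trivially, and for any integer feasible $q$ the assignment $y_1 = q$ is an integer optimal solution, giving (A1).

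For the inductive step, assume both properties for all $j < i$. The recurrence decomposes into the envelope $\tilde{V}_i = \min_{j<i}(V_j + c_{j,i})$ followed by the superposition with $f_i$. Shifting $V_j$ by the constant $c_{j,i}$ preserves its domain, so by the inductive hypothesis on (A2) we obtain $W_j + c_{j,i} \subset_{\mathbb{Z}} V_j + c_{j,i}$ for each $j$; iterating Lemma \ref{envminintegerized} over all predecessors yields $\tilde{W}_i \subset_{\mathbb{Z}} \tilde{V}_i$. It then remains to show that the superposition with $f_i$ followed by integerization preserves the $\subset_{\mathbb{Z}}$ relation, i.e., $(V' \boxplus f_i)_{\mathbb{Z}} \subset_{\mathbb{Z}} (V \boxplus f_i)$ whenever $V' \subset_{\mathbb{Z}} V$ and $f_i$ has integer-bordered segments. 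This substep also delivers (A1): for integer $q$, the minimum $\min\{V(q-y) + f_i(y)\}$ restricted to a fixed segment pair $(v_l^V, v_{l'}^f)$ is a linear function of $y$ on the interval $[\max(a_{l'-1}^f, q - a_l^V),\, \min(a_{l'}^f, q - a_{l-1}^V)]$, whose endpoints are integers since $q$ and all segment borders are. A linear function on an integer-bordered interval attains its minimum at an endpoint, so an integer $y^*$ exists; the corresponding $q - y^*$ is integer, and invoking the inductive hypothesis (A1) at the attaining predecessor $j^*$ recovers integer decisions $(y_1, \ldots, y_{i-1})$. Concatenating with $y^*$ proves (A1), and Corollary \ref{envminintegerized2} combined with this substep closes the proof of (A2).

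The main obstacle I expect is the bookkeeping within Algorithm \ref{algo:integerize}, which rounds segment borders and may delete segments based on dominance checks at integer points. I must verify that none of the integer values of $V \boxplus f_i$ are lost by these operations. The geometric observation that makes this work is that the corner points $P_1^{(l,l')}, \ldots, P_4^{(l,l')}$ of each parallelogram $\Gamma_{l,l'}$ from Lemma \ref{lem2} have integer $q$-coordinates whenever both $V$ and $f_i$ have integer-bordered segments; every integer $q$ in the domain therefore coincides either with a corner or with an integer point on an edge between two integer corners, and the dominance comparisons at integer borders performed by the algorithm preserve the exact lower-envelope values at all such $q$.
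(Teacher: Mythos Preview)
Your argument for (A1) contains a genuine gap. You write that the interval $[\max(a_{l'-1}^f, q - a_l^V),\,\min(a_{l'}^f, q - a_{l-1}^V)]$ has integer endpoints ``since $q$ and all segment borders are,'' and in your final paragraph you invoke the corner points of $\Gamma_{l,l'}$ under the assumption that ``both $V$ and $f_i$ have integer-bordered segments.'' But the hypothesis of the proposition only guarantees integer borders for the segments of $f_i$; the value functions $V_j$ and the envelope $\tilde V_i$ need not have integer-bordered segments. Indeed, taking the minimum $\min_{j<i}(V_j+c_{j,i})$ can create breakpoints at intersections of linear pieces, and these intersections are generically non-integer even when every input segment has integer borders. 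So the endpoint $q-a_l^V$ can be fractional, and you cannot conclude that the minimising $y$ over a segment pair is integer.

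The paper resolves this by \emph{not} working with the segments of $V_j$ directly. Instead, it observes that $V_j$ is the lower envelope of $\Gamma^{V_j}$, which in turn is a union of Minkowski sums of $j$-tuples of original segments drawn from $f_1,\dots,f_j$; each such tuple generates a piecewise linear function whose segments \emph{do} have integer borders (since every $f_k$ does). For integer $q$ and any feasible $y$, the value $V_j(q-y)$ is attained on one of these integer-bordered segments, and linearity on that segment forces either $\lfloor y\rfloor$ or $\lceil y\rceil$ to be at least as good as $y$. This recovers an integer $y$ without ever assuming $V_j$ itself has integer segment borders. Your proof of (A2) is also only sketched: the key substep ``$(V'\boxplus f_i)_{\mathbb Z}\subset_{\mathbb Z}(V\boxplus f_i)$ whenever $V'\subset_{\mathbb Z}V$'' is asserted but not proved; the paper establishes it by a separate induction on the number of segments of $\tilde V_i$, with a case split according to whether the newly appended segment has integer borders, and uses (A1) in the base case.
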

\begin{proof}
Note that \ref{integerizeproof:P1} means that an optimal integer solution for $V_n(0)$ can be found, which corresponds to an integer optimal solution. 

A proof by induction for \ref{integerizeproof:P1} will be given: for $i = 1$ there is nothing to prove. Suppose that the property holds for $j<i$. By definition: 

\begin{align}
V_{i}(q) = \min_{\substack{ j < i\\ l \leq l_{max}^{V_j} \\ l' \leq {l}^{f_i}_{max}}} \left( \min_{\substack{ q-y \in [a_{l-1}^{V_j},a_l^{V_j}] \\ y \in [a_{l-1}^{f_i},a_l^{f_i}]}}\{ v_l^{V_j}(q-y) + c_{j,i+1} - v_{l'}^{f_{i}}(y) \} \right) \label{fxint:rec}
\end{align}
Therefore for all feasible integers $q$ there exists a $j < i$ and a $y$ such that $V_{i}(q,y) = v_l^{V_j}(q-y) + c_{j,i} - v_{l'}^{f_{i}}(y)$ for $q-y \in [a_{l-1}^{V_j},a_l^{V_j}]$ and $y \in [a_{l'-1}^{f_i},a_{l'}^{f_i}]$. If $y$ is fractional then $\lfloor y \rfloor, \lceil y \rceil \in [a_{l'-1}^{f_i},a_{l'}^{f_i}]$. 

Suppose that $\Gamma^{V_j}$ is the set of feasible states in stage $j$, then $v_l^{V_j}$ is part of the envelope. The set of $\Gamma^{V_j}$ is the union of Minkowski sums of $j$ segments ($j$-tuple) and the envelope of $\Gamma^{V_j}$ is the envelope of these $j$-tuples. The envelope of each $j$-tuple can be represented by the envelope of a set of $j 2^{j-1}$ segments with integer borders (cf. edges of a hypercube) and the union of these segments (when considering all possible tuples) can be used to represent the envelope $\text{env}(\Gamma^{V_j})$.
According to that, we can find a feasible segment $v^{\Gamma^{V_j}}$ with integer borders such that $v^{\Gamma^{V_j}}(q-y) = v_l^{V_j}(q-y)$. Obviously, either $v^{\Gamma^{V_j}}(q-\lceil y \rceil) \leq v_l^{V_j}(q-y)$ or $v^{\Gamma^{V_j}}(q-\lfloor y \rfloor) \leq v_l^{V_j}(q-y)$ and therefore $y$ can be replaced by $\lfloor y \rfloor$ or $\lceil y \rceil$. \\

For proving \ref{integerizeproof:P2} it is sufficient to show that:
\begin{eqnarray}
&\tilde{W}_{i} \subset_{\mathbb{Z}} \tilde{V}_{i} \text{ and }  \label{integerizeproof:P2A}\\
& (\tilde{W}_{i} \boxplus f_{i})  \subset_{\mathbb{Z}} (\tilde{V}_{i} \boxplus f_{i})  \label{integerizeproof:P2B}
\end{eqnarray}
\indent Proof of (\ref{integerizeproof:P2A}) by induction. For $i=1$ there is nothing to prove. Suppose that the above properties hold for $j<i$ and $W_i \subset_{\mathbb{Z}} V_j$. Therefore Lemma \ref{envminintegerized} can be used to show that $\tilde{W}_{i} \subset_{\mathbb{Z}} \tilde{V}_{i}$ because adding a constant does not change the domain of the segments. 

Proof of (\ref{integerizeproof:P2B}) by induction on the number of segments in $\tilde{V}_{i}$. For a single segment it follows that the borders are integer and therefore the superposition with $\tilde{W}_{i}$ is defined by the the border of feasible states for the superposition with $\tilde{V}_{i}$. According to that, each segment in $\tilde{W}_{i} \boxplus f_i$ can be identified by a segment in $\tilde{W}_{i} \boxplus f_i$. This, in combination with \ref{integerizeproof:P1} gives the proof for the first segment.
Suppose the assumption is true for $\tilde{V}_{i}$ with $m$ segments, then it will be shown that the assumption holds for $\tilde{V}_{i} \oplus v_{m+1}^{\tilde{V}}$. Suppose that $\tilde{W}_{i} \subset_{\mathbb{Z}} (\tilde{V}_{i} \oplus v_{m+1}^{\tilde{V}})$, then $\tilde{W}_{i}$ consists of segments that correspond to $\tilde{V}_{i}$ or $v_{m+1}^{\tilde{V}}$; they are denoted $\tilde{W}_{i}^{1 \ldots m}$ and $\tilde{W}_{i}^{m+1}$ respectively.

If $v_{m+1}^{\tilde{V}}$ has integer borders, then $\tilde{W}_{i}^{1 \ldots m} \subset_{\mathbb{Z}}  \tilde{V}_{i} $ and $\tilde{W}_{i}^{m+1} \subset_{\mathbb{Z}} v_{m+1}^{\tilde{V}}$, and it follows by induction that $(\tilde{W}_{i}^{1 \ldots m} \boxplus f_i)  \subset_{\mathbb{Z}}  (\tilde{V}_{i} \boxplus f_i) $ and  $(\tilde{W}_{i}^{m+1} \boxplus f_i)  \subset_{\mathbb{Z}}   (v_{m+1}^{\tilde{V}} \boxplus f_i) $, and therefore
 $(\tilde{W}_{i} \boxplus f_i)  \subset_{\mathbb{Z}}  ((\tilde{V}_{i} \oplus v_{m+1}^{\tilde{V}}) \boxplus f_i) $. 

Alternatively, if $v_{m+1}^{\tilde{V}}$ has a left border $a_{m}^{\tilde{V}}$ that is not integer, then $ \tilde{W}_{i}^{m+1}$ consists of segment that correspond to segments where the domain is bounded to left by $\lceil a_{m}^{\tilde{V}} \rceil$ and $\tilde{W}_{i} = \tilde{W}_{i}^{1 \ldots m}$  correspond to segments where the domain is bounded to the right by $\lfloor a_{m}^{\tilde{V}} \rfloor$. Note that $(\lfloor a_{m}^{\tilde{V}} \rfloor,\lceil a_{m}^{\tilde{V}} \rceil)$ does not contain points of $\tilde{W}_i$, and according to Corollary \ref{envminintegerized2} the proof of \ref{integerizeproof:P2} is complete.

\end{proof}

Figure \ref{fig:segments300} and Figure \ref{fig:segments60} present a comparison of the number of segments in the final value function (last stage) for different $Q_{max}$. According to the example the reduction in the number of segments can be considerably large, especially regarding the outliers.

\begin{figure}[p]
\begin{center}
\includegraphics[scale=0.5]{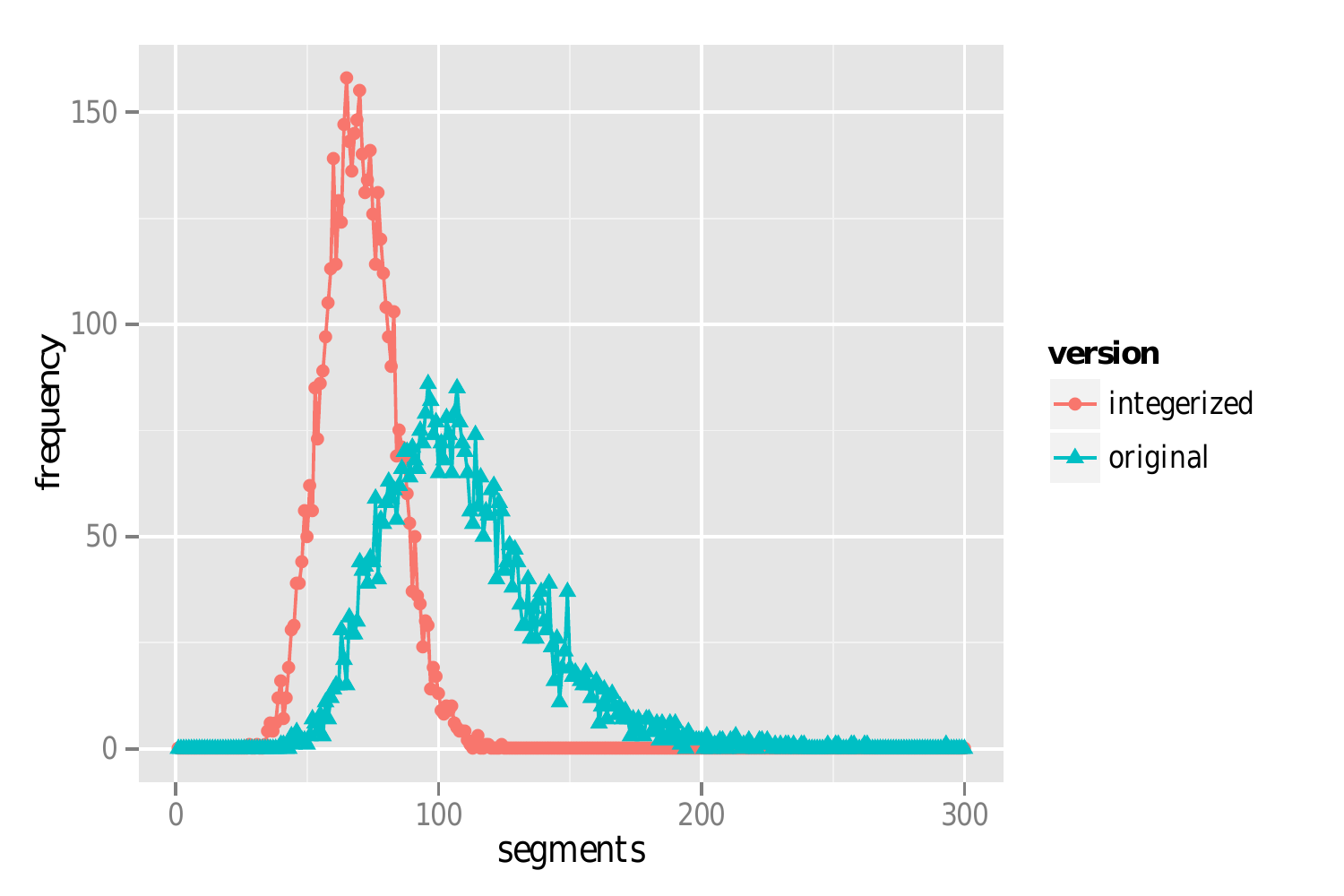}
\end{center}
\caption{frequency for the total number of segments in the final value function for 30 customers and 5000 randomly selected a priori routes and $Q_{max} = 300$}
\label{fig:segments300}
\end{figure}

\begin{figure}[p]
\begin{center}
\includegraphics[scale=0.5]{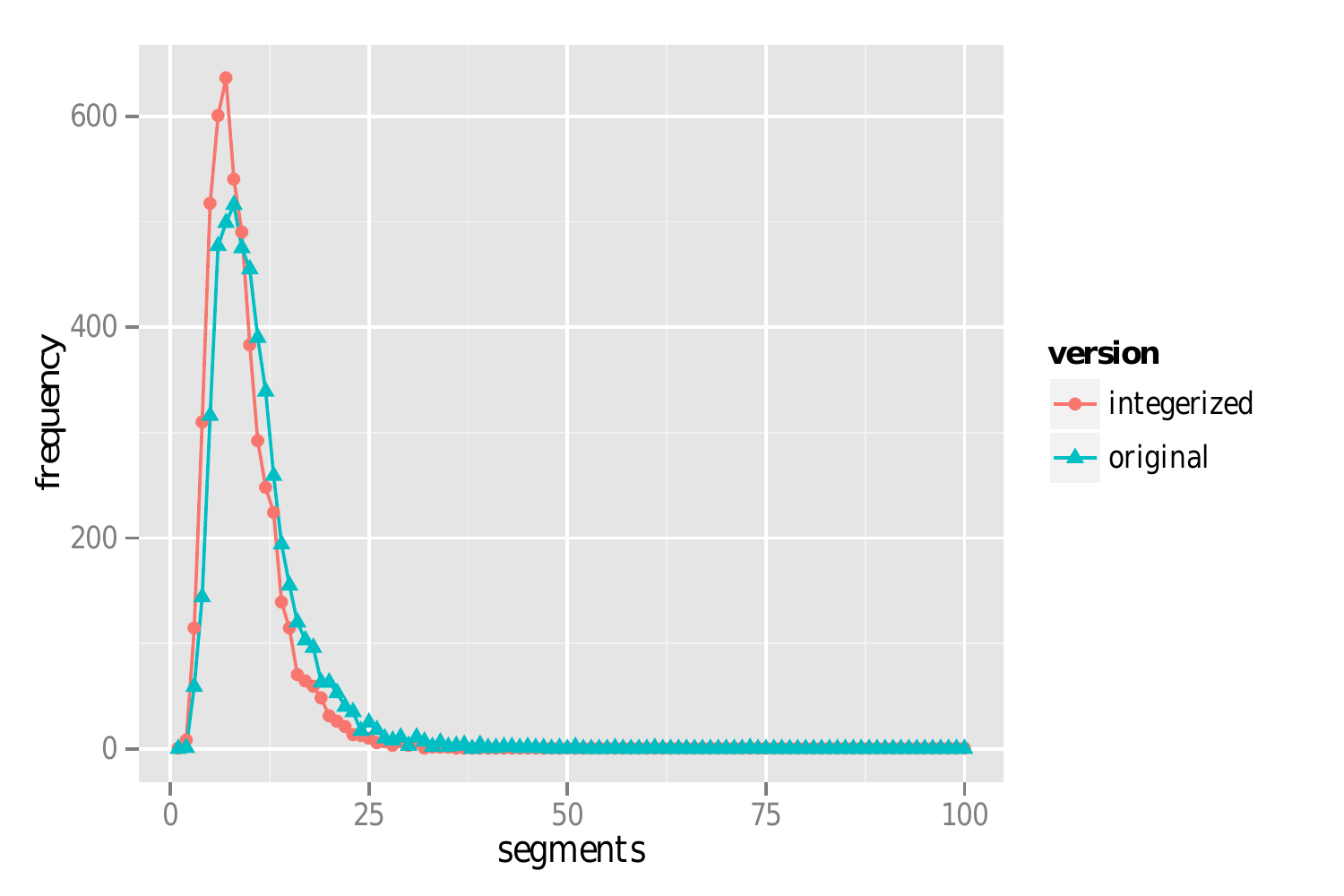}
\end{center}
\caption{Frequency for the total number of segments in the final value function for 30 customers, 5000 randomly selected a priori routes and $Q_{max} = 60$}
\label{fig:segments60}
\end{figure}

\end{appendices}
\end{document}